\newacronym{SA}{SA}{Simulated Annealing}
\newacronym{VLC}{VLC}{visible light communications}
\newacronym{RF}{RF}{radio frequency}
\newacronym{V2V}{V2V}{vehicle-to-vehicle}
\newacronym{V2X}{V2X}{vehicle-to-everything}
\newacronym{B5G}{B5G}{beyond-fifth generation}
\newacronym{5G}{5G}{fifth-generation}
\newacronym{6G}{6G}{sixth-generation}
\newacronym{LED}{LED}{light emitting diode}
\newacronym{LEDs}{LEDs}{light emitting diodes}
\newacronym{OMA}{OMA}{orthogonal multiple access}
\newacronym{FDMA}{FDMA}{frequency-division multiple-access}
\newacronym{TDMA}{TDMA}{time-division multiple-access}
\newacronym{CDMA}{CDMA}{code-division multiple-access}
\newacronym{OFDMA}{OFDMA}{orthogonal frequency-division multiple-access}
\newacronym{WDMA}{WDMA}{wavelength-division multiple-access}
\newacronym{NOMA}{NOMA}{non-orthogonal multiple access}
\newacronym{PD-NOMA}{PD-NOMA}{power-domain NOMA}
\newacronym{SC}{SC}{superposition coding}
\newacronym{SIC}{SIC}{successive interference cancellation}
\newacronym{BS}{BS}{base station}
\newacronym{QoS}{QoS}{quality-of-service}
\newacronym{NP}{NP}{non-deterministic polynomial-time}
\newacronym{DCO-OFDM}{DCO-OFDM}{direct-current biased optical-OFDM}
\newacronym{ITU}{ITU}{international telecommunication union}
\newacronym{FoV}{FoV}{field-of-view}
\newacronym{FoVs}{FoVs}{field-of-views}
\newacronym{CSI}{CSI}{channel state information}
\newacronym{LACO-OFDM}{LACO-OFDM}{layered asymmetrically clipped optical OFDM}
\newacronym{FR}{FR}{frequency reuse}
\newacronym{EAs}{EAs}{evolutionary algorithms}
\newacronym{C-LiAN}{C-LiAN}{centralized light access network}
\newacronym{AP}{AP}{access point}
\newacronym{APs}{APs}{access points}
\newacronym{PD}{PD}{photo-diode}
 \newacronym{PDs}{PDs}{photo-diodes}
\newacronym{SINR}{SINR}{signal-to-noise-interference ratio}
\newacronym{LoS}{LoS}{line-of-sight}
\newacronym{AWGN}{AWGN}{additive white Gaussian noise}
\newacronym{SNR}{SNR}{signal-to-noise ratio}
\newacronym{NLUPA}{NLUPA}{next-largest-difference user-pairing algorithm}
\newacronym{D-NLUPA}{D-NLUPA}{divide-and-next-largest-difference user-pairing algorithm}
\newacronym{NAICS}{NAICS}{network-assisted interference cancellation and suppression}
\newacronym{LTE}{LTE}{long term evolution}
\newacronym{3GPP}{3GPP}{3rd Generation Partnership Project}
\newacronym{CR}{CR}{cognitive radio}
\newacronym{2D}{2D}{two-dimension}
\newacronym{GP}{GP}{gradient projection}
\newacronym{umMTC}{umMTC}{ultra-massive machine-type communication}
\newacronym{IoE}{IoE}{internet-of-everything}
\newacronym{ZF}{ZF}{zero-forcing}
\newacronym{NLIP}{NLIP}{non-linear integer programming}
\newacronym{DP}{DP}{dynamic programming}
\newacronym{THz}{THz}{Terahertz}
\newacronym{D2D}{D2D}{device-to-device}
\newacronym{mmWave}{mmWave}{Millimeter-wave}
\newacronym{PS}{PS}{phase shifter}
\newacronym{PA}{PA}{power amplifier}
\newacronym{ABF}{ABF}{analog beamforming}
\newacronym{SPS}{SPS}{single-phase shifter}
\newacronym{AWV}{AWV}{antenna weight vector}
\newacronym{EE}{EE}{energy efficiency}
\newacronym{CEE}{CEE}{computation energy efficiency}
\newacronym{SE}{SE}{spectral efficiency}
\newacronym{NLoS}{NLoS}{non-line-of-sight}
\newacronym{mMIMO}{mMIMO}{massive multiple-input multiple-output}
\newacronym{MISO}{MISO}{multiple-input single-output}
\newacronym{SIMO}{SIMO}{single-input multiple-output}
\newacronym{SISO}{SISO}{single-input single-output}
\newacronym{MOO}{MOO}{multi-objective optimization}
\newacronym{HP}{HP}{hybrid beamforming}
\newacronym{SCA}{SCA}{successive convex approximation }
\newacronym{SOC}{SOC}{second order cone}
\newacronym{RHS}{RHS}{right-hand side}
\newacronym{SDR}{SDR}{semidefinite relaxation}
\newacronym{VR}{VR}{virtual reality}
\newacronym{AR}{AR}{augmented reality}
\newacronym{MAR}{MAR}{mobile augmented reality}
\newacronym{NR}{NR}{new radio}
\newacronym{FD}{FD}{full-duplex}
\newacronym{HD}{HD}{half-duplex}
\newacronym{QoE}{QoE}{quality-of-experience}
\newacronym{ULA}{ULA}{uniform linear array}
\newacronym{SI}{SI}{self-interference}
\newacronym{DF}{DF}{decode-and-forward}
\newacronym{GEE}{GEE}{global energy efficiency}
\newacronym{CNOMA}{CNOMA}{cooperative NOMA}
\newacronym{MEC}{MEC}{mobile edge computing}
\newacronym{IoT}{IoT}{internet-of-things}
\newacronym{CPU}{CPU}{central processing unit}
\newacronym{KKT}{KKT}{Karush–Kuhn–Tucker}
\newacronym{ML}{ML}{machine learning}
\newacronym{CM}{CM}{constant modulus}
\newenvironment{bottomborder}%
{}  
{\needspace{\baselineskip}\hrule} 
\newtheorem{proposition}{Proposition}
\begin{document}

\title{Energy-Efficient Optimization of Multi-User NOMA-Assisted Cooperative THz-SIMO MEC Systems}

\author{Omar~Maraqa, Saad Al-Ahmadi, Aditya Rajasekaran, Hamza Sokun, Halim Yanikomeroglu, \IEEEmembership{Fellow,~IEEE}, Sadiq M. Sait, \IEEEmembership{Senior~Member,~IEEE}

\thanks{O.~Maraqa and S. Al-Ahmadi are with the Department of Electrical Engineering, S.~M.~Sait is with the Department of Computer Engineering, King Fahd University of Petroleum \& Minerals, Dhahran-31261, Saudi Arabia (e-mails: \{g201307310; saadbd; sadiq\}@kfupm.edu.sa). A.~S.~Rajasekaran and H.~U.~Sokun is with Ericsson Canada Inc, Ottawa, ON K2K 2V6, Canada (emails: \{aditya.sriram.rajasekaran; hamza.sokun\}@ericsson.com). H. Yanikomeroglu is with the Department of Systems and Computer Engineering, Carleton University, Ottawa, ON K1S 5B6, Canada (email: halim@sce.carleton.ca). This work was supported by the interdisciplinary research center for communication systems and sensing (IRC-CSS), King Fahd University of Petroleum and Minerals, under Grant number INCS2107. Part of this work were presented at IEEE PIMRC 2021~\cite{maraqa2021energy}.}%
\thanks{$\copyright$ 2023 IEEE. Personal use of this material is permitted. Permission from IEEE must be obtained for all other uses, in any current or future media, including reprinting/republishing this material for advertising or promotional purposes, creating new collective works, for resale or redistribution to servers or lists, or reuse of any copyrighted component of this work in other works.}%

}

\markboth{Accepted in IEEE TRANSACTIONS ON COMMUNICATIONS, 2023}
{Maraqa \MakeLowercase{\textit{et al.}}: Energy-Efficient Optimization of Multi-User NOMA-Assisted Cooperative THz-SIMO MEC Systems}

\maketitle

\begin{abstract}
The various requirements in terms of data rates and latency in beyond 5G and 6G networks have motivated the integration of a variety of communications schemes and technologies to meet these requirements in such networks. Among these schemes are Terahertz (THz) communications, cooperative non-orthogonal multiple-access (NOMA)-enabled schemes, and mobile edge computing (MEC). THz communications offer abundant bandwidth for high-data-rate short-distance applications and NOMA-enabled schemes are promising schemes to realize the target spectral efficiencies and low latency requirements in future networks, while MEC would allow distributed processing and data offloading for the emerging applications in these networks. In this paper, an energy-efficient scheme of multi-user NOMA-assisted cooperative THz single-input multiple-output (SIMO) MEC systems is proposed to allow the uplink transmission of offloaded data from the far cell-edge users to the more computing resources in the base station (BS) through the cell-center users. To reinforce the performance of the proposed scheme, two optimization problems are formulated and solved, namely, the first problem minimizes the total users' energy consumption while the second problem maximizes the total users' computation energy efficiency (CEE) for the proposed scheme. In both problems, the NOMA user pairing, the BS receive beamforming, the transmission time allocation, and the NOMA transmission power allocation coefficients are optimized, while taking into account the full-offloading requirements of each user as well as the predefined latency constraint of the system. The obtained results reveal new insights into the performance and design of multi-user NOMA-assisted cooperative THz-SIMO MEC systems. Particularly, with relatively high offloading rate demands (several Gbits/user), we show that (i) the proposed scheme can handle such demands while satisfying the predefined latency constraint, and (ii) the full-offloading model can be considered the most effective solution in conserving mobile devices' resources as compared to the system with the partial-offloading model or the system without offloading. 
\end{abstract}

\begin{IEEEkeywords}
Mobile edge computing (MEC), terahertz (THz) communication, non-orthogonal multiple access (NOMA), full-offloading model, user cooperation. 
\end{IEEEkeywords}

\section{Introduction}
\label{Sec:Introduction}
\IEEEPARstart{T}{}he emerging high-data-rate and ultra-low latency applications in \ac{B5G} and \ac{6G} networks such as \ac{MAR} applications~\cite{9363323} require larger corresponding bandwidths with more strict latency constraints than the ones offered by \ac{5G} networks~\cite{7980118}. This has spurred the interest in the \ac{THz} bands to realize the target data rates. Moreover, the large amount of data to be processed in such networks has motivated \ac{MEC}, also known as Multi-Access Edge Computing, for remote computation of intensive-tasks limited-battery mobile devices. In \ac{MEC} networks, the \ac{BS} is equipped with \ac{MEC} server(s) that can receive and execute offloaded tasks from the network mobile devices. Later, the computation results are sent to these mobile devices after execution~\cite{9622148}. Power-domain \ac{NOMA} scheme, on the other hand, allow multiple users to share the same resource (e.g., a time/frequency resource block) and separate the users in the power domain with some additional receiver complexity~\cite{9154358}. Hence, combining these novel network technologies, (i.e., THz, NOMA, and MEC) is of interest to realize the unprecedented demands for the users in future wireless networks while efficiently utilizing the costly wireless spectrum resources in an energy-efficient manner.

There are three computing offloading models in \ac{MEC} networks, namely, binary, partial, and full offloading~\cite{7879258}. In the binary-offloading model, the user tasks can be either locally computed at the mobile device or remotely computed at the \ac{MEC} server and cannot be partitioned. In the partial-offloading model, the user tasks are divided into two parts: the local computing part and the offloading part. While in the full-offloading model, the whole user tasks are offloaded and remotely accomplished by the \ac{MEC} server(s)~\cite{9154358}. In the era of \ac{B5G} networks, a wide range of mobile-initiated applications that require complex computation such as \ac{ML} and signal processing algorithms are on the rise. These applications impose a heavy computation burden on the limited-battery user devices. With the full-offloading model, a user device can be responsible only for collecting the input data and displaying the computation results. Hence, in this paper, the full-offloading model is adopted. In addition, the obtained results in Section~\ref{Section: Results and discussions} quantitatively support that adopting the full-offloading model as the most effective solution in conserving mobile devices' resources, especially for the systems that require high task input-bits to be processed.

The amount of work in user-assisted cooperative \ac{MEC} \ac{NOMA} networks is still limited when it comes to the adopted system model~\cite{8951269,8823868,9417585,9613252,9348649,9325063,9566305,9417425}. Specifically, all these works have analyzed a cooperative uplink \ac{SISO} system and they are different at where the \ac{NOMA} scheme is adopted (i.e., at the far-user side~\cite{8951269,8823868,9417585,9613252}, or at the near-user side~\cite{9348649}, or at both far-user and near-user sides~\cite{9325063,9566305}, or at both \ac{BS} and near-user sides~\cite{9417425}). In particular, in~\cite{9348649}, a three-node system that comprises of a far-user, a near-user, and a \ac{BS} has been analyzed to minimize the users' sum-energy consumption. The authors solved an optimization problem that involves a joint transmission time assignment and power allocation for users. Through numerical simulations, the authors demonstrated the superiority of the proposed scheme compared to (i) the \ac{THz}-\ac{OMA} counterpart, (ii) the direct transmission scheme, and (iii) an equivalent system with random time and power allocation. In this work, motivated by~\cite{9348649}, we adopt the \ac{NOMA} scheme at the cell-center users, but on the contrary to~\cite{8951269,8823868,9417585,9613252,9348649,9417425,9325063,9566305}, (i) the work here focuses on the cooperative uplink \ac{SIMO} system model, (ii) the adopted band in the previously mentioned works was sub-$6$ GHz band compared to our work here that embraces the \ac{THz} band, and (iii) our work is based on a multi-user system model, in contrast, all the works in~\cite{8951269,9348649,9417425,8823868,9613252,9325063} have adopted three-node systems, except for~\cite{9417585,9566305} that adopted a device-assisted multi-helper \ac{MEC} system which differs from our proposed multi-user model. Recently, a multi-antenna \ac{NOMA}-assisted wireless powered \ac{MEC} with user cooperation has been proposed in~\cite{9345931} but still for a two-user system model and sub-$6$ GHz operation band. 

In this paper, an energy-efficient scheme of \ac{NOMA}-assisted cooperative \ac{THz}-\ac{SIMO} \ac{MEC} systems is proposed to allow the uplink transmission of offloaded data from the far cell-edge users to the \ac{BS}. The proposed cooperative scheme comprises four stages: (i) a user pairing stage that forms \ac{NOMA} user pairs, where a cell-edge user is paired with a cell-center user, (ii) a receive beamforming stage in the \ac{BS} to schedule each \ac{NOMA} user-pair in the system, (iii) a time allocation stage to schedule and send the offloading task input-bits from the formulated \ac{NOMA} user pairs to the \ac{BS} within a predefined latency constraint, and (iv) a \ac{NOMA} transmission power allocation stage. The performance of the proposed scheme is reinforced by solving two energy-related optimization problems. Thus, our contributions can  be summarized as follows:
\begin{itemize}
    \item Proposing an energy-efficient \ac{NOMA}-assisted cooperative \ac{THz}-\ac{SIMO} \ac{MEC} scheme for allowing the far cell-edge users to transmit their offloading data to the \ac{BS} through cooperating cell-center users.
    \item Formulating an energy consumption minimization problem for all users that consists of a joint design of user pairing, beamforming, time, and power allocation. Such a problem is non-convex and intractable to solve jointly. Hence, a low complexity solution is adopted to solve this problem, where (i) the \ac{NOMA} user pairing is solved by the one-to-one matching Hungarian algorithm, (ii) the \ac{BS} receive beamforming is optimized using analog beamforming with the aid of the cosine similarity metric, and (iii) the time and power allocations are jointly optimized and tackled using the Lagrange duality method.
    \item Formulating a \ac{CEE} maximization problem for all users that consists of a joint design of user pairing, time allocation, beam gain and power allocation, and beamforming. Similar to the previous optimization problem, this problem is non-convex and a low-complexity solution is also adopted to solve this problem. Toward this solution, (i) the \ac{NOMA} user pairing is solved using the Hungarian algorithm, (ii) the transmission time for the data offloading in the links between the cell-edge user and the cell-center user and the link between the cell-center user and the BS is divided equally, and (iii) the power allocation as well as the \ac{BS} receive beamforming vectors are optimized using some mathematical relaxation procedures with the aid of the Dinkelbach algorithm.
    \item Illustrating through simulations that the proposed scheme can handle relatively high offloading rate demands (several Gbits/user) within a predefined latency constraint and outperforms several baseline schemes. Specifically, our proposed scheme, (i) consumes much less total users' energy compared to the partial-offloading model and without offloading model, (ii) handles several Gbits of offloading data for each user compared to tens of Mbits of offloading data in the mmWave counterpart system, (iii) consumes slightly less total users' energy compared to the THz-OMA counterpart with the advantage of being more resource-efficient, (iv) provides a higher total users' \ac{CEE} compared to the systems with partial offloading and without offloading, and (v) achieves a significant increase in the total users' \ac{CEE} compared to its \ac{THz}-\ac{OMA} counterpart.
\end{itemize}

The rest of this paper is organized as follows. In Section~\ref{Section: System Model and Problem Formulation}, the \ac{THz} channel model and the offloading model of the proposed \ac{NOMA}-assisted cooperative \ac{THz}-\ac{SIMO} \ac{MEC} scheme are presented. In Section~\ref{Sec: The Proposed NOMA-assisted Cooperative THz-SIMO MEC Scheme}, the formulation and solution of the total users' energy consumption minimization problem are presented. In Section~\ref{Section: Total Users' Computation Energy Efficiency Maximization}, the formulation and solution of the total users' \ac{CEE} maximization problem are presented. Detailed Simulation results are provided in Section~\ref{Section: Results and discussions}, which is followed by the paper conclusions in Section~\ref{Section: Conclusions}. 
\section{System and Channel Models}
\label{Section: System Model and Problem Formulation}
An uplink communication scheme where a \ac{BS}, that is equipped with $N$ antennas, receives and executes offloading data from a set of $K$ single-antenna \ac{NOMA} user pairs is depicted in Fig.~\ref{fig: THz NOMA system model}. The users are classified into $K_\textnormal{Coop}$ cell-center users and $K_\textnormal{Edge}$ cell-edge users. The links between the \ac{BS} and the cell-edge users are assumed to be very weak due to the high attenuation in \ac{THz} bands, which is the case even for the mmWave band~\cite{9814839}. First, each cell-edge user is paired with a cooperating cell-center user using the Hungarian algorithm, as discussed in Section~\ref{Sec: The Proposed NOMA-assisted Cooperative THz-SIMO MEC Scheme}, to form a \ac{NOMA} user-pair (e.g., $j$-th cell-edge user and $i$-th cooperating cell-center user). In each \ac{NOMA} user-pair, first, the cell-edge user transmits its offloading data to the cell-center user on an orthogonal channel transmission. Then, the cell-center user performs \ac{HD} \ac{DF} \ac{NOMA} cooperation to relay the cell-edge user's offloading data alongside his own offloading data to the \ac{BS} for processing. From the \ac{BS} side, each \ac{NOMA} user-pair is served by an orthogonal resource block. Specifically, each \ac{NOMA} user-pair is served by one receive precoding vector on an orthogonal resource block as \ac{ABF} is adopted~\cite{maraqa2021energy}. The remaining \ac{NOMA} user pairs can be analyzed similarly.

\begin{figure}[t!]
\centering
\includegraphics[width=0.485\textwidth]{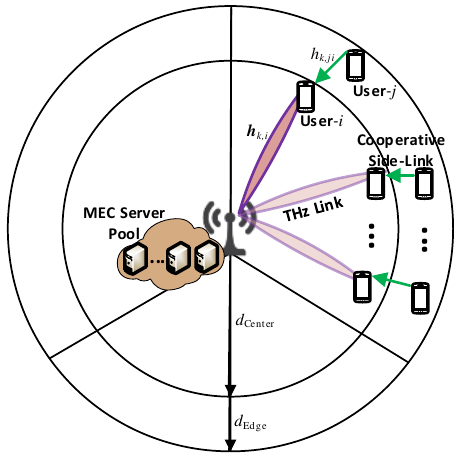}
\caption{The proposed model for multi-user \ac{NOMA}-assisted cooperative \ac{THz}-\ac{SIMO} \ac{MEC} systems.}
\label{fig: THz NOMA system model}
\end{figure}

\begin{figure*}[t!]
\centering
\includegraphics[width=0.9\textwidth]{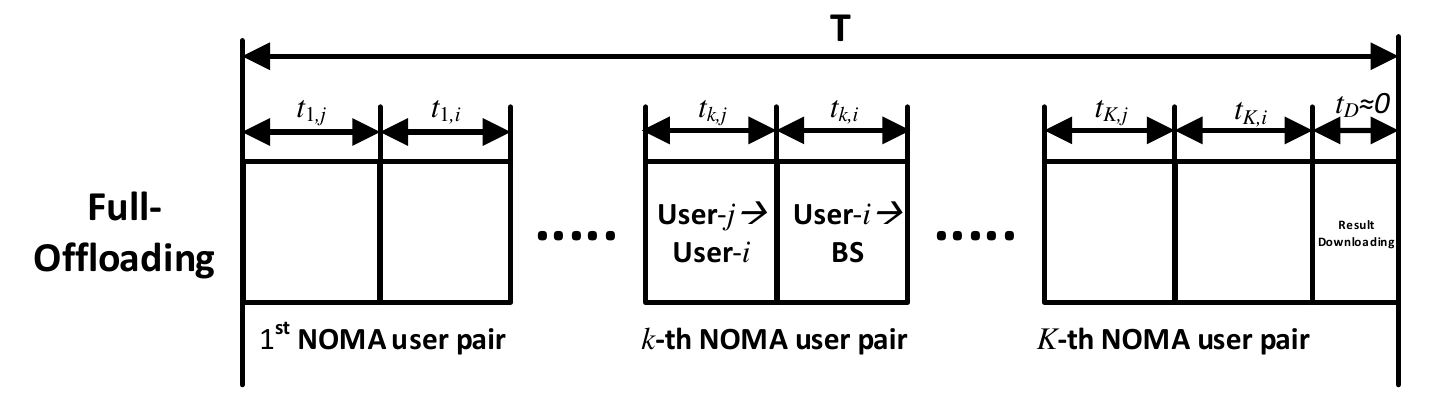}
\caption{Full-offloading model for the proposed multi-user \ac{NOMA}-assisted cooperative \ac{THz}-\ac{SIMO} \ac{MEC} system.}
\label{fig: Full offloading model}
\end{figure*}

As explained in the introduction section, we adopt the full-offloading model to meet the high demands in next-generation networks. For the operation of this model, it is assumed that there is a central controller in the \ac{BS} that collects computation-related information, required \ac{CSI}, and distributes the computation-related policies for all users~\cite{8488502}. At this point, we focus the analysis on the $k$-th \ac{NOMA} user-pair that contains the $j$-th cell-edge user and $i$-th cooperating cell-center user. In this \ac{NOMA} user-pair, and similarly on all other \ac{NOMA} user pairs, we assume that the $j$-th user has a data size of $L_{k,j}>0$ (in bits) computation tasks to execute. Likewise, the $i$-th user has a data size of $L_{k,i}>0$ (in bits) computation tasks to execute. In this \ac{NOMA} user-pair, we denote the time needed for the data to be sent from the $j$-th user to the $i$-th user as $t_{k,j}$, and the time needed for the data to be sent from the $i$-th user to the \ac{BS} as $t_{k,i}$. At the \ac{BS}, the time consumed for the task computation and result downloading can be ignored because of the powerful computation at the \ac{MEC} server pool and the small data size of the computation result~\cite{8951269, 9348649}. Considering latency-critical applications, all offloaded data need to be executed within one time block with duration $T>0$, (i.e., $\sum_{k=1}^{K} (t_{k,j}+t_{k,i}) \leq T$). In Fig.~\ref{fig: Full offloading model}, an illustration of the full-offloading model is provided to show how the block duration $T$ is divided between all the \ac{NOMA} user pairs. Moreover, we assume that the channel coherence time is larger than the block duration $T$, so that the channel gain remains constant during $T$~\cite{8488502}. Next, we discuss the \ac{THz} channel model and the details of offloading model for the considered multi-user \ac{NOMA}-assisted cooperative \ac{THz}-\ac{SIMO} \ac{MEC} system.
\subsection{THz Channel Model}
In the \ac{THz} band, transmitted signals suffer from a high path loss due to the existence of the spreading and molecular absorption losses. This makes \ac{THz} communication links sensitive to the \ac{LoS} obstacles blockage such as walls. For example, it was reported in~\cite{6574880} that when the \ac{LoS} exists between a transmitter and a receiver, the first-order reflections are attenuated on average by more than $10$ dB, and the second-order reflections are attenuated on average by more than $20$ dB. Therefore, in this paper, similar to~\cite{9115278}, we assume the existence of the \ac{LoS} path only. Subsequently, the channel gain receive vector between the $i$-th user in the $k$-th \ac{NOMA} user-pair and the $N$-antenna array at the \ac{BS} can be expressed as 
\begin{equation} \label{eq: THz channel}
\begin{gathered}
\mathbf{h}_{k,i}=  \sqrt{N} ( \sqrt{\frac{G_{t,i} G_{r,\textnormal{BS}}}{\mathcal{PL}(f,d_i)}} \boldsymbol{a} ( \theta_{k,i}) ), \\ \mathcal{PL}(f,d_i) = \mathcal{L}_{\textnormal{spread}}(f,d_i) \mathcal{L}_{\textnormal{abs}}(f,d_i)=  \big( \frac{4 \pi f d_i}{c} \big)^2 e^{k_{\textnormal{abs}}(f)d_i},
\end{gathered}
\end{equation}
\noindent where $\mathcal{PL}(f,d_i)$ denotes the path loss incurred by the \ac{THz} signal that is transmitted from the $i$-th user to the \ac{BS} on frequency $f$ with $d_i$ meters distance separation between them. $\boldsymbol{a} (\theta_{k,i})$ denotes the array steering vector towards the $i$-th user of the $k$-th user-pair. $G_{t,i}$ and $G_{r,\textnormal{BS}}$ respectively denote the $i$-th user antenna transmission gain and the \ac{BS} antenna reception gain. In this paper, we assume that all users have the same antenna transmission and reception gains, and the users and the \ac{BS} are equipped with directive antennas to countermeasure the channel attenuation in the \ac{THz} links~\cite{8610080}. $c$ denotes the speed of light. The path loss of \ac{THz} signals consists of two multiplicative factors, namely, (i) the spreading loss $\mathcal{L}_{\textnormal{spread}}$ and (ii) the molecular absorption loss $\mathcal{L}_{\textnormal{abs}}$. The term $k_{\textnormal{abs}}(f)$ denotes the frequency-dependent absorption coefficient for different isotopologues of water vapor molecules. In a typical medium, water vapor molecules contribute to most of the total absorption loss as compared to the little contributions of the other air molecules~\cite{9115278}. The array steering vector for \ac{ULA} can be expressed as~\cite{9115278}
\begin{equation} \label{eq: steering vector}
\boldsymbol{a} (\theta_{k,i}) =\frac{1}{\sqrt{N}} [1, ..,  e^{j \pi [n \sin(\theta_{k,i})]}, .., e^{j \pi [(N-1) \sin(\theta_{k,i})]} ]^{T}, 
\end{equation}
\noindent where $\theta_{k,i}$ denotes the physical angle-of-arrival of the \ac{THz} beam. The channel model of cooperative side-links, between cell-edge users and cell-center users, also follows the one in~\eqref{eq: THz channel}; since the in-band user-assisted cooperative communications paradigm is adopted~\cite{liu2015band}. Hence, the cooperative side-link channel gain between the $j$-th user and the $i$-th user can be expressed as~\cite{balanis2011modern}
\begin{equation} \label{eq: channel gain between user-i and user-j}
h_{k,ji}=\sqrt{\frac{G_{t,j} G_{r,i}}{ \big( \frac{4 \pi f d_{ji}}{c} \big)^2 e^{k_{\textnormal{abs}}(f)d_{ji}} }}.
\end{equation}

\subsection{Offloading System Model}
\label{Subsection: Offloading Model}

In this subsection, we focus on the offloading data delivery from users to the \ac{BS} that involves two phases, (i) an orthogonal channel transmission from each cell-edge user to each cell-center user, and (ii) a \ac{NOMA} transmission from each cell-center user to the \ac{BS}. Also here, let us focus the analysis on the $k$-th \ac{NOMA} user-pair and all other \ac{NOMA} user pairs can be analyzed in a similar manner; as, in this paper, the analog beamforming at the \ac{BS} is assumed since it is a more energy-saving option compared to digital and hybrid beamforming~\cite{8706964}. During the first phase, the received signal at the $i$-th user, that is transmitted from the $j$-th user, can be written as 
\begin{equation} \label{eq: Rx to user-i in cooperative}
 y_{k,i}^j= \underbrace{h_{k,ji}\sqrt{p_{k,j}} \hat{s}_{k,j}}_{\textnormal{Data from User-}j}+ \underbrace{n_{k,i}}_{\textnormal{Noise}},
\end{equation}
\noindent where $h_{k,ji}$ denotes again the cooperative side-link channel gain between $j$-th user and $i$-th user, $p_{k,j}$ represents the transmit power of the $j$-th user, $\hat{s}_{k,j}$ represents the transmit offloading data from the $j$-th user, and $n_{k,i}$ denotes the noise seen at the $i$-th user with a power $\sigma^2$. Consequently, the achievable offloading data rate of the $j$-th user at the $i$-th user can be expressed as 
\begin{equation} \label{eq:  rate at user-i}
  R_{k,i}^{j}=W \log_2 (1 + \frac{p_{k,j} |h_{k,ji}|^2}{\sigma^2}),
\end{equation}
\noindent where $W$ is the available contiguous bandwidth of the adopted \ac{THz} channel window. In this \ac{NOMA} user-pair, the duration of the first phase is $t_{k,j}$. Subsequently, the energy consumption for data offloading of the $j$-th user is $E_{k,j}^{\textnormal{off}}=t_{k,j} p_{k,j}$. 

During the second phase, the received superimposed signal at the \ac{BS} from the $i$-th cooperating cell-center user of the $k$-th user-pair is 
\begin{equation} \label{eq: Rx at BS}
\begin{gathered}
y_{k,\textnormal{BS}} = \underbrace{\mathbf{h}_{k,i}^{H} \mathbf{w}_{k} \sqrt{\beta_{k,i}p_{k,i}} s_{k,i}}_{\textnormal{Data from User-}i} + \underbrace{\mathbf{h}_{k,i}^{H} \mathbf{w}_{k}  \sqrt{\beta_{k,j} p_{k,i}} s_{k,j}}_{\textnormal{Data from User-}j} \\ \qquad \qquad \qquad \ \ + \underbrace{\mathbf{n}_{k}^{H}\mathbf{w}_{k}}_{\textnormal{Noise}},
\end{gathered}
\end{equation}
\noindent where $\mathbf{w}_k$ represents the receive vector at the \ac{BS} antenna array, $\beta_{k,i}$ and $\beta_{k,j}$ denote the \ac{NOMA} power fractions allocated to transmit the $i$-th cooperating cell-center user and $j$-th cell-edge user offloaded data bits, respectively, in which $\beta_{k,j}~+~\beta_{k,i}$~=~$1$~\cite{9348649}. $s_{k,i}$ and $s_{k,j}$ denote the transmit offloaded data messages of the $i$-th user and $j$-th user. $p_{k,i}$ denotes the transmit power of the $i$-th user. $\mathbf{n}_{k}$ denotes the $N$-dimensional Gaussian white noise vector seen at the \ac{BS} with a power $\sigma^2$. The second term in~\eqref{eq: Rx at BS}, represents the offloading data of the $j$-th user that was embedded with the $i$-th user offloading data through superposition coding (SC). In this \ac{NOMA} user-pair, the duration of the second phase is $t_{k,i}$. Consequently, the achievable offloading data rate at the \ac{BS} for the $i$-th user and $j$-th user can be, respectively, expressed as 
\begin{equation} \label{eq: rate at BS for user-i}
  R_{k,\textnormal{BS}}^{i}=W \log_2 (1 + \frac{\beta_{k,i}p_{k,i} |\mathbf{h}_{k,i}^{H} \mathbf{w}_{k} |^2}{\beta_{k,j}p_{k,i} |\mathbf{h}_{k,i}^{H} \mathbf{w}_{k}|^2 + \sigma^2}),
\end{equation}
\begin{equation} \label{eq: rate at BS for user-j}
   R_{k,\textnormal{BS}}^{j}=W \log_2 (1 + \frac{\beta_{k,j}p_{k,i} |\mathbf{h}_{k,i}^{H} \mathbf{w}_{k}|^2}{\sigma^2}).
\end{equation}
\indent Recall that the duration of the
second phase is $t_{k,i}$. Hence, the energy consumption for data offloading of the $i$-th user is $E_{k,i}^{\textnormal{off}}=t_{k,i} p_{k,i}$. According to~\cite{8488502}, by combining the two phases, the offloaded data that are processed by the \ac{MEC} server pool at the \ac{BS} for $i$-th user and $j$-th user can be, respectively, written as 
\begin{equation}\label{eq: offloading data of user i}
  L_{k,i}^{\textnormal{off}}= t_{k,i} R_{k,\textnormal{BS}}^{i},
\end{equation}
\begin{equation}\label{eq: offloading data of user j}
 L_{k,j}^{\textnormal{off}}= \textnormal{min} \{t_{k,j} R_{k,i}^j, t_{k,i} R_{k,\textnormal{BS}}^{j}\}.
\end{equation}
\section{Total Users' Energy Consumption Minimization}
\label{Sec: The Proposed NOMA-assisted Cooperative THz-SIMO MEC Scheme}
In this section, a joint user pairing, beamforming, time, and power allocation optimization problem that aims to minimize the total energy consumption of all the users in the proposed scheme under the full-offloading requirement of each user is formulated. Somehow similar problems have been investigated in~\cite{8951269} and~\cite{9348649}, but (i) for two users only, (ii) with cooperative \ac{SISO} models, (iii) with a direct link between the cell-edge user and \ac{BS}, and (iv) for the partial-offloading model. These four aforementioned differences between this paper and the related literature make the formulation and solution of the optimization problem in our model unique. 

In the $k$-th \ac{NOMA} user-pair, one can define the energy consumption of the $j$-th cell-edge user and the $i$-th cooperating cell-center user as $E_{k,j}^{\textnormal{off}}+E_{k,i}^{\textnormal{off}}$. As discussed before in Section~\ref{Section: System Model and Problem Formulation}, a common latency constraint is adopted in the proposed system to ensure that all users can receive the results of their offloading data within a given time (i.e., $\sum_{k=1}^{K} (t_{k,j}+t_{k,i}) \leq T$). Consequently, for the $k$-th \ac{NOMA} user-pair, the latency constraint is $t_{k,j}+t_{k,i} \leq T/K$. For the data offloading, the energy consumption depends on both the users' transmit power and time duration for offloading data~\cite{9348649}. Mathematically, the energy consumption minimization problem of the $k$-th \ac{NOMA} user-pair can be defined as 
\begin{alignat}{3}
\textnormal{(P0):} &\min_{\substack{\{\mathbf{\Gamma}, \mathbf{w}_{k}, \\ t_{k,j},t_{k,i}, \\ p_{k,j},p_{k,i}\}}} &\ &E_{k,j}^{\textnormal{off}}+E_{k,i}^{\textnormal{off}}, \ k \in [1, ..., K] \label{eq:objective0}\\
&\quad \ \text{s.t.} &  &  L^{\textnormal{off}}_{k,j} \geq L_{k,j}, \ k \in [1, ..., K] \label{eq:constraint-b0}\\
&  &  & L^{\textnormal{off}}_{k,i} \geq L_{k,i}, \ k \in [1, ..., K] \label{eq:constraint-c0} \\
&  &  & t_{k,j}+t_{k,i} \leq T/K, \ k \in [1, ..., K] \label{eq:constraint-d0} \\
&  &  & L_{k,j}^{\textnormal{off}} >0, L_{k,i}^{\textnormal{off}} > 0, L_{k,j}^{\textnormal{loc}}= 0, \nonumber \\ 
&  &  & L_{k,i}^{\textnormal{loc}} = 0, \ k \in [1, ..., K] \label{eq:constraint-e0}
\end{alignat}
\noindent where $\mathbf{\Gamma}$ denotes a vector that contains all \ac{NOMA} user-pair combinations. By~\eqref{eq:constraint-b0} and~\eqref{eq:constraint-c0} we ensure that all the users' tasks are accomplished remotely in the \ac{MEC} server pool. \eqref{eq:constraint-d0} represents the latency requirement for the $k$-th \ac{NOMA} user-pair. \eqref{eq:constraint-e0} is set as the full-offloading model is adopted. It is worth noting that the above optimization problem is for one \ac{NOMA} user-pair, the optimization of the other \ac{NOMA} user pairs can be performed similarly, and then the total energy consumption is calculated by accumulating the energy consumption of each \ac{NOMA} user-pair.

Clearly, solving (P0) directly using existing optimization tools is infeasible as the problem is non-convex. The non-convexity in (P0) lies in the following, (i) there is a coupling between the time allocation optimization variables ($t_{k,j},t_{k,i}$) and the transmission power optimization variables ($p_{k,j},p_{k,i}$) in the objective function~\eqref{eq:objective0} as well as in the constraints~\eqref{eq:constraint-b0} and~\eqref{eq:constraint-c0}, (ii) there is a coupling, in general, between the beamforming vector ($\mathbf{w}_{k}$) and the transmission power optimization variables ($p_{k,j},p_{k,i}$) in the constraints~\eqref{eq:constraint-b0} and~\eqref{eq:constraint-c0}. Finding the optimal solution for such joint optimization problem using direct search is computationally prohibitive and is not suitable for our delay-sensitive application. Hence, to solve (P0), we propose a suboptimal solution that decomposes (P0) into three sub-problems: (i) a user pairing sub-problem, (ii) a beamforming sub-problem, and (iii) a joint time and power allocation sub-problem, then, we solve them sequentially. This approach is usually adopted in the literature to deal with similar optimization problems as in~\cite{9348649,9417585,9417425}. Algorithm~\ref{Tab: A summary of the solution methodology for the first optimization problem} provides a summary of the proposed solution for (P0), which is discussed next.
\begin{algorithm*}[!t]
\caption{A summary of the proposed solution for (P0)} \label{Tab: A summary of the solution methodology for the first optimization problem}
\justify
(-)~\textbf{Initialization:} (i) distribute the cell-edge users, $K_\textnormal{Edge}$, and cooperating cell-center users, $K_\textnormal{Coop}$, based on the uniform random distribution, (ii) calculate the channel gain of each cooperating cell-center user, $\mathbf{h}_{i} \ \forall i \in [1, ..., K_\textnormal{Coop}]$, based on (1), and (iii) build the BS beamforming vectors set, $\mathbf{w}_b \ \forall b \in [0,B]$, based on (16)\;
\hspace{-1.5em}~(A)~\textbf{Cell-edge and cooperating cell-center user pairing sub-problem:} (i) pair each cell-edge user with each cooperating cell-center user based on the Hungarian algorithm and store all NOMA user-pair combinations in $\mathbf{\Gamma}$, then (ii) in each formed pair, calculate the channel gain between the paired users, $h_{k,ji} \ \forall  k \in [1, ..., K], \ \forall  ji \in \mathbf{\Gamma} $, based on (3)\;
\hspace{-1.5em}~(B)~\textbf{Analog beamforming sub-problem} (i.e., Cooperating cell-center user scheduling)\textbf{:} find the best beamforming vector for each cooperating cell-center user according to its location based on receive analog beamforming with the aid of the cosine similarity metric mentioned in (17)\;
\hspace{-1.5em}~(C)~\textbf{Joint time allocation and power allocation sub-problem:} With the provided user pairing and beamforming from the previous steps, determine the optimal ``transmission time and transmission power allocation'' that minimize the total energy consumption of all the users in the system through the Lagrange duality method provided in Proposition~\ref{proposition-1}\;
\end{algorithm*}

For the user pairing sub-problem, in the beginning, the users in the system are distributed according to the uniform random distribution~\cite{9264161} and then each cell-edge user is paired with a cooperating cell-center user based on the shortest Euclidean distance. To achieve this, the Hungarian algorithm is adopted, which is a well-known one-to-one assignment matching algorithm~\cite{kuhn1955hungarian}. Hungarian algorithm is chosen for two reasons, (i) when the problem is a one-to-one assignment problem (same as our optimization problem), the Hungarian algorithm provides the optimal assignment for the problem~\cite{mills2007dynamic}, and (ii) the computational complexity of the exhaustive-search algorithm is $O(K_\textnormal{Coop}\,!)$, which is much more involved than the adopted Hungarian algorithm that has a computational complexity of $O(K_\textnormal{Coop}^3)$. The detailed algorithm~\cite{Hungarian_algorithm} is omitted for brevity. After conducting the \ac{NOMA} user pairing, the channel gain between each paired cell-edge user and its cooperating cell-center user is obtained through~\eqref{eq: channel gain between user-i and user-j}.

For the beamforming sub-problem, as low hardware cost and power consumption are essential for energy-efficient future wireless networks, the hybrid and analog beamforming schemes, where the number of the \ac{RF} chains at the \ac{BS} is reduced, have gained interest over the last decade~\cite{9264161,maraqa2021energy,9115278,8733134,8706964}. Hence, we adopt here an analog beamforming scheme that has fixed predefined beams for the \ac{BS} to choose from. These beams are evenly distributed over each sector coverage area. So, the cell is assumed to have three $120^{\circ}$ sectors as shown in Fig.~\ref{fig: THz NOMA system model}. With one \ac{RF} chain available at the \ac{BS}, only one beam can be selected at a time, and hence one beam to serve one \ac{NOMA} user-pair per channel use. A time-division strategy is used to alternate among the different \ac{NOMA} user pairs so that \ac{ABF} can only generate one beam at a time. So, considering a specific sector, $\bar\theta$, from $-\pi/6$ to $\pi/2$ or $330^{\circ}$ to $90^{\circ}$, the $120^{\circ}$ area is covered by a set of $B+1$ beams and similarly for the other sectors. Each beam-$b$ of the available beams has the following steering vector~\cite{9264161}:
\begin{equation}\label{eq:Beams}
\mathbf{w}_b = \boldsymbol{a}(\bar\theta_b), \forall b \in [0,B],
\end{equation}
\noindent where the parameter $\bar\theta_b = -\pi/6 + (b \times \frac{2 \pi}{3 B})$. In this way, this entire sector region is divided into $B$ equal-spaced angles, effectively forming a set of $B+1$ beams. The $B+1$ beams can be thought of as a choice of $B+1$ different steering vectors based on~\eqref{eq:Beams}, such that collectively, the steering vectors of the $B+1$ candidate precoding vectors uniformly cover the entire sector region of $\bar\theta = -\pi/6$ to $\pi/2$. 

The cosine similarity metric is utilized to determine the level of correlation among the \ac{THz} cooperating cell-center users and the \ac{BS} available beams that are formed through analog beamforming. Several works in \ac{mmWave}-\ac{NOMA} systems have used the cosine similarity metric to determine the correlations between the users' channels and the fixed beams~\cite{9264161} or among the users' channels~\cite{8454272}. Similarly, this concept can be used herein in the context of \ac{THz}-\ac{NOMA} systems. In particular, we use the result from~\cite{9264161}, where it is shown that the cosine similarity metric between the channel $\mathbf{h}_{k,i}$ of the $i$-th user, in the $k$-th \ac{NOMA} user-pair, and a beam-$b$ with beamforming vector $\mathbf{w}_{b}$ can be expressed as follows:
\begin{equation} \label{eq:cosS}
\begin{split}
\cos(\mathbf{h}_{k,i},\mathbf{w}_{b}) & = \frac{\mid \mathbf{h}_{k,i}^{H} \mathbf{w}_{b} \mid}{||\mathbf{h}_{k,i}||~||\mathbf{w}_{b}||} \\ &=
      \frac{\mid \boldsymbol{a}(\phi_{k,i})^{H} \boldsymbol{a}(\phi_{b}) \mid}{N} \\ &= F_N \big(\pi [\phi_{k,i} - \phi_b]\big),
\end{split}
\end{equation}
\noindent where $\phi_b$~=~$\sin(\bar\theta_b)$ and $\phi_{k,i}$~=~$\sin(\theta_{k,i})$ are the normalized directions of the candidate beam and the user channel, respectively, and $F_N$ represents the Fejer Kernel. The properties of Fejer Kernel dictate that as $|\phi_{k,i} - \phi_b|$ increases, $\cos(\mathbf{h}_{{k,i}},\mathbf{w}_{b})\rightarrow 0$. So, each cooperating cell-center user in a \ac{NOMA} user-pair is scheduled with its best \ac{THz} beam by guaranteeing that the selected beam and the cooperating cell-center user direction are well aligned, as reflected through the large value of the cosine similarity metric. 

Next, the time allocation and power allocation are jointly optimized and the solution is obtained through the Lagrange duality method~\cite{9348649}. With the given user pairing and beamforming from the previous problems, (P0) can be reformulated as
\begin{alignat}{3}
\textnormal{(P1):} &\min_{\substack{\{t_{k,j},t_{k,i}, \\p_{k,j},p_{k,i}\}}} &\ &E_{k,j}^{\textnormal{off}}+E_{k,i}^{\textnormal{off}}, \ k \in [1, ..., K] \label{eq:objective1}\\
&\quad \ \text{s.t.} &  &  L^{\textnormal{off}}_{k,j} \geq L_{k,j}, \ k \in [1, ..., K] \label{eq:constraint-b1}\\
&  &  & L^{\textnormal{off}}_{k,i} \geq L_{k,i}, \ k \in [1, ..., K] \label{eq:constraint-c1} \\
&  &  & t_{k,j}+t_{k,i} \leq T/K, \ k \in [1, ..., K] \label{eq:constraint-d1} \\
&  &  & L_{k,j}^{\textnormal{off}} >0, L_{k,i}^{\textnormal{off}} > 0, L_{k,j}^{\textnormal{loc}}= 0, \nonumber \\
&  &  & L_{k,i}^{\textnormal{loc}} = 0, \ k \in [1, ..., K] \label{eq:constraint-e1}
\end{alignat}
\noindent where the optimal solution of (P1) is provided in Proposition~\ref{proposition-1}.  

\begin{proposition}\label{proposition-1}
The optimal time and power allocation that minimize the energy consumption of one \ac{NOMA} user-pair are given by
\begin{equation} \label{eq: optimal values of t1 t2 p1 p2}
\begin{gathered}
  t_{k,j}^*=\frac{T}{K}-\frac{L_{k,i}}{W \log_2 (1 +\frac{\beta_{k,i}}{\beta_{k,j}})} , \ t_{k,i}^*=\frac{L_{k,i}}{W \log_2 (1 +\frac{\beta_{k,i}}{\beta_{k,j}})}, \\
p_{k,j}^*=\frac{\sigma^2(2^{L_{k,j}/W t_{k,j}^*}-1)}{|h_{k,ji}|^2} , \ p_{k,i}^*=\frac{\sigma^2(2^{L_{k,j}/W t_{k,i}^*}-1)}{\beta_{k,j} |\mathbf{h}_{k,i}^{H} \mathbf{w}_{k}|^2}.
\end{gathered}
\end{equation}
\end{proposition}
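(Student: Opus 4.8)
The plan is to establish the claimed optimum through the Karush--Kuhn--Tucker (KKT) conditions of (P1), i.e., along the Lagrange-duality route already announced. A preliminary step is to make (P1) differentiable: the non-smooth requirement \eqref{eq:constraint-b1}, $L_{k,j}^{\textnormal{off}}=\min\{t_{k,j}R_{k,i}^{j},\,t_{k,i}R_{k,\textnormal{BS}}^{j}\}\ge L_{k,j}$, is replaced by the equivalent pair $t_{k,j}R_{k,i}^{j}\ge L_{k,j}$ and $t_{k,i}R_{k,\textnormal{BS}}^{j}\ge L_{k,j}$, while the conditions in \eqref{eq:constraint-e1}, which merely encode the full-offloading model and are implied by feasibility, may be discarded. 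One is then left with three effective rate constraints --- on the first hop of the cell-edge data, on the relayed cell-edge data at the BS, and on the cell-center data at the BS \eqref{eq:constraint-c1} --- plus the per-pair latency budget \eqref{eq:constraint-d1}.

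Next I would use monotonicity to pin down the active constraints and read off the powers. The objective \eqref{eq:objective1} is strictly increasing in each of $p_{k,j},p_{k,i}$, and $R_{k,i}^{j}$ is strictly increasing in $p_{k,j}$ while $R_{k,\textnormal{BS}}^{i}$ and $R_{k,\textnormal{BS}}^{j}$ are strictly increasing in $p_{k,i}$; therefore at an optimum every rate constraint carrying a power must be tight, since otherwise that power could be reduced. Tightness of $t_{k,j}R_{k,i}^{j}=L_{k,j}$ gives $p_{k,j}^{*}$ as the stated decreasing function of $t_{k,j}$, and tightness of $t_{k,i}R_{k,\textnormal{BS}}^{j}=L_{k,j}$ gives $p_{k,i}^{*}$ in the stated form (after using $\beta_{k,i}+\beta_{k,j}=1$). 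By the same token, enlarging $t_{k,j}$ or $t_{k,i}$ only relaxes the rate constraints and hence only lowers the minimal feasible powers, so the latency constraint \eqref{eq:constraint-d1} is active, $t_{k,j}^{*}+t_{k,i}^{*}=T/K$, leaving a single free scalar.

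Determining that scalar is the crux and, I expect, the main obstacle. The NOMA split enters here: $R_{k,\textnormal{BS}}^{i}$ is strictly increasing in $p_{k,i}$ but bounded above by $W\log_2(1+\beta_{k,i}/\beta_{k,j})$ (its value in the limit $p_{k,i}\to\infty$), so the cell-center data constraint $t_{k,i}R_{k,\textnormal{BS}}^{i}\ge L_{k,i}$ is feasible only when $t_{k,i}\ge L_{k,i}/\big(W\log_2(1+\beta_{k,i}/\beta_{k,j})\big)=t_{k,i}^{*}$, and it remains to show the optimum sits at this endpoint. Substituting $p_{k,j}^{*}(t_{k,j})$ and $p_{k,i}^{*}(t_{k,i})$ into \eqref{eq:objective1} and using $t_{k,j}=T/K-t_{k,i}$ collapses (P1) to minimizing $c_1\,\varphi(T/K-t_{k,i})+c_2\,\varphi(t_{k,i})$ over $t_{k,i}\in[t_{k,i}^{*},T/K)$, where $\varphi(t)=t\,(2^{L_{k,j}/(W t)}-1)$ and $c_1=\sigma^2/|h_{k,ji}|^2$, $c_2=\sigma^2/(\beta_{k,j}|\mathbf{h}_{k,i}^{H}\mathbf{w}_{k}|^2)$ are positive. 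I would first prove the auxiliary fact that $\varphi$ is strictly decreasing on $(0,\infty)$ --- with $u=L_{k,j}/(W t)$ this amounts to $2^{u}(u\ln 2-1)+1\ge 0$ for $u\ge 0$, which holds because the left side vanishes at $u=0$ and has nonnegative derivative --- and then, under the modeling assumption that the cell-edge side-link is the weak/bottleneck link (so that $c_1$ dominates), conclude that the reduced objective is non-decreasing in $t_{k,i}$, hence minimized at $t_{k,i}=t_{k,i}^{*}$, giving $t_{k,j}^{*}=T/K-t_{k,i}^{*}$; plugging these back into the tight constraints reproduces the stated $p_{k,j}^{*}$ and $p_{k,i}^{*}$. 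The subtle point is precisely this last monotonicity: because $\beta_{k,i},\beta_{k,j}$ couple $R_{k,\textnormal{BS}}^{i}$ and $R_{k,\textnormal{BS}}^{j}$, the $p_{k,i}$-feasible set is defined by two coupled inequalities with a finite SINR ceiling, and handling that ceiling carefully is what yields both the closed form of $t_{k,i}^{*}$ and, after eliminating $\beta$ from the tight relay constraint, the compact form of $p_{k,i}^{*}$.
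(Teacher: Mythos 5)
Your treatment of the powers and of the active constraints (a)/(b)/(d) is sound, and your reduction to a one-dimensional problem in $t_{k,i}$, together with the auxiliary fact that $\varphi(t)=t(2^{L_{k,j}/(Wt)}-1)$ is decreasing, is a legitimate alternative to the paper's KKT bookkeeping. The gap is in the decisive last step: why the time split sits exactly at $t_{k,i}^*=L_{k,i}/\bigl(W\log_2(1+\beta_{k,i}/\beta_{k,j})\bigr)$. You justify this by asserting that the reduced objective $c_1\varphi(T/K-t_{k,i})+c_2\varphi(t_{k,i})$ is non-decreasing ``under the modeling assumption that the cell-edge side-link is the bottleneck so that $c_1$ dominates.'' No such assumption appears in the paper, and even granting $c_1>c_2$ it does not yield monotonicity on the whole interval, since $|\varphi'|$ blows up for small arguments and the comparison $c_1|\varphi'(T/K-t_{k,i})|\ge c_2|\varphi'(t_{k,i})|$ can fail near the left endpoint. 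Worse, your reduced objective determines $p_{k,i}$ from the relayed-data constraint alone; at interior points near $t_{k,i}^*$ the binding lower bound on $p_{k,i}$ is actually the cell-center constraint \eqref{eq:constraint-c1}, whose required power diverges as $t_{k,i}\downarrow t_{k,i}^*$ because of the SINR ceiling you yourself identified. So for the exact problem the energy tends to $+\infty$ at your proposed endpoint and the minimizer is strictly interior --- the claimed closed form cannot be reached by an exact endpoint argument.

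What your proof is missing is the paper's high-SNR approximation. The paper writes the full KKT system with all four constraints tight and then approximates the stationarity-in-$\lambda_3$ equation \eqref{eq:partial diff. for lambda3} under $\beta_{k,i}E_{k,i}^{\textnormal{off}}|\mathbf{h}_{k,i}^{H}\mathbf{w}_{k}|^2/t_{k,i}\gg\sigma^2$, so that $R_{k,\textnormal{BS}}^{i}\approx W\log_2(1+\beta_{k,i}/\beta_{k,j})$ becomes independent of $p_{k,i}$; tightness of \eqref{eq:constraint-c1} then pins $t_{k,i}^*$ directly via \eqref{eq: approx. partial diff. for lambda3}, with no monotonicity argument and no dominance assumption needed, and the remaining values follow from the other tight constraints exactly as you compute them. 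To repair your proof you should import that approximation explicitly (it is what makes the proposition an approximate rather than exact optimum) rather than inventing a bottleneck-link hypothesis; as written, the endpoint claim is both unjustified and, for the exact problem, false.
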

\begin{proof}
See the Appendix.  
\end{proof}

\section{Total Users' Computation Energy Efficiency (CEE) Maximization}
\label{Section: Total Users' Computation Energy Efficiency Maximization}

\begin{algorithm*}[!t] 
\caption{A summary of the proposed solution for (P4)} \label{Tab: A summary of the solution methodology for the second optimization problem}
\justify
(-)~\textbf{Initialization:} (i) distribute the cell-edge users, $K_\textnormal{Edge}$, and cooperating cell-center users, $K_\textnormal{Coop}$, based on the uniform random distribution, and (ii) calculate the channel gain of each cooperating cell-center user, $\mathbf{h}_{i} \ \forall i \in [1, ..., K_\textnormal{Coop}]$, based on (1)\;
\hspace{-1.5em}~(A)~\textbf{Cell-edge and cooperating cell-center user pairing sub-problem:} (i) pair each cell-edge user with each cooperating cell-center user based on the Hungarian algorithm and store all NOMA user-pair combinations in $\mathbf{\Gamma}$, then (ii) in each formed pair, calculate the channel gain between the paired users, $h_{k,ji} \ \forall  k \in [1, ..., K], \ \forall  ji \in \mathbf{\Gamma} $, based on (3)\;
\hspace{-1.5em}~(B)~\textbf{Time allocation sub-problem:} divide the whole transmission time block $T$ (i.e., the latency requirement of the network) equally between all the formulated NOMA user pairs (i.e., $ t_{k,j}+t_{k,i} = T/K, \ \forall k \in [1, ..., K], \ \forall  ji \in \mathbf{\Gamma}$)\;
\hspace{-1.5em}~(C)~\textbf{Power and beam-gain allocation sub-problem:} using an alternating approach, with the aid of the Dinkelbach algorithm, find an upper bound solution for both the power allocation variables (i.e., $\{\overline{p}_{k,j}, \overline{p}_{k,i}\}, \ \forall k \in [1, ..., K], \ \forall  ji \in \mathbf{\Gamma}$) and the beam gain variable (i.e., $\overline{c}_{k,i}, \ \forall k \in [1, ..., K], \ \forall  ji \in \mathbf{\Gamma}$) without considering the CM constraint\;
\hspace{-1.5em}~(D)~\textbf{Beamforming sub-problem:} while considering the CM constraint and the upper bound solution $\{\overline{p}_{k,j},\overline{p}_{k,i},$ $\overline{c}_{k,i}\}, \ \forall k \in [1, ..., K], \ \forall  ji \in \mathbf{\Gamma}$ from (C), and with the aid of the Dinkelbach algorithm, find the actual beamforming vectors $\mathbf{w}^*_{k}, \ \forall k \in [1, ..., K]$ and the power allocation variables $\{p^*_{k,j},p^*_{k,i}\}, \ \forall k \in [1, ..., K]$ that maximize (P4)\;
\end{algorithm*}

In this section, a joint user pairing, time, power, and beamforming allocation optimization problem that aims to maximize the total users' \ac{CEE} in the proposed scheme under the full-offloading requirement of each user is formulated. In the following, we provide the \ac{CEE} formula of the $k$-th \ac{NOMA} user-pair that can be defined as the ``ratio between the total offloaded data bits and the users' energy consumption (Bits/Joule/Hz)" as~\cite{9345931}
\begin{equation} \label{eq: CEE formula}
\begin{split}
  \eta_{k,\textnormal{CEE}} &= \frac{ L_{k,i}^{\textnormal{off}}  + L_{k,j}^{\textnormal{off}}}{E_{k,j}^{\textnormal{off}} + E_{k,i}^{\textnormal{off}}} = \frac{ t_{k,i} R_{k,\textnormal{BS}}^i  + \textnormal{min}\{t_{k,j} R_{k,i}^{j}, t_{k,i} R_{k,\textnormal{BS}}^j\}}{t_{k,j} p_{k,j} + t_{k,i} p_{k,i}}\\
\end{split}
\end{equation}

Similar to the users' energy consumption minimization problem discussed in Section~\ref{Sec: The Proposed NOMA-assisted Cooperative THz-SIMO MEC Scheme}, we adopt the \ac{ABF} structure with a single \ac{RF} chain. Differently, we conduct a formal beamforming optimization here, to do that we define the beamforming vector for the $k$-th \ac{NOMA} user-pair as $\mathbf{w}_{k}$, in which $|[\mathbf{w}_{k}]_n|=\frac{1}{\sqrt{N}}$, $ n \in [1, ..., N]$, where $[\mathbf{w}_{k}]_n$ is the $n$-th element of $\mathbf{w}_{k}$. This is called the \ac{CM} constraint~\cite{8706964}. Mathematically, the \ac{CEE} optimization problem of the $k$-th \ac{NOMA} user-pair can be defined as follows:
\begin{alignat}{3}
\textnormal{(P4):} &\max_{\substack{\{\mathbf{\Gamma}, t_{k,j}, \\ t_{k,i},p_{k,j}, \\ p_{k,i},\mathbf{w}_{k}\}}} &\ &\eta_{k,\textnormal{CEE}}, \ k \in [1, ..., K] \label{eq:objectiveEE}\\
&\quad \ \text{s.t.} &  &  L^{\textnormal{off}}_{k,j} \geq \frac{L_{k,j}}{W}, \ k \in [1, ..., K] \label{eq:constraint-EE1}\\
&  &  & L^{\textnormal{off}}_{k,i} \geq \frac{L_{k,i}}{W}, \ k \in [1, ..., K] \label{eq:constraint-EE2} \\
&  &  & L_{k,j}^{\textnormal{off}} >0, L_{k,i}^{\textnormal{off}} > 0, L_{k,j}^{\textnormal{loc}}= 0, \nonumber \\ 
&  &  & L_{k,i}^{\textnormal{loc}} = 0, \ k \in [1, ..., K] \label{eq:constraint-EE3} \\
&  &  & t_{k,j}+t_{k,i} \leq T/K, \ k \in [1, ..., K] \label{eq:constraint-EE4} \\
&  &  & 0 < p_{k,i} \leq p_{\textnormal{max}}, 0 < p_{k,j} \leq p_{\textnormal{max}}, \ k \in [1, ..., K] \label{eq:constraint-EE5} \\
&  &  & |[\mathbf{w}_{k}]_n|=\frac{1}{\sqrt{N}}, \ n \in [1, ..., N], \ k \in [1, ..., K] \label{eq:constraint-EE6}
\end{alignat}
\noindent where $\mathbf{\Gamma}$ denotes again a vector that contains all \ac{NOMA} user-pair combinations. By~\eqref{eq:constraint-EE1},~\eqref{eq:constraint-EE2},~\eqref{eq:constraint-EE3} we ensure the full-offloading requirement of each user. \eqref{eq:constraint-EE4} represents the latency requirement for the $k$-th \ac{NOMA} user-pair. \eqref{eq:constraint-EE5} is set to ensure that the users' transmit power do not exceed the maximum allowed transmit power $p_{\textnormal{max}}$. \eqref{eq:constraint-EE6} denotes the \ac{CM} constraint. It is worth noting that the above optimization problem is for one \ac{NOMA} user-pair, the optimization of the other \ac{NOMA} user pairs can be performed similarly, and then the total \ac{CEE} of all \ac{NOMA} user pairs is calculated by accumulating the \ac{CEE} of each \ac{NOMA} user-pair. 

Clearly, solving (P4) directly using existing optimization tools is infeasible as the problem is non-convex (i.e., \eqref{eq:objectiveEE} and \eqref{eq:constraint-EE6} are non-convex). The challenges here lie in the followings, (i) the objective function is fractional and includes the beamforming vector, (ii) there is a coupling between the beamforming vector ($\mathbf{w}_{k}$) and the transmission power optimization variables ($p_{k,j},p_{k,i}$) in~\eqref{eq:objectiveEE},~\eqref{eq:constraint-EE1}, and~\eqref{eq:constraint-EE2}, (iii) there is also a coupling between the time allocation optimization variables ($t_{k,j},t_{k,i}$) and the transmission power optimization variables ($p_{k,j},p_{k,i}$) in~\eqref{eq:objectiveEE},~\eqref{eq:constraint-EE1}, and~\eqref{eq:constraint-EE2}, and (iv) finding the optimal solution for such joint optimization problem using direct search is computationally prohibitive and is not suitable for our delay-sensitive application. Hence, to solve (P4), we propose a suboptimal solution that decomposes (P4) into four problems: (i) user pairing problem, (ii) time allocation problem, (iii) beam gain allocation and power control problem, and (iv) beamforming problem, then, we solve them sequentially. This approach is usually adopted in the literature to deal with similar optimization problems as in~\cite{8294044,9145398,8415781,9345931}. Algorithm~\ref{Tab: A summary of the solution methodology for the second optimization problem} provides a summary of the proposed solution for (P4), which is discussed next.

Firstly, similar to the energy consumption minimization problem discussed in Section~\ref{Sec: The Proposed NOMA-assisted Cooperative THz-SIMO MEC Scheme}, Hungarian Algorithm is utilized here to pair each cell-edge user with a cooperating cell-center user. Secondly, the time allocation for (P4) can be obtained when $ t_{k,j}+t_{k,i} = T/K, \ k \in [1, ..., K]$. This can be proved by a contradiction approach similar to the proof provided in Appendix A of~\cite{9345931}. As the half-duplex cooperation is adopted, in each \ac{NOMA} pair, the transmission time for the data offloading in the links between the cell-edge user and the cell-center user and the link between the cell-center user and the \ac{BS} is divided equally. 

Thirdly, using an alternating approach we can solve the beam gain allocation and power control sub-problem as follows. Let us denote that beam gain of the $i$-th user as $c_{k,i} = |\mathbf{h}_{k,i}^{H} \mathbf{w}_{k}|^2$. Without considering the \ac{CM} constraint (i.e., with the ideal beamforming), the beam gain of the $i$-th user satisfies~[30, Lemma 1] 
\begin{equation} \label{eq: beam gain}
 \frac{c_{k,i}}{|\lambda_{k,i}|^2}=N, \textnormal{where} \ \lambda_{k,i}=\sqrt{\frac{G_{t,i} G_{r,\textnormal{BS}}}{\mathcal{PL}(f,d_i)}}
\end{equation}
Now, by substituting~\eqref{eq: beam gain} in (P4) and by considering the ideal beamforming, the beam gain allocation and power control problem can be expressed as in (P5).

\indent In (P5), there is still coupling between the power allocation variables (i.e., $\{p_{k,j},p_{k,i}\}$) and the beam gain variable (i.e., $c_{k,i}$). To tackle this, similar to~\cite{8415781,9145398}, we resort to an alternating approach that can iteratively optimize these variables, as follows:

\textit{1) Updating $\{p_{k,j},p_{k,i}\}$ under a given $c_{k,i}$}

To get the initial value for $c_{k,i}$, similar to~\cite{8415781,9145398}, instead of solving the \ac{CEE} maximization problem, we can solve its related sum rate maximization problem. The initial value for $c_{k,i}$ can be obtained by following the same steps in Section III.B in~\cite{8415781} and can be expressed as 

\begin{strip}
\begin{bottomborder} \end{bottomborder}

\begin{alignat}{3}
\textnormal{(P5):} &\max_{\{p_{k,j}, p_{k,i}, c_{k,i}\}} & \ & \frac{ t_{k,i} \log_2 (1 + \frac{\beta_{k,i}p_{k,i} c_{k,i} }{\beta_{k,j}p_{k,i} c_{k,i} + \sigma^2})  + \textnormal{min}\{t_{k,j} \log_2 (1 + \frac{p_{k,j} |h_{k,ji}|^2}{\sigma^2}), t_{k,i} \log_2 (1 + \frac{\beta_{k,j}p_{k,i} c_{k,i}}{\sigma^2}) \}}{t_{k,j} p_{k,j} + t_{k,i} p_{k,i}}, \label{eq:objective_power_subproblem}\\
&\quad \ \text{s.t.} & & \eqref{eq:constraint-EE1},~\eqref{eq:constraint-EE2},~\eqref{eq:constraint-EE3},~\eqref{eq:constraint-EE4},~\eqref{eq:constraint-EE5},~\eqref{eq: beam gain},     \label{eq:constraint-EE1simplified}
\end{alignat}

\begin{alignat}{3}
\textnormal{(P6):} &\underset{\{p_{k,j},p_{k,i}\}}{\text{max}} &\ &\frac{ t_{k,i} \log_2 (1 + \frac{\beta_{k,i}p_{k,i} c_{k,i} }{\beta_{k,j}p_{k,i} c_{k,i} + \sigma^2})  + \textnormal{min}\{t_{k,j} \log_2 (1 + \frac{p_{k,j} |h_{k,ji}|^2}{\sigma^2}), t_{k,i} \log_2 (1 + \frac{\beta_{k,j}p_{k,i} c_{k,i}}{\sigma^2}) \}}{t_{k,j} p_{k,j} + t_{k,i} p_{k,i}}, \label{eq:objective_power}\\
&\quad \ \text{s.t.} &  &  t_{k,j} \log_2 (1 + \frac{p_{k,j} |h_{k,ji}|^2}{\sigma^2}) \geq \frac{L_{k,j}}{W},  \ k \in [1, ..., K] \label{eq:constraint-power1}\\
&  &  & t_{k,i} \log_2 (1 + \frac{\beta_{k,j}p_{k,i} c_{k,i}}{\sigma^2}) \geq \frac{L_{k,j}}{W},  \ k \in [1, ..., K] \label{eq:constraint-power2} \\
&  &  & t_{k,i} \log_2 (1 + \frac{\beta_{k,i}p_{k,i} c_{k,i}}{\beta_{k,j}p_{k,i} c_{k,i} + \sigma^2}) \geq \frac{L_{k,i}}{W},  \ k \in [1, ..., K] \label{eq:constraint-power3} \\
&  &  & \eqref{eq:constraint-EE3},~\eqref{eq:constraint-EE5}, \label{eq:constraint-power4} 
\end{alignat}

\begin{alignat}{3}
\textnormal{(P7):} &\underset{\{p_{k,j},p_{k,i}\}}{\text{max}} &\ &\frac{ t_{k,i} \log_2 (1 + \frac{\beta_{k,i}p_{k,i} c_{k,i} }{\beta_{k,j}p_{k,i} c_{k,i} + \sigma^2})  + \textnormal{min}\{t_{k,j} \log_2 (1 + \frac{p_{k,j} |h_{k,ji}|^2}{\sigma^2}), t_{k,i} \log_2 (1 + \frac{\beta_{k,j}p_{k,i} c_{k,i}}{\sigma^2}) \}}{t_{k,j} p_{k,j} + t_{k,i} p_{k,i}},  \label{eq:objective_power_relaxed}\\
&\quad \ \text{s.t.} &  & \eqref{eq:constraint-EE3},~\eqref{eq:constraint-EE5},~\eqref{eq:affine1},~\eqref{eq:affine2},~\eqref{eq:affine3},
\end{alignat} 

\begin{alignat}{3}
\textnormal{(P8):} & \quad \underset{c_{k,i}}{\text{max}} &\ &\frac{ t_{k,i} \log_2 (1 + \frac{\beta_{k,i}p_{k,i} c_{k,i} }{\beta_{k,j}p_{k,i} c_{k,i} + \sigma^2})  + \textnormal{min}\{t_{k,j} \log_2 (1 + \frac{p_{k,j} |h_{k,ji}|^2}{\sigma^2}), t_{k,i} \log_2 (1 + \frac{\beta_{k,j}p_{k,i} c_{k,i}}{\sigma^2}) \}}{t_{k,j} p_{k,j} + t_{k,i} p_{k,i}}, \label{eq:objective_beam_relaxed}\\
&\quad \ \text{s.t.} &  & \eqref{eq:constraint-EE3},~\eqref{eq: beam gain},~\eqref{eq:affine1},~\eqref{eq:affine2},~\eqref{eq:affine3},
\end{alignat} 

\begin{bottomborder} \end{bottomborder}
\end{strip}

\begin{equation}
    c_{k,i} = \frac{(2^{\frac{L_{k,j}}{t_{k,i} W}}-1)(\sigma^2)}{p_\textnormal{max}}. \label{eq:initial_gain} 
\end{equation}
Now, (P5) can be simplified as in (P6),
\noindent where through simple manipulations~\eqref{eq:constraint-power1},~\eqref{eq:constraint-power2}, and~\eqref{eq:constraint-power3} can be transformed into the following affine constraints
\begin{equation}
    p_{k,j} |h_{k,ji}|^2 \geq (2^{\frac{L_{k,j}}{t_{k,j} W}}-1)(\sigma^2), \label{eq:affine1} 
\end{equation}
\begin{equation}
    \beta_{k,j}p_{k,i} c_{k,i} \geq (2^{\frac{L_{k,j}}{t_{k,i} W}}-1)(\sigma^2), \label{eq:affine2}
\end{equation}
\begin{equation}
    \beta_{k,i}p_{k,i} c_{k,i} \geq (2^{\frac{L_{k,i}}{t_{k,i} W}}-1)(\beta_{k,j}p_{k,i} c_{k,i} + \sigma^2), \label{eq:affine3}
\end{equation}
\noindent Next, we can re-write (P6) by replacing~\eqref{eq:constraint-power1},~\eqref{eq:constraint-power2}, and~\eqref{eq:constraint-power3} with~\eqref{eq:affine1},~\eqref{eq:affine2}, and~\eqref{eq:affine3}, respectively, as in (P7).
The resultant optimization problem (P7) is quasi-concave, as (i) its objective function is fractional with a concave numerator and an affine denominator, and (ii) its constraints are all affine. Hence, (P7) can be solved by Dinkelbach algorithm~\cite{dinkelbach1967nonlinear} with the aid of the CVX in Matlab. At this stage, $\{p_{k,j},p_{k,i}\}$ are calculated and as discussed next these values are used to update $c_{k,i}$.

\textit{2) Updating $c_{k,i}$ under given $\{p_{k,j},p_{k,i}\}$}\\
In a similar manner to (P7), (P5) can be simplified as in (P8).
The resultant optimization problem (P8) is a convex problem, as (i) its objective function is concave, and (ii) its constraints are all affine. Hence, (P8) can be solved using standard convex optimization tools~\cite{boyd2004convex}.
Finally, the solution of $\{p_{k,j},p_{k,i},c_{k,i}\}$ can be found by repeating step 1) and step 2) until convergence. The convergence of this approach can be proved in a similar way to the proof of Theorem 2 in~\cite{9145398}. Let us denotes the obtained solution of (P5) as $\{\overline{p}_{k,j}, \overline{p}_{k,i}, \overline{c}_{k,i}\}$. It is important to note that this solution can be considered as an upper-bound solution for the original problem (P4), since this solution does not consider the \ac{CM} constraint~\cite{9145398}.

Fourthly, here, we consider the beamforming sub-problem that takes into account the \ac{CM} constraint and aims to design suitable $\mathbf{w}_{k} \in \mathbb{C}^N, \ k \in [1, ..., K]$ while utilizing the obtained $\overline{c}_{k,i}$ from the third problem. This beamforming problem can be expressed as 
\begin{alignat}{3}
 & \textnormal{(P9):} &\quad \ & \mathbf{w}_{k} \in \mathbb{C}^N, \ k \in [1, ..., K] \label{eq:objective_BF_subproblem}\\
& \ \text{s.t.} &  & c_{k,i} = |\mathbf{h}_{k,i}^{H} \mathbf{w}_{k}|^2,     \label{eq:constraint_BF_subproblem_1} \\
&  &  & |[\mathbf{w}_{k}]_n|=\frac{1}{\sqrt{N}}. \ n \in [1, ..., N], \ k \in [1, ..., K] \label{eq:constraint_BF_subproblem_2}
\end{alignat}
\indent It is worth noting that with known $\{p_{k,j},p_{k,i}\}$, instead of solving the \ac{CEE} problem, its related sum rate maximization problem can be solved in a similar way to~\cite{9145398}. While the value of $c_{k,i}$ maximizes the sum rate, it also affects the full-offloading constraints (i.e.,~\eqref{eq:affine2} and~\eqref{eq:affine3}) and that these two constraints should not be violated. Specifically, $c_{k,i} \geq \frac{(2^{\frac{L_{k,j}}{t_{k,i} W}}-1)(\sigma^2)}{\beta_{k,j}p_{k,i}}$ and $c_{k,i} \geq \frac{ (2^{\frac{L_{k,i}}{t_{k,i} W}}-1)(\sigma^2)}{p_{k,i} (1-\beta_{k,j}2^{\frac{L_{k,i}}{t_{k,i} W}})}$. Consequently, (P9) can be transformed into 
\begin{alignat}{3}
\textnormal{(P10):} & \quad \underset{\mathbf{w}_{k}}{\text{max}} &\ &|\mathbf{h}_{k,i}^{H} \mathbf{w}_{k}|^2, \ k \in [1, ..., K] \label{eq:objective_BF_relaxed}\\
& \quad \ \text{s.t.} &  &  |\mathbf{h}_{k,i}^{H} \mathbf{w}_{k}|^2 \geq \frac{(2^{\frac{L_{k,j}}{t_{k,i} W}}-1)(\sigma^2)}{\beta_{k,j}p_{k,i}}, \ k \in [1, ..., K]     \label{eq:constraint_BF_relaxed_1} \\
&  &  & |\mathbf{h}_{k,i}^{H} \mathbf{w}_{k}|^2 \geq \frac{ (2^{\frac{L_{k,i}}{t_{k,i} W}}-1)(\sigma^2)}{p_{k,i} (1-\beta_{k,j}2^{\frac{L_{k,i}}{t_{k,i} W}})}, \ k \in [1, ..., K] \label{eq:constraint_BF_relaxed_2} \\
&  &  & |[\mathbf{w}_{k}]_n|=\frac{1}{\sqrt{N}}. \ n \in [1, ..., N], \ k \in [1, ..., K] \label{eq:constraint_BF_relaxed_3}
\end{alignat} 

\begin{strip}
\begin{bottomborder} \end{bottomborder}
\begin{alignat}{3}
\textnormal{(P12):} &\underset{\{p_{k,j},p_{k,i}\}}{\text{max}} &\ &\frac{ t_{k,i} \log_2 (1 + \frac{\beta_{k,i}p_{k,i} c^*_{k,i} }{\beta_{k,j}p_{k,i} c^*_{k,i} + \sigma^2})  + \textnormal{min}\{t_{k,j} \log_2 (1 + \frac{p_{k,j} |h_{k,ji}|^2}{\sigma^2}), t_{k,i} \log_2 (1 + \frac{\beta_{k,j}p_{k,i} c^*_{k,i}}{\sigma^2}) \}}{t_{k,j} p_{k,j} + t_{k,i} p_{k,i}}, \label{eq:objective_Final_power_relaxed}\\
&\quad \ \text{s.t.} &  & \eqref{eq:constraint-EE3},~\eqref{eq:constraint-EE5},~\eqref{eq:affine1},~\eqref{eq:affine2},~\eqref{eq:affine3},
\end{alignat}
\begin{bottomborder} \end{bottomborder}
\end{strip}

\indent Clearly (P10) is non-convex. In order to convexify this problem, (i) the equality constraint in~\eqref{eq:constraint_BF_relaxed_3} can be relaxed to an inequality constraint (which is convex) without affecting the solution as proved in Theorem 2 of~\cite{8415781}, and (ii) the norm operation in~\eqref{eq:objective_BF_relaxed} can be eliminated as introducing an arbitrary rotation to $\mathbf{w}_{k}$ (i.e., if $\mathbf{w}^*_{k}$ is optimal, then $\mathbf{w}^*_{k} e^{j \phi}$ is also optimal, where $\phi$ is an arbitrary phase within $[0,2 \pi)$) does not affect the beam gains~\cite{8415781,8294044}. Specifically, we can choose $\phi$ such that $\mathbf{h}_{k,i}^{H} \mathbf{w}_{k}$ is real-valued. With this, (P10) can be relaxed into the following convex problem 
\begin{alignat}{3}
\textnormal{(P11):} & \quad \underset{\mathbf{w}_{k}}{\text{max}} &\ &\mathbf{h}_{k,i}^{H} \mathbf{w}_{k}, \ k \in [1, ..., K] \label{eq:objective_BF2_relaxed}\\
& \quad \ \text{s.t.} &  &  \textnormal{Im} (\mathbf{h}_{k,i}^{H} \mathbf{w}_{k})=0, \ k \in [1, ..., K]   \label{eq:constraint_BF2_relaxed_1} \\
&  &  & \textnormal{Re}(\mathbf{h}_{k,i}^{H} \mathbf{w}_{k}) \geq \sqrt{\frac{(2^{\frac{L_{k,j}}{t_{k,i} W}}-1)(\sigma^2)}{\beta_{k,j}p_{k,i}}}, \nonumber \\
&  &  & \ k \in [1, ..., K]  \label{eq:constraint_BF2_relaxed_2} \\
&  &  & \textnormal{Re}(\mathbf{h}_{k,i}^{H} \mathbf{w}_{k}) \geq \sqrt{\frac{ (2^{\frac{L_{k,i}}{t_{k,i} W}}-1)(\sigma^2)}{p_{k,i} (1-\beta_{k,j}2^{\frac{L_{k,i}}{t_{k,i} W}})}}, \nonumber \\ 
&  &  & \ k \in [1, ..., K] \label{eq:constraint_BF2_relaxed_3} \\
&  &  & |[\mathbf{w}_{k}]_n| \leq \frac{1}{\sqrt{N}}, \ n \in [1, ..., N], \ k \in [1, ..., K] \label{eq:constraint_BF2_relaxed_4}
\end{alignat} 
\noindent where (P11) can be solved using standard convex optimization tools~\cite{boyd2004convex}. Let us denote the obtained beamforming vector as $\mathbf{w}^*_{k}$. These vectors are used to update the beam gain $c^*_{k,i}=|\mathbf{h}_{k,i}^{H} \mathbf{w}^*_{k}|^2$. Finally, we can use the obtained $c^*_{k,i}$ to get the corresponding power allocation variables $\{p^*_{k,j},p^*_{k,i}\}$ that maximize the original \ac{CEE} problem. To do so, we solve (P12) in the same way we previously solved (P7).

\section{Results and Discussions}
\label{Section: Results and discussions}

In this section, we present detailed numerical results to evaluate the performance of the proposed \ac{NOMA}-assisted cooperative \ac{THz}-\ac{SIMO} \ac{MEC} system for the two energy-related optimizing problems. Specifically, we present the total users' energy consumption minimization-related results followed by total users' \ac{CEE} maximization-related results. In these performance curves, Monte-Carlo simulations that are averaged over $100$ different users' location realizations are used. A list of the default used parameters is provided in Table~\ref{Tab: Simulation Parameters}. Beyond this list, the noise power is $\sigma^2=10\,\textnormal{log}_{10}(W)+~N_f-~174$ dBm, with noise figure $N_f=10$ dB~\cite{9264161}. For comparison purposes, we consider the following baseline schemes:

1) \textbf{THz-NOMA scheme with partial-offloading model and without offloading:} As discussed in the introduction, in the partial-offloading model, the user tasks are partitioned into a local computing part and a remote offloading part. While for the system without offloading the tasks are only locally computed at the user-side. The energy consumption analysis for these baseline schemes are provided next. 
\begin{itemize}[label=$\blacksquare$]
\item \textit{Local computing for the partial-offloading model}
\end{itemize}

\begin{table}[!t]
\centering
\caption{SYSTEM PARAMETERS}
\label{Tab: Simulation Parameters}
\resizebox{0.5\textwidth}{!}{%
\begin{tabular}{|l|l|}
\hline
Parameter name, notation & Value \\ \hline

CPU cycles for each one bit at the users, $\xi_j$, $\xi_i$ & $1$ cycle/bit~\cite{8951269}\\
Effective capacitance coefficient for the users, $\kappa_i$, $\kappa_j$ & $10^{-27}$~\cite{8488502}\\
The length of the time block, $T$ & $0.25$ second~\cite{8488502}\\
Number of input computation bits per user, ($L_{k,j}$, $L_{k,i}$) & ($1,1$) Gbits\\

Number of antennas at the \ac{BS}, $N$ & $4$~\cite{9264161}\\ 

The user's antenna transmission and reception gains, (e.g., $G_{t,i}$ and $G_{r,i}$) & $3$ dBi~\cite{lai2014energy}\\

\ac{BS} antenna reception gain, $G_{r,\textnormal{BS}}$ & $26$ dBi~\cite{8733134}\\

Number of available beams, $B$ & $20$~\cite{maraqa2021energy} \\

Beam-width angle, $\bar\theta_b$ & $6^{\circ}$~\cite{9295330}\\
\ac{BS} sector angular coverage, $\bar\theta$ &$120^{\circ}$~\cite{maraqa2021energy} \\

Power allocation factor per cell-edge user, $\beta_{k,j}$ & $0.3$~\cite{9348649}\\

Power allocation factor per cell-center user, $\beta_{k,i}$ & $0.7$~\cite{9348649}\\

The maximum allowed transmit power, $p_{\textnormal{max}}$ & $9$ dB\\

Water vapor molecules absorption coefficient, $k_{\textnormal{abs}}(f)$ & $0.28 \  \textnormal{m}^{-1}$~\cite{gordon2017hitran2016}\\

Center frequency of the considered \ac{THz} window, & $3.42$ THz~\cite{singh2020analytical} \\
Contiguous bandwidth of the considered \ac{THz} window, $W$ & $137$ GHz~\cite{singh2020analytical}\\

Speed of light, $c$ & $3 \times 10^8$ m/s\\

Number of users, $K_\textnormal{Coop}+K_\textnormal{Edge}$ & $4-20$ users~\cite{9115278}\\
User distribution & Uniform random~\cite{9264161}\\
\ac{BS} coverage region, $d_{\textnormal{Center}}+d_{\textnormal{Edge}}$ & $3+2=5$ meters \\
The initialization parameters of the Dinkelbach algorithm& $\lambda_n=0.01$, $\epsilon=10^{-5}$ \\
\hline
\end{tabular}%
}
\end{table}

Here, we provide the analysis of the energy consumption at the users that arises from the local computation of the tasks that are not offloaded. The local computation at users can be executed during the whole time duration $T$. In the $k$-th \ac{NOMA} user-pair, we can express the local computing energy consumption for the $j$-th user as~\cite{9348649,8488502} 
\begin{equation} \label{eq: local computing of user j}
  E_{k,j}^{\textnormal{loc}}= \sum_{a=1}^{\xi_j L_{k,j}^{\textnormal{loc}}} \kappa_j f_{j,a}= \frac{\kappa_j \xi_j^3 (L_{k,j}^{\textnormal{loc}})^3}{(t_{k,j}+ t_{k,i})^2},
\end{equation}
\begin{equation} \label{eq: CPU frequency}
  f_{j,a}=\frac{\xi_j L_{k,j}^{\textnormal{loc}}}{t_{k,j}+ t_{k,i}}, \forall a \in \{ 1, ..., \xi_j L_{k,j}^{\textnormal{loc}}\},
\end{equation}
\noindent where $\xi_j$ is the number of \ac{CPU} cycles needed to execute one bit in a task at the $j$-th user. $\kappa_j$ is the effective capacitance coefficient at the $j$-th user. $f_{j,a}$ is the \ac{CPU} frequency at the $a$-th cycle, where $a \in \{ 1, ..., \xi_j L_{k,j}^{\textnormal{loc}}\}$. In~\eqref{eq: CPU frequency}, to facilitate the calculations here and similar to~\cite{9348649,8488502}, we assume that the \ac{CPU} frequencies of the different \ac{CPU} cycles are identical. In a similar manner, we can write the local computing energy consumption for the $i$-th user as~\cite{9348649,8488502}  
\begin{equation} \label{eq: local computing of user i}
  E_{k,i}^{\textnormal{loc}}= \frac{\kappa_i \xi_i^3 (L_{k,i}^{\textnormal{loc}})^3}{(t_{k,j}+ t_{k,i})^2}.
\end{equation}
\indent Overall, the energy consumption for the data offloading and the local computing of users data in the $k$-th \ac{NOMA} user-pair is $E_{k,i}^{\textnormal{loc}}+E_{k,i}^{\textnormal{off}}+E_{k,j}^{\textnormal{loc}}+E_{k,j}^{\textnormal{off}}$. This formula is substituted in (P1) as an objective function, then we followed the same steps mentioned in the Appendix to find the optimal time and power allocation variables that minimize the total energy consumption of all the users in the system.
\begin{itemize}[label=$\blacksquare$]
\item \textit{Local computing for the system without offloading}
\end{itemize}

In this scheme, as there is no data offloading $E_{k,j}^{\textnormal{off}}$ and $E_{k,i}^{\textnormal{off}}$ are set to zero. Both $E_{k,j}^{\textnormal{loc}}$ and $E_{k,i}^{\textnormal{loc}}$ are calculated based on~\eqref{eq: local computing of user j} and~\eqref{eq: local computing of user i}, respectively.

2) \textbf{mmWave-NOMA scheme with full-offloading model:} In this baseline scheme, we set the center frequency of the considered \ac{mmWave} channel to $f_\textnormal{mmWave}=28$ GHz with a contiguous bandwidth of $W_\textnormal{mmWave}=2$~GHz~\cite{8454272}. Also, we set the noise power to $\sigma^2_{\textnormal{mmWave}}=- 40$ dBm.

3) \textbf{THz-OMA scheme with full-offloading model:} In this baseline scheme, we assume that the transmission is performed in three phases. Similar to the analysis in Section~\ref{Section: System Model and Problem Formulation}, let us focus on the $k$-th user-pair, and all other user pairs can be analyzed similarly. During the first phase, the $j$-th cell-edge user transmits its offloading data to the $i$-th cell-center user within $t_{k,j}^{\textnormal{OMA}}$. During the second phase, the $i$-th cell-center user transmits its offloading data to the \ac{BS} within $t_{k,i:\textnormal{Phase2}}^{\textnormal{OMA}}$. Finally, during the third phase, the $i$-th cell-center user transmits the received cell-edge's offloading data to the \ac{BS} within $t_{k,i:\textnormal{Phase3}}^{\textnormal{OMA}}$. Subsequently, one can express the achievable offloading data rate of the $j$-th user at the $i$-th user and the achievable offloading data rate at the \ac{BS} for the $i$-th user and $j$-th user, respectively,  as~\cite{tse2005fundamentals}
\begin{equation} \label{eq:  OMA rate at user-i}
  R_{k,i}^{j,\textnormal{OMA}}=W \log_2 (1 + \frac{p_{k,j}^{\textnormal{OMA}} |h_{k,ji}|^2}{\sigma^2}),
\end{equation}
\begin{equation} \label{eq:  OMA rate at BS for user-i}
  R_{k,\textnormal{BS}}^{i,\textnormal{OMA}}= \textnormal{0.5} W \log_2 (1 + \frac{p_{k,i}^{\textnormal{OMA}} |\mathbf{h}_{k,i}^{H} \mathbf{w}_{k}|^2}{\textnormal{0.5} \sigma^2}),
\end{equation}
\begin{equation} \label{eq:  OMA rate at BS for user-j}
  R_{k,\textnormal{BS}}^{j,\textnormal{OMA}}= \textnormal{0.5} W  \log_2 (1 + \frac{p_{k,i}^{\textnormal{OMA}} |\mathbf{h}_{k,i}^{H} \mathbf{w}_{k}|^2}{\textnormal{0.5} \sigma^2}).
\end{equation}
\indent Also, the offloaded data processed by the \ac{MEC} server pool at the \ac{BS} for the $i$-th user and $j$-th user are given as
\begin{equation} \label{eq: OMA offloading data of user i}
  L_{k,i}^{\textnormal{off,OMA}}= t_{k,i:\textnormal{Phase2}}^{\textnormal{OMA}} R_{k,\textnormal{BS}}^{i,\textnormal{OMA}},
\end{equation}
\begin{equation} \label{eq: OMA offloading data of user j}
  L_{k,j}^{\textnormal{off,OMA}}= \textnormal{min} \{t_{k,j}^\textnormal{OMA} R_{k,i}^{j,\textnormal{OMA}}, t_{k,i:\textnormal{Phase3}}^{\textnormal{OMA}} R_{k,\textnormal{BS}}^{j,\textnormal{OMA}}\}.
\end{equation}
\indent Moreover, the energy consumption for data offloading of the $i$-th user and the $j$-th user can be expressed, respectively, as
\begin{equation} \label{eq: OMA energy of user i}
E_{k,i}^{\textnormal{off}}=(t_{k,i:\textnormal{Phase2}}^{\textnormal{OMA}}+t_{k,i:\textnormal{Phase3}}^{\textnormal{OMA}}) p_{k,i}^\textnormal{OMA},
\end{equation}
\begin{equation} \label{eq: OMA energy of user j}
 E_{k,j}^{\textnormal{off}}=t_{k,j}^\textnormal{OMA} p_{k,j}^\textnormal{OMA}.
\end{equation}
\indent By substituting~\eqref{eq:  OMA rate at user-i}-\eqref{eq: OMA energy of user j} into (P1) and following the same steps mentioned in the Appendix, one can find the optimal time and power allocation that minimize the total energy consumption of all the users in the \ac{OMA} counterpart scheme, as expressed in~\eqref{eq: OMA}. For a fair comparison between the proposed \ac{THz}-\ac{NOMA} scheme with full-offloading model and this baseline scheme we assume that $t_{k,i}=t_{k,i:\textnormal{Phase2}}^{\textnormal{OMA}}+t_{k,i:\textnormal{Phase3}}^{\textnormal{OMA}}$. 

\begin{gather} \label{eq: OMA}
  t_{k,j}^{*,\textnormal{OMA}} = \frac{T}{K}- \frac{L_{k,i}}{W \log_2 (1 +\frac{\beta_{k,i}}{\beta_{k,j}})} , \\ 
  t_{k,i:\textnormal{Phase2}}^{*,\textnormal{OMA}} = t_{k,i:\textnormal{Phase3}}^{*,\textnormal{OMA}} = \frac{ 0.5\ L_{k,i}}{W \log_2 (1 +\frac{\beta_{k,i}}{\beta_{k,j}})} , \\  
  p_{k,j}^{*,\textnormal{OMA}}=\frac{\sigma^2(2^{L_{k,j}/W t_{k,j}^{*,\textnormal{OMA}}}-1)}{|h_{k,ji}|^2} , \\ p_{k,i}^{*,\textnormal{OMA}}=\frac{0.5\ \sigma^2(2^{L_{k,j}/0.5\ W t_{k,i:\textnormal{Phase2}}^{*,\textnormal{OMA}}}-1)}{|\mathbf{h}_{k,i}^{H} \mathbf{w}_{k}|^2} .
\end{gather}

\indent Fig.~\ref{fig: benchmark with offloading models} illustrates the total energy consumption of the users, in Joules, for the proposed \ac{THz}-\ac{NOMA} system with the full-offloading model compared to its counterpart \ac{THz}-\ac{NOMA} system with the partial-offloading model as well as its counterpart system without offloading. For the partial-offloading model, we assume that $80$\% of the task input-bits for each user are offloaded and $20$\% of the task input-bits are locally computed. The first observation here for the three systems (i.e., Fig.~\ref{fig: Full offloading} - Fig.~\ref{fig: Without offloading}), is that as the number of users in these systems increases the total energy consumption increases. Likewise, as the number of task input-bits for each user increases, a somehow exponential increase in the total energy consumption of the users in these systems is observed; this is referred to the reason that with increasing the number of task input-bits, each user needs more power to transmit (offload) its bits to the \ac{BS} for remote computation. The second observation here is that for the systems with partial offloading and without offloading models, the users' energy consumption increases dramatically compared to the system with the full-offloading model, as a tremendous amount of energy is required to locally compute the large number of task input-bits in each user. In reality, this required energy might be unfeasible for the limited-battery mobile devices. Hence, our proposed \ac{THz}-\ac{NOMA} system with the full-offloading model has significant energy reduction for the systems that need several Gbits offloading data to process.

\begin{figure}[!t]
    \centering
   \vspace*{-.08in}
    \subfloat[Full-offloading model.]{
        \includegraphics[scale=0.485]{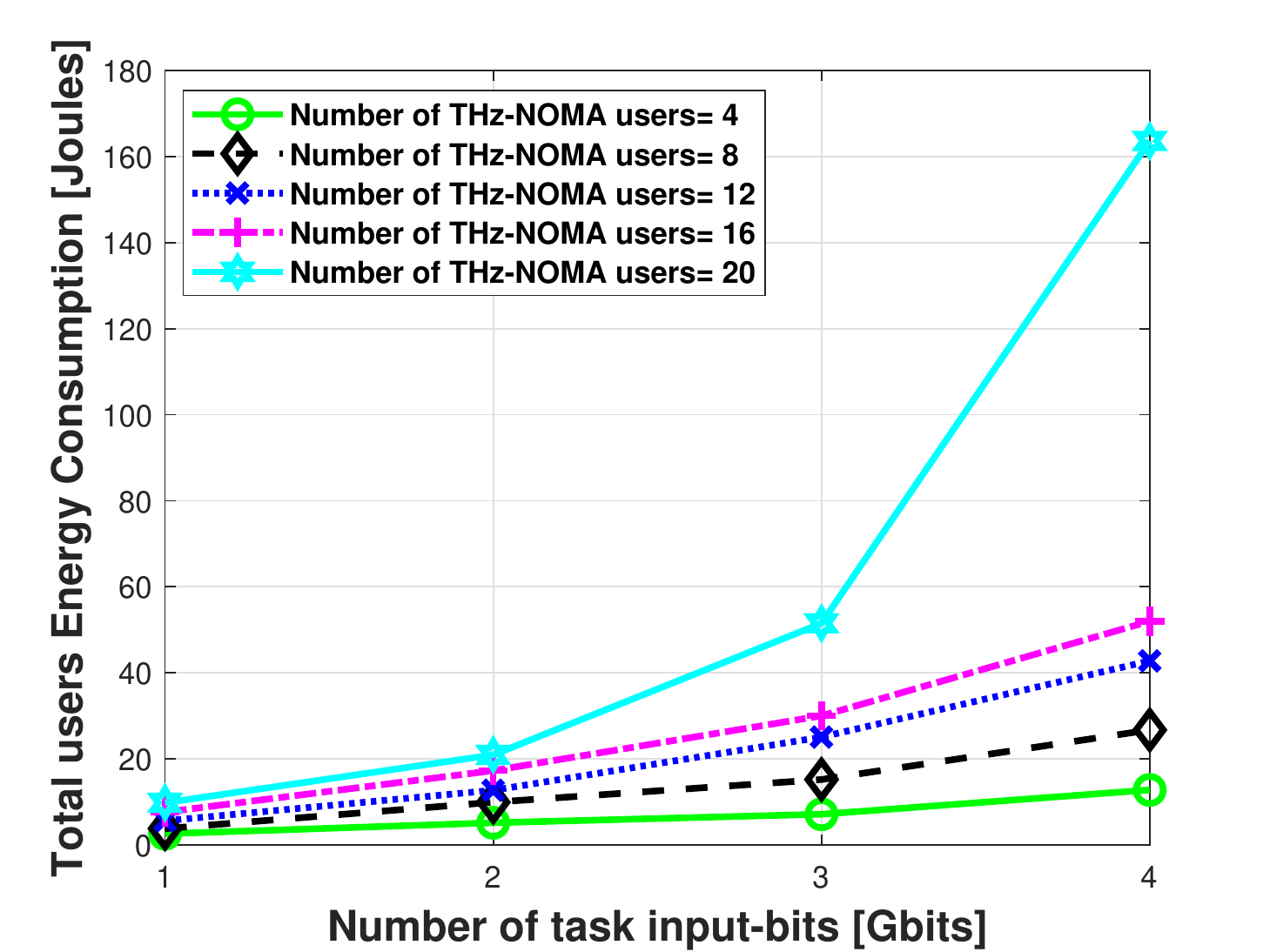}
        \label{fig: Full offloading}
    }
    \hfill
    \subfloat[Partial-offloading model.]{
       \includegraphics[scale=0.485]{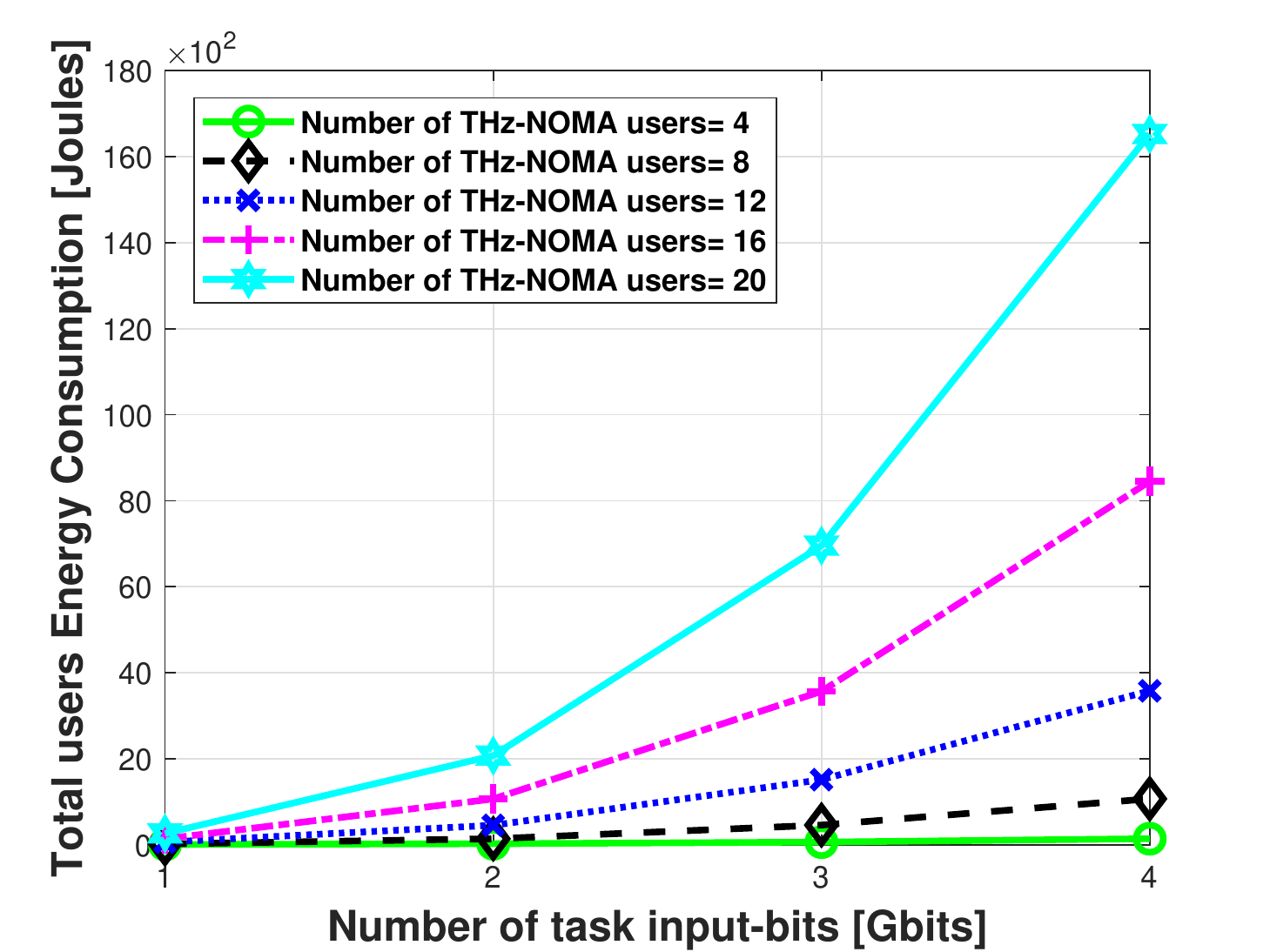}
        \label{fig: Partial offloading}
    }
    \hfill
   \subfloat[Without-offloading.]{
        \includegraphics[scale=0.485]{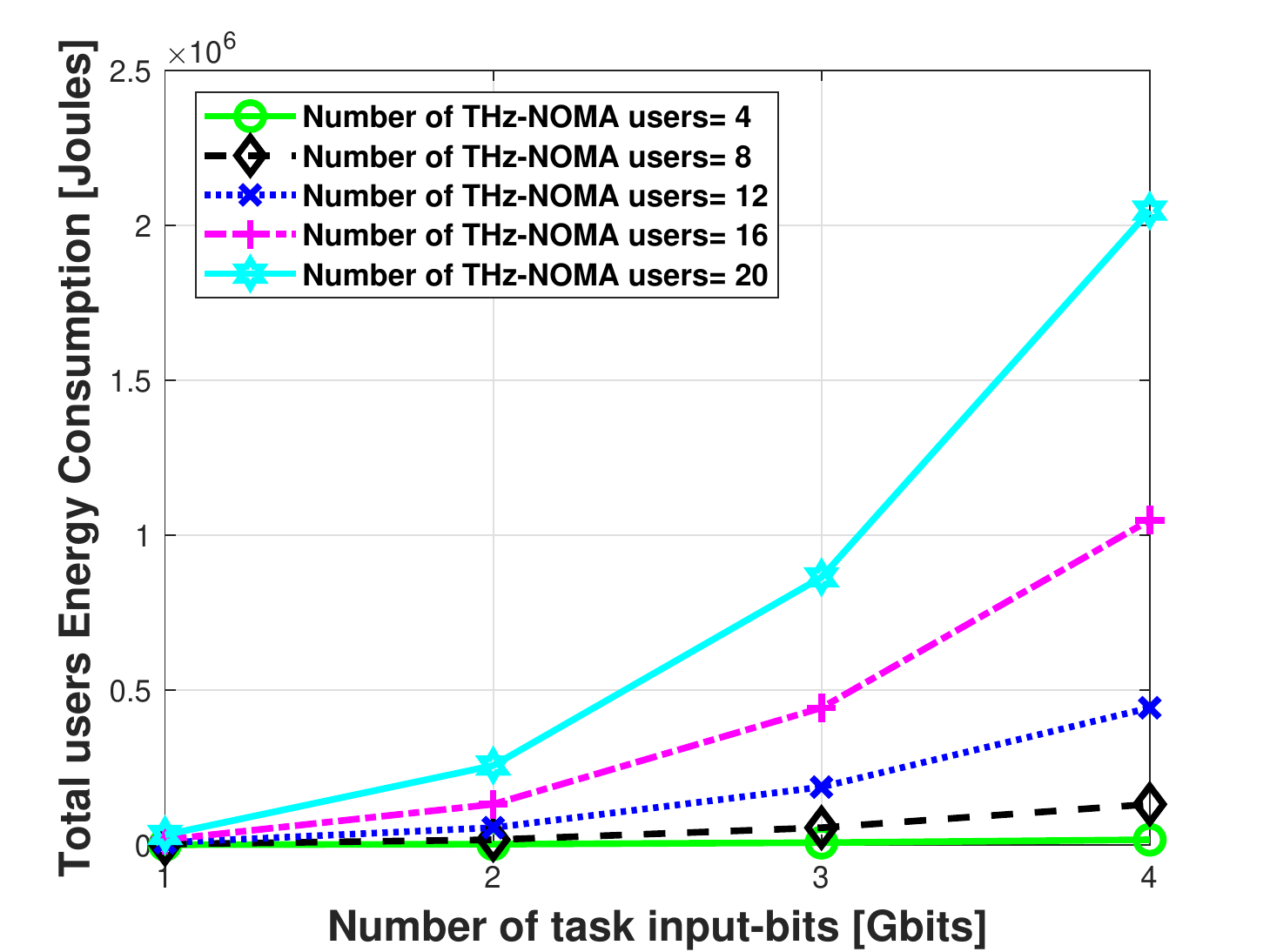}
       \label{fig: Without offloading}
    }
    \caption{The total users' energy consumption of the proposed \ac{THz}-\ac{NOMA} system for different offloading models.}
\label{fig: benchmark with offloading models}
\end{figure}

\begin{figure}[!t]
    \centering
   \vspace*{-.08in}
    \subfloat[]{
        \includegraphics[scale=0.485]{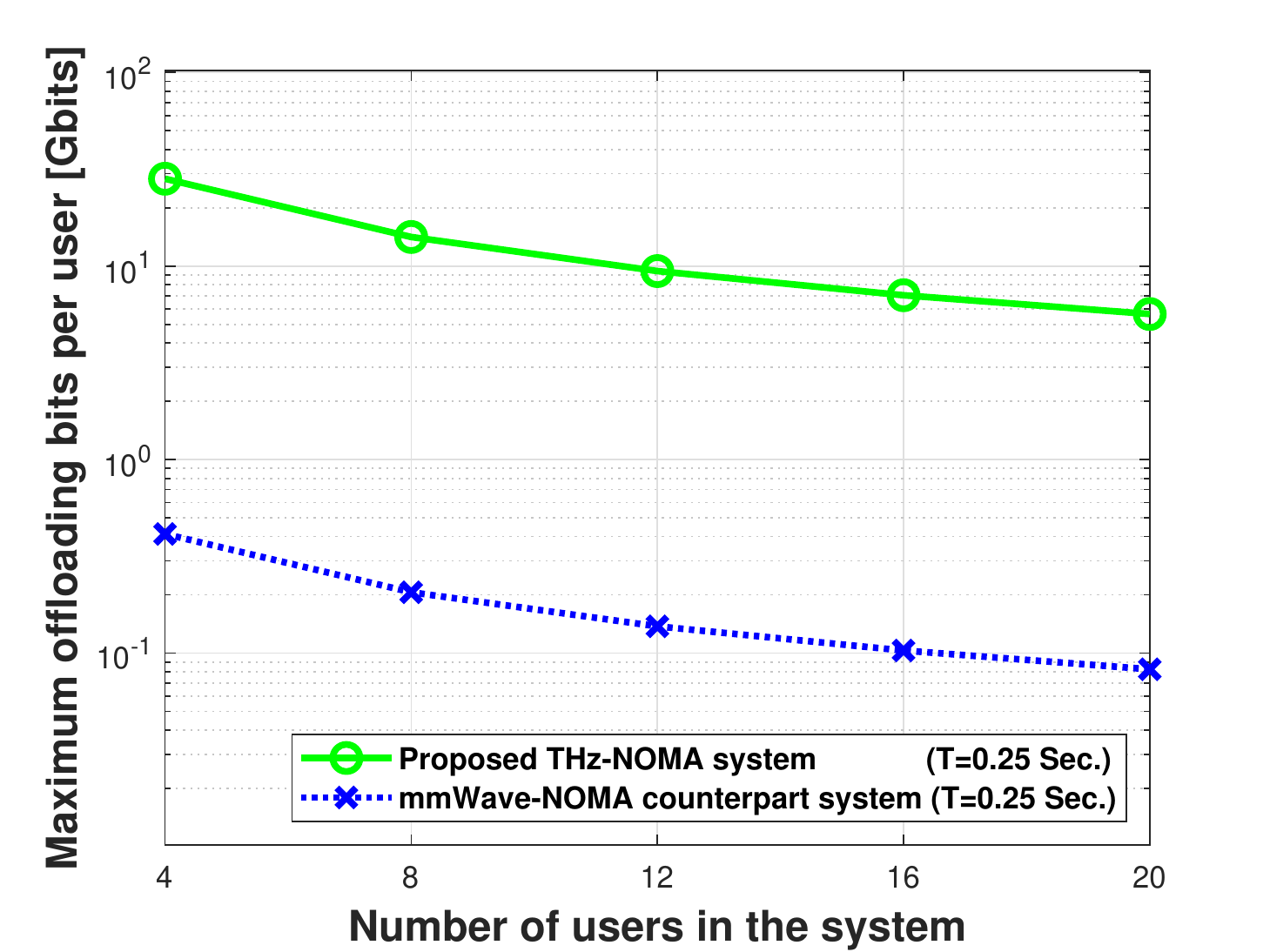} 
        \label{fig: Maximum offloading bits}
    }
    \hfill
    \subfloat[]{
       \includegraphics[scale=0.485]{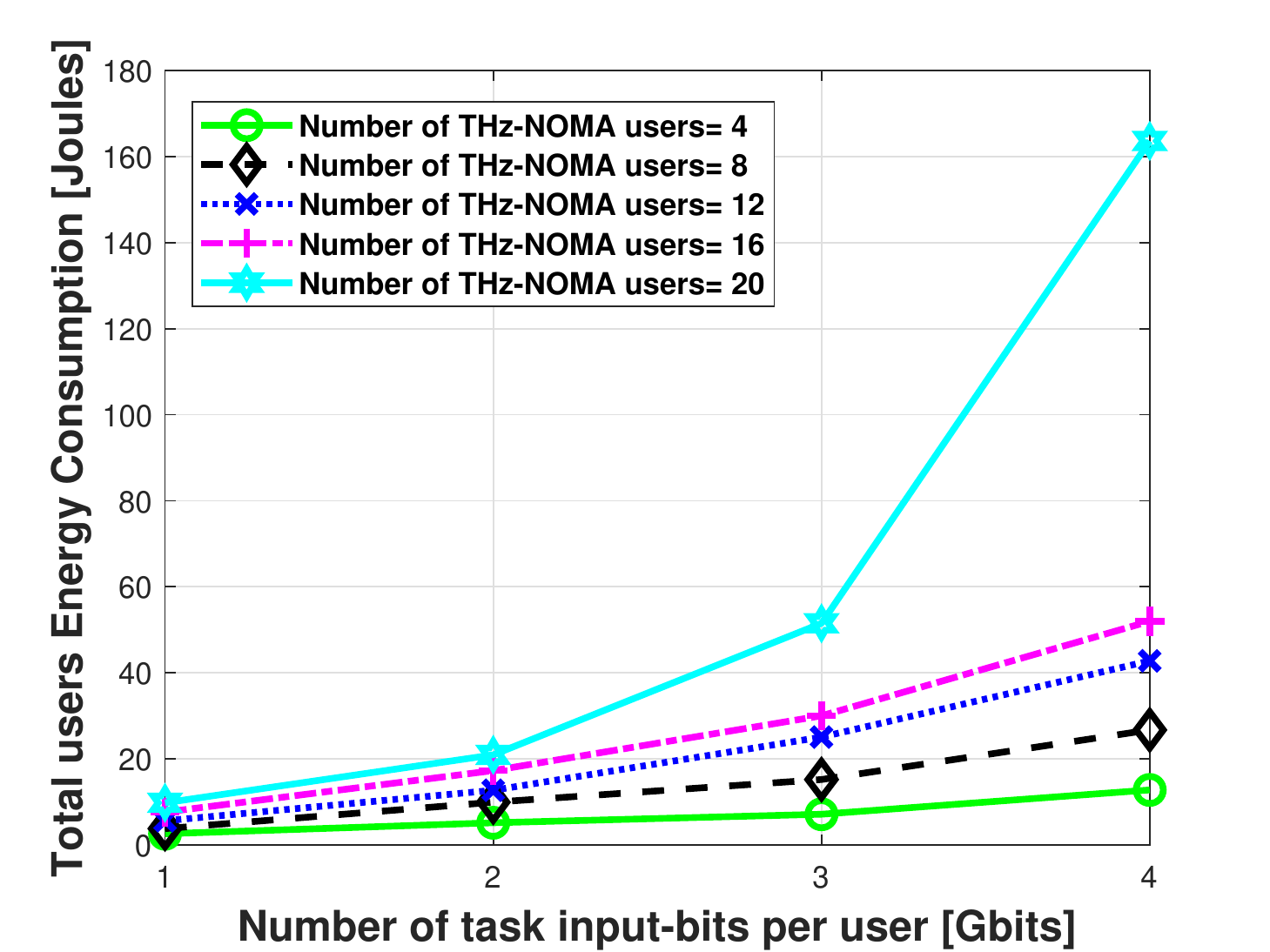} 
        \label{fig: Bits large}
    }
    \hfill
   \subfloat[]{
        \includegraphics[scale=0.485]{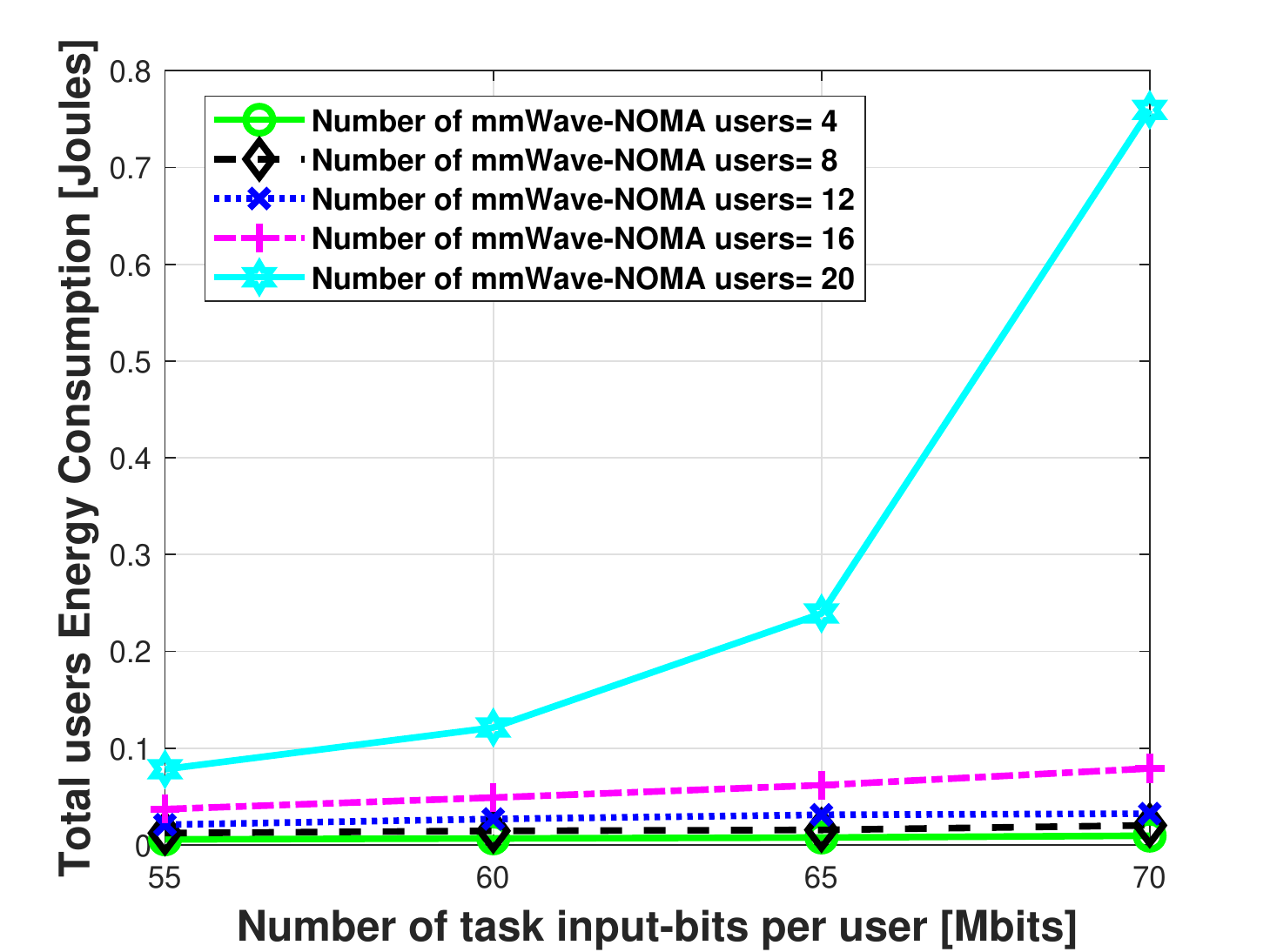} 
       \label{fig: Bits medium}
    }
    \caption{(a) The maximum offloading bits per user in the proposed \ac{THz}-\ac{NOMA} system as well as in its \ac{mmWave}-\ac{NOMA} counterpart. The total energy consumption of the users for different systems, (b) proposed \ac{THz}-\ac{NOMA} system, (c) \ac{mmWave}-\ac{NOMA} counterpart system.} 
\label{fig: Bits}
\end{figure}

In Fig.~\ref{fig: Bits}, we present the maximum offloading bits per user, in Gbits, as well as the total energy consumption of the users for the proposed \ac{THz}-\ac{NOMA} system compared to its \ac{mmWave}-\ac{NOMA} counterpart system. From Fig.~\ref{fig: Maximum offloading bits}, one can see that for the same latency constraint (i.e., $T=0.25$ second), the proposed \ac{THz}-\ac{NOMA} system can offload a considerably larger amount of data bits from each user to the \ac{BS}. We include the \ac{mmWave}-\ac{NOMA} counterpart scheme to answer the following question: Is it necessary to go to \ac{THz} band to serve users with several Gbits of raw data? To answer this question, by comparing Fig.~\ref{fig: Bits large} to Fig.~\ref{fig: Bits medium}, we can see that despite that the total users' energy consumption of \ac{mmWave}-\ac{NOMA} counterpart system is much smaller compared to the proposed \ac{THz}-\ac{NOMA} system, but the \ac{mmWave}-\ac{NOMA} counterpart system is only capable of offering tens of Mbits of data for each user to offload, while the proposed \ac{THz}-\ac{NOMA} system can offer several Gbits of data for each user to offload.

In Fig.~\ref{fig: NOMAvsOMA}, the total energy consumption of all users is provided for the proposed \ac{THz}-\ac{NOMA} system compared to its \ac{THz}-\ac{OMA} counterpart. In this figure, we notice that the total users' energy consumption of the proposed \ac{THz}-\ac{NOMA} system is slightly smaller, a few Joules, than the total users' energy consumption of the \ac{THz}-\ac{OMA} counterpart. This is because, with the assumption $t_{k,i}=t_{k,i:\textnormal{Phase2}}+t_{k,i:\textnormal{Phase3}}$, the time allocation of the \ac{THz}-\ac{NOMA} system and its \ac{THz}-\ac{OMA} counterpart are the same. Also, there is no change in the values of $p_{k,j}, \ \forall k \in [1, ..., K]$. The only change occurs in the values of $p_{k,i}, \ \forall k \in [1, ..., K]$. These $p_{k,i}$ values approximately increase by $7.8$x and this increase, in turn, increases the total users' energy consumption of the \ac{THz}-\ac{OMA} counterpart system slightly. Keeping in mind that the \ac{THz}-\ac{NOMA} system is more resource-efficient than the \ac{THz}-\ac{OMA} system, as one transmission is made towards the \ac{BS} compared to two transmissions for the \ac{THz}-\ac{OMA} system. 

\begin{figure}
    \centering
    \begin{minipage}{0.485\textwidth}
        \centering
        \includegraphics[scale=0.485]{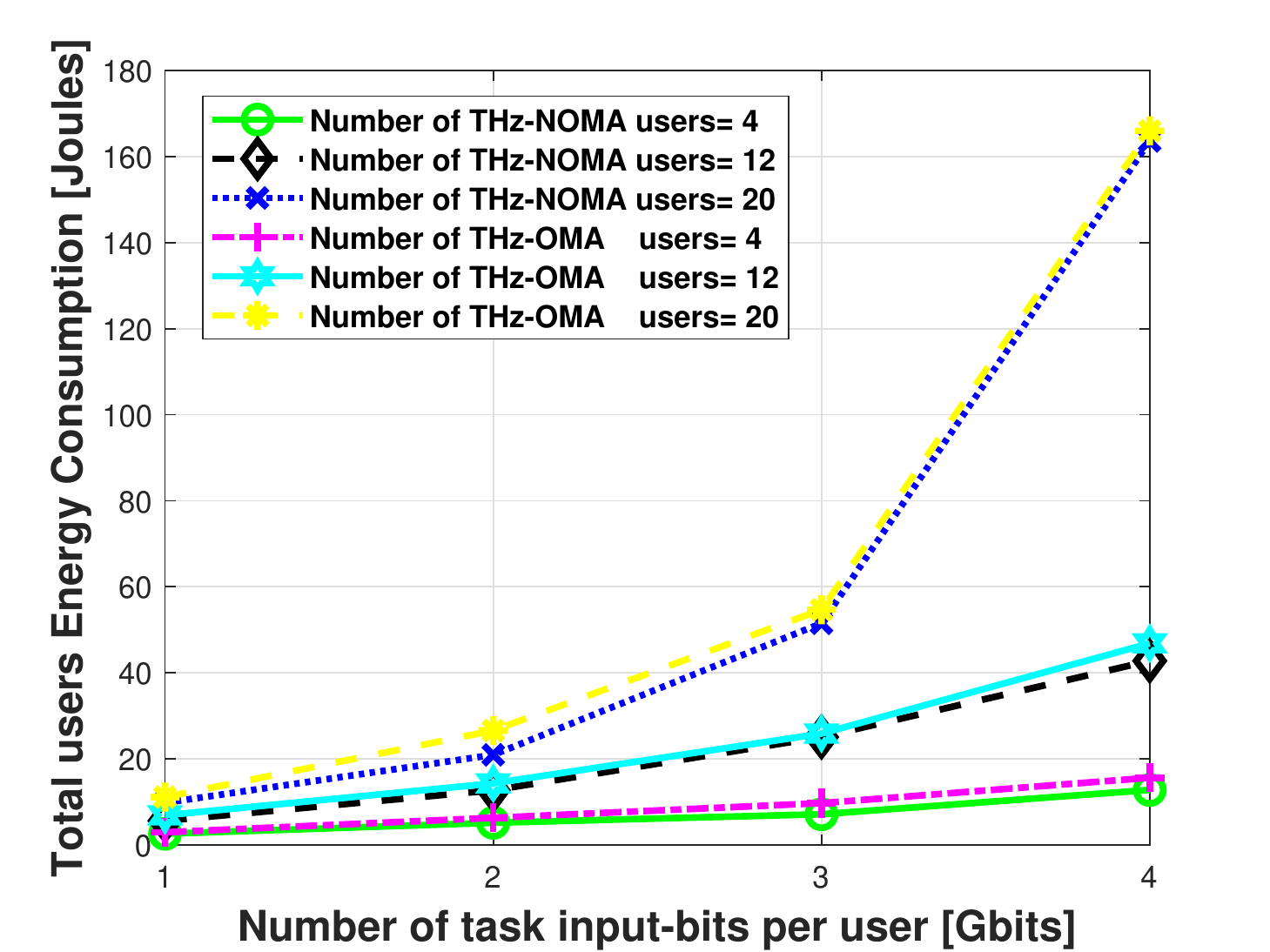} 
        \vspace{-0.5em}
        \caption{The total energy consumption of all users for the proposed \ac{THz}-\ac{NOMA} system compared to its \ac{THz}-\ac{OMA} counterpart.}
        \label{fig: NOMAvsOMA}
    \end{minipage}\hfill
    \begin{minipage}{0.485\textwidth}
        \centering
        \includegraphics[scale=0.54]{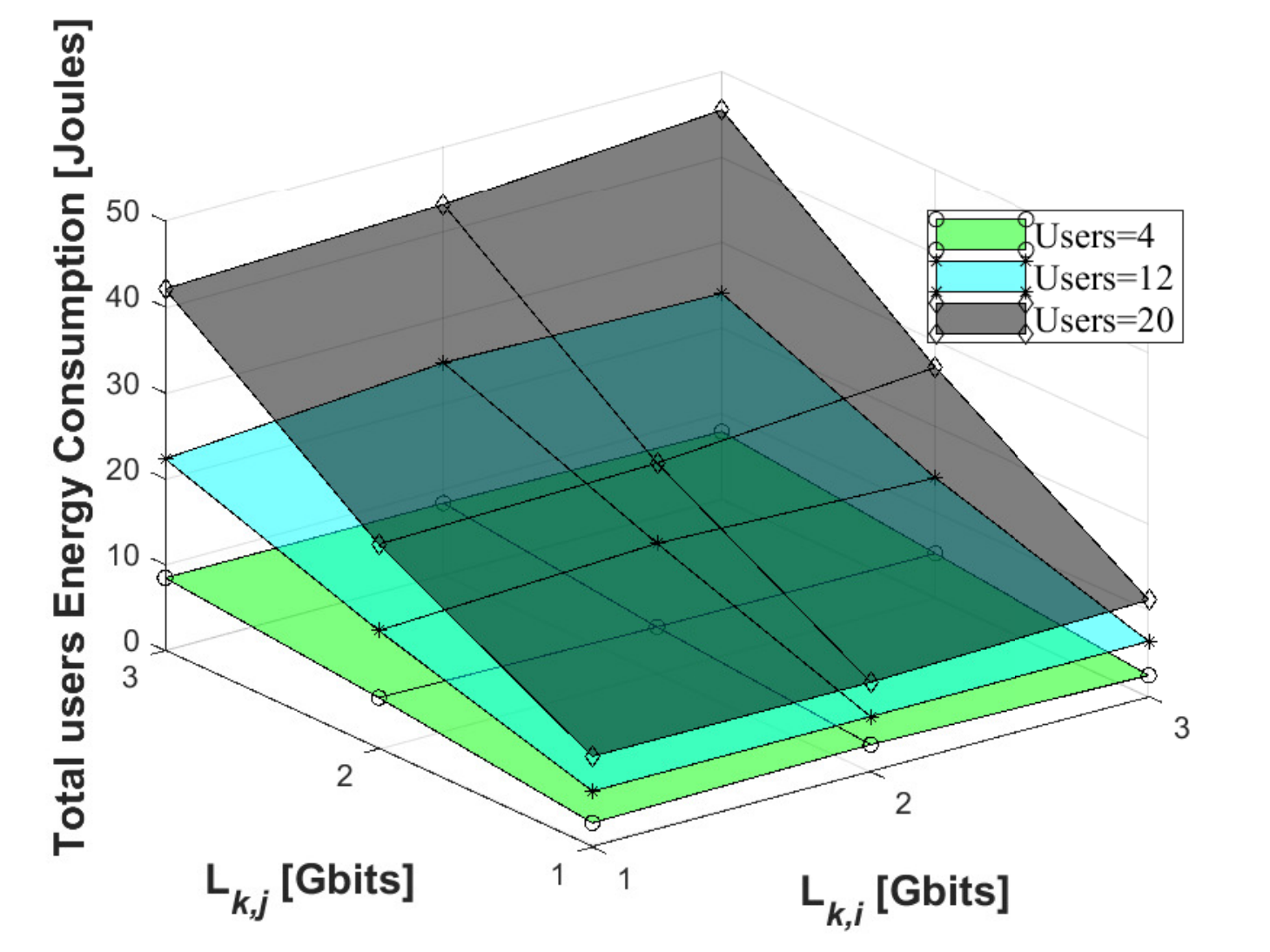} 
        \vspace{-0.5em}
        \caption{The offloading bits for the cell-center and cell-edge users in each \ac{NOMA} user-pair (i.e., $[L_{k,i},L_{k,j}], \ \forall k \in [1, ..., K]$) v.s. energy.}
        \label{fig: Changing_L_Energy}
    \end{minipage}
\end{figure}

Fig.~\ref{fig: Changing_L_Energy} shows the total users' energy consumption performance of the proposed \ac{THz}-\ac{NOMA} system while considering a variable offloading bits for the cell-edge users and the cell-center users in each \ac{NOMA} user-pair. In this figure, the increase in the offloading data of the cell-center users (i.e., $L_{k,i}, \ \forall k \in [1, ..., K]$) does not affect the total users' energy consumption much as compared to the effect of increasing the offloading data of the cell-edge users (i.e., $L_{k,j}, \ \forall k \in [1, ..., K]$). This is referred to the reason that the offloading bits of each cell-edge user is transmitted twice, (i.e., the first time is from each cell-edge user to each cell-center user, and the second time is from each cell-center user to the \ac{BS}) compared to the offloading bits of each cell-center user which is transmitted only once (i.e., from each cell-center user to the \ac{BS}). The increase in the total users' energy consumption for both cases of increasing $L_{k,i}$ or increasing $L_{k,j}$ is related to the reason that the required users' transmission powers needed to offload all these offloading bits add up and subsequently increase the total users' energy consumption in the system (recall that the total users' energy consumption is $E_{k,i}^{\textnormal{off}}+E_{k,j}^{\textnormal{off}}=t_{k,i} p_{k,i}+t_{k,j} p_{k,j}, \ \forall k \in [1, ..., K]$).

\begin{figure}[!t]
   \centering
   \vspace*{-.02in}
    \subfloat[]{
        \hspace*{-.15in}
        \includegraphics[scale=0.485]{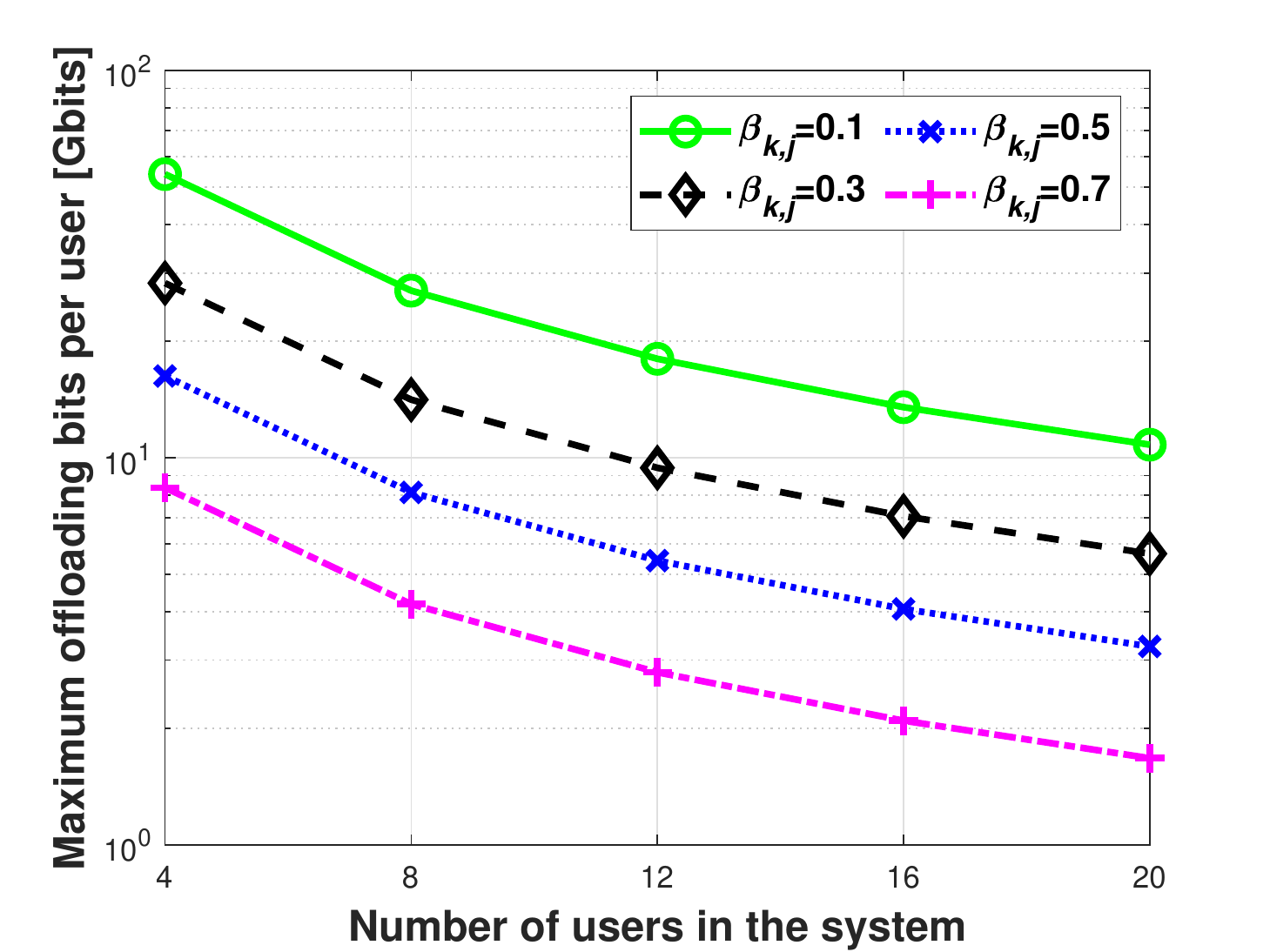}
        \label{fig: Changing_betaj_offloading_bits}
    }
    \hfill
    \subfloat[]{
        \hspace*{-.15in}
       \includegraphics[scale=0.485]{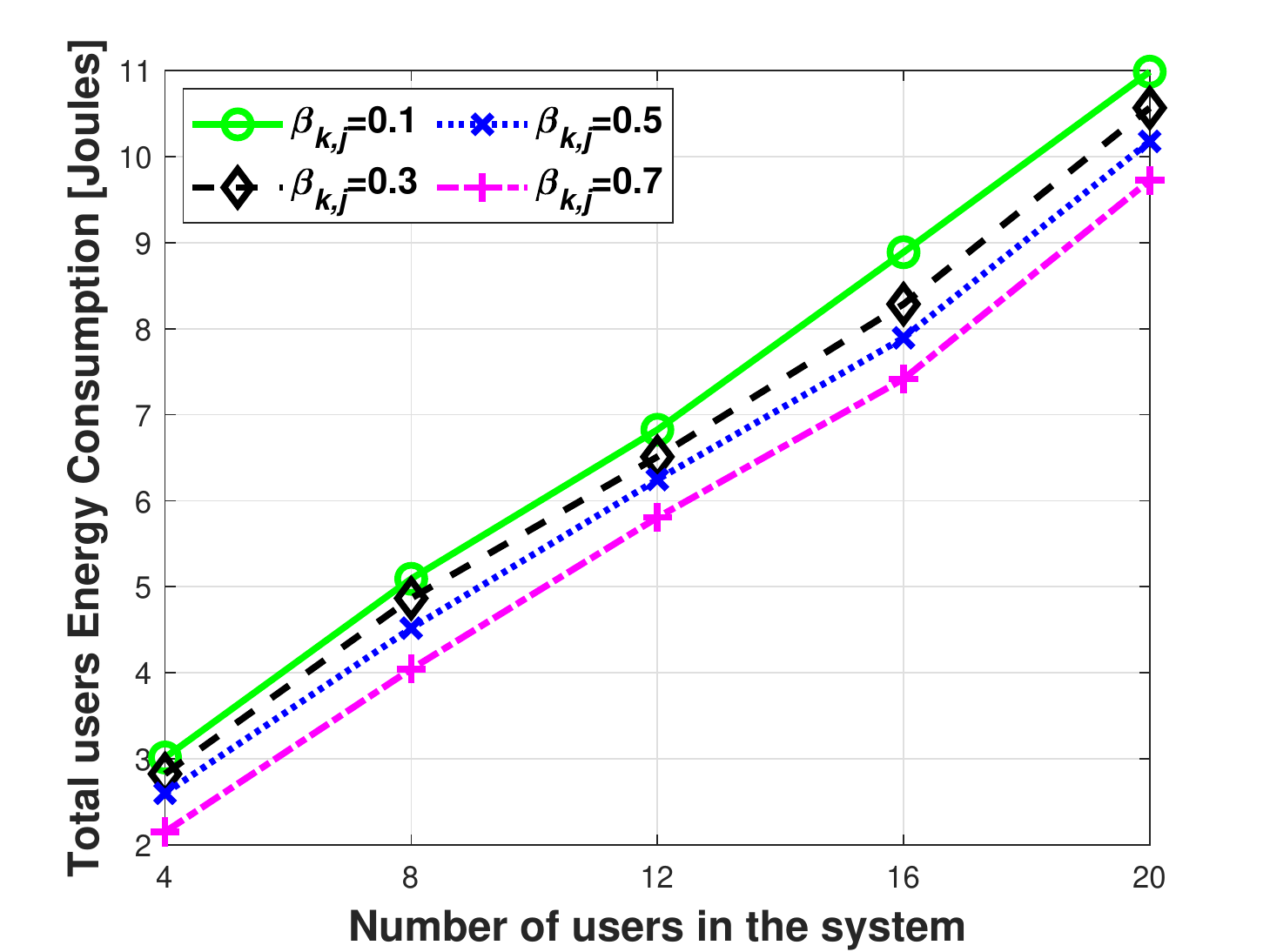}
        \label{fig: Changing_betaj_Energy}
    }
    \caption{The performance of the proposed \ac{THz}-\ac{NOMA} system while changing $\beta_{k,j}$ for (a) the maximum offloading bits per user, and (b) the total energy consumption of all users.}  
\label{fig: Changing_betaj}
\end{figure}

\begin{table*}[!t]
\centering
\caption{POTENTIAL THz CHANNEL WINDOWS}
\label{Table: Potential THz channel windows}
\resizebox{\textwidth}{!}{%
\begin{tabular}{|l|c|c|c|c|c|c|c|c|c|}
\hline
& f1 & f2 & f3 & f4 & f5 & f6 & f7 & f8 & f9 \\ \hline
Center frequency (THz)~\cite{singh2020analytical}                    & 1.51 & 2.52 & 3.42 & 4.91 & 5.72 & 6.57 & 7.19 & 8.83 & 9.57 \\ \hline
Contiguous bandwidth (GHz)~\cite{singh2020analytical}                & 169 & 82 & 137 & 113 & 126 & 120 & 246 & 217 & 230 \\ \hline
Absorption coefficient ($\textnormal{m}^{-1}$)~\cite{gordon2017hitran2016} & 0.1432 & 0.48 & 0.28 & 0.32 & 0.32 & 0.34 & 0.1344 & 0.1033 & 0.0779 \\ \hline
\end{tabular}%
}
\end{table*}

Fig.~\ref{fig: Changing_betaj} presents the maximum offloading bits per user as well as the total energy consumption of all users for the proposed \ac{THz}-\ac{NOMA} system while changing the value of the \ac{NOMA} power fractions $\beta_{k,j}$. In Fig.~\ref{fig: Changing_betaj_offloading_bits}, the maximum offloading bits per user in the system decreases as the number of served users in the system increase and the value of $\beta_{k,j}$ increases. Such a decrease can be deduced by looking at~\eqref{eq: approx. partial diff. for lambda3} in the Appendix, where the increase in the number of users in the system is reflected by a decrease in the maximum transmission (offloading) time for each user (Recall that $t_{k,j}+t_{k,i}= T/K$). On the other hand, when the value of $\beta_{k,j}$ increases, the argument of the logarithm, in~\eqref{eq: approx. partial diff. for lambda3}, decreases which consequently decreases the maximum offloading bits per user. Fig.~\ref{fig: Changing_betaj_Energy} is obtained while assuming that the required offloading bits for cell-edge users and cell-center users are equal (i.e., $L_{k,j}$=$L_{k,i}$= $1$ Gbits). In Fig.~\ref{fig: Changing_betaj_Energy}, as we increase the value of $\beta_{k,j}$ for a specific number of users, the total users' energy consumption decreases. This is since, based on~\eqref{eq: offloading data of user i} and~\eqref{eq: offloading data of user j}, as $\beta_{k,j}$ increases we allocate more time to the cell-center user, $t_{k,i}$, to offload the cell-center and cell-edge users' data, through the \ac{NOMA} scheme, with less required power, $p_{k,i}$, for that offloading process.

\begin{figure}[!t]
    \centering
   \vspace*{-.02in}
    \subfloat[]{
        \includegraphics[scale=0.485]{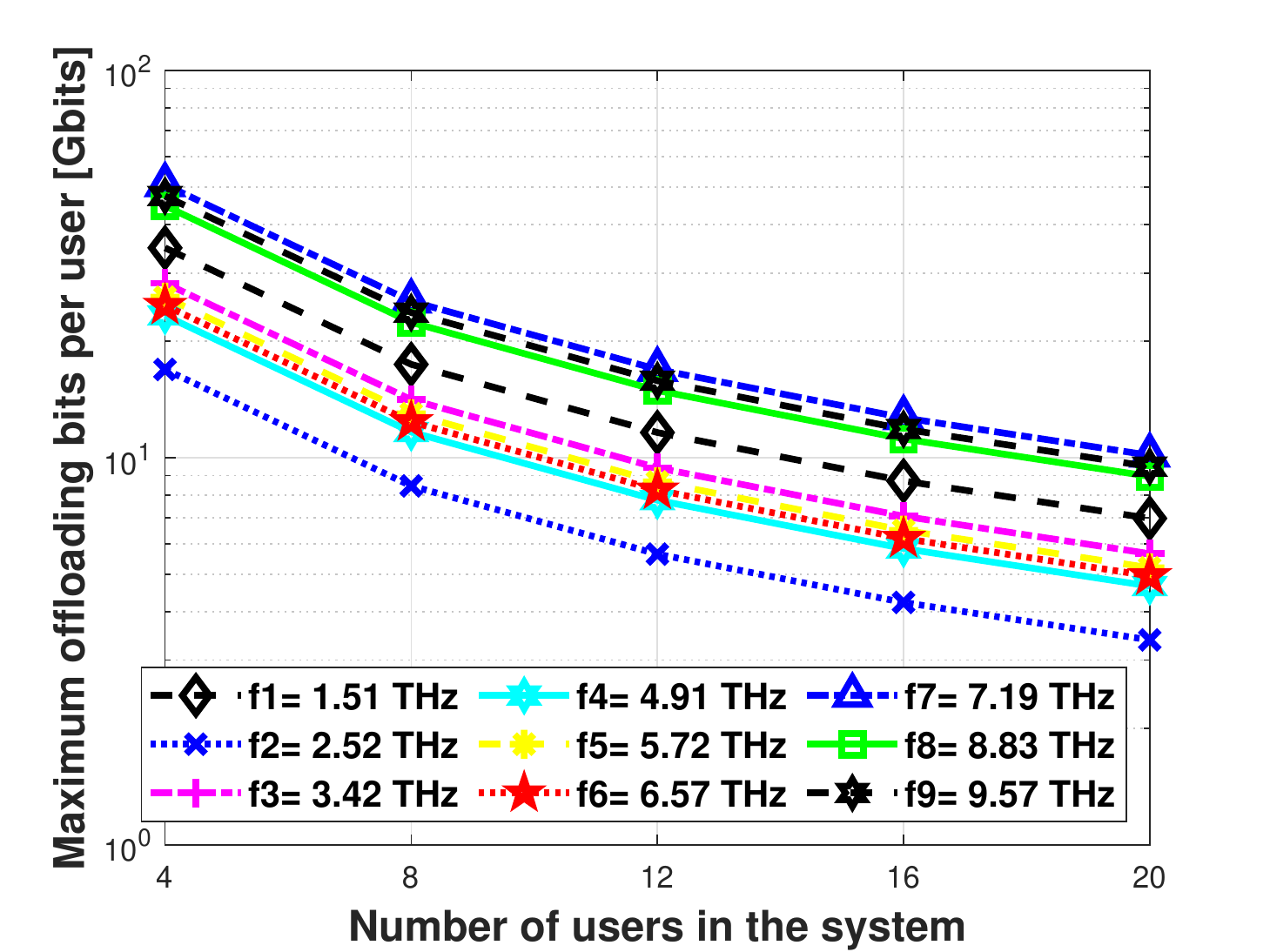}
        \label{fig: Changing_freqs_offloading_bits}
    }
    \hfill
    \subfloat[]{
       \includegraphics[scale=0.485]{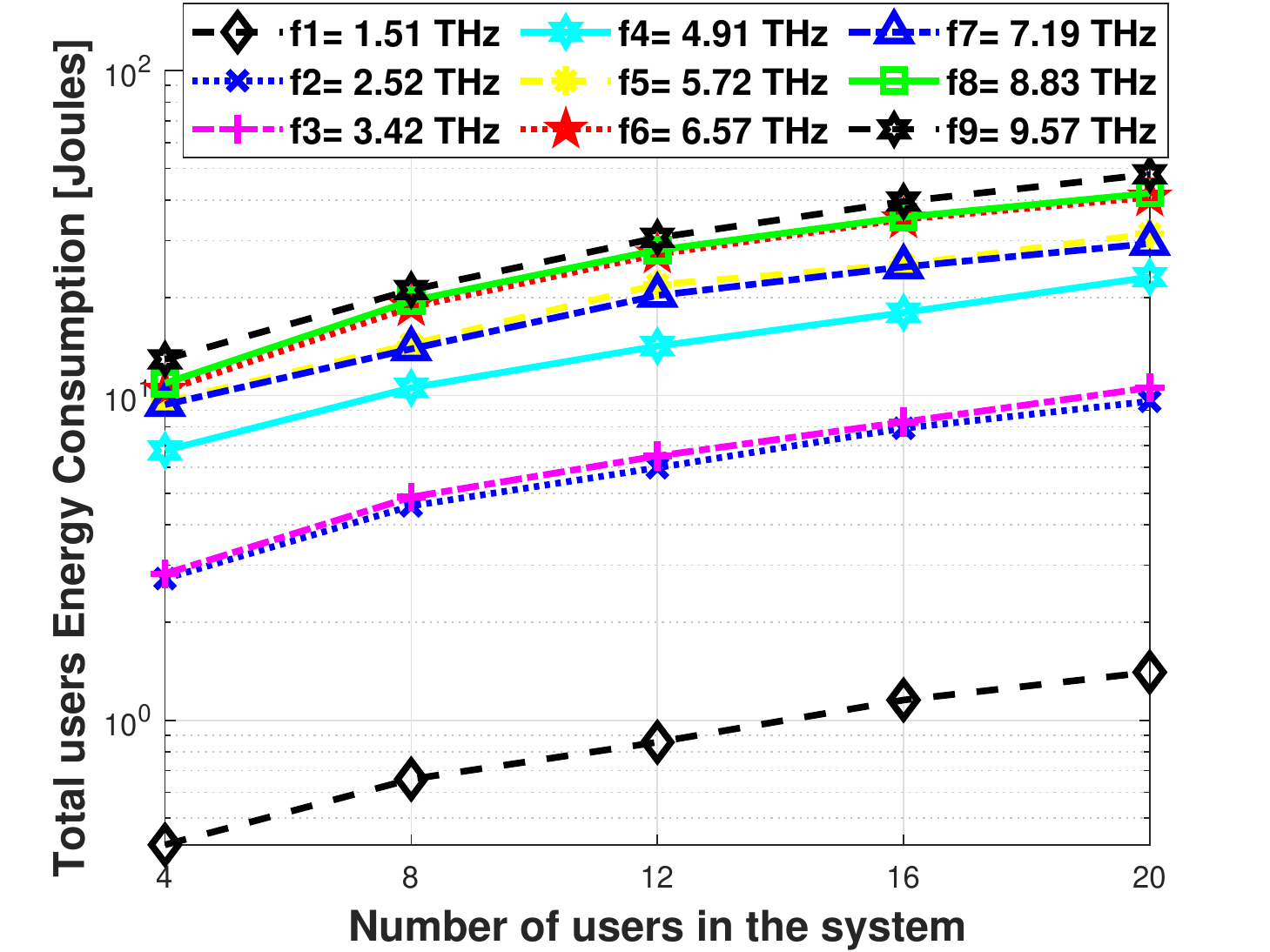}
        \label{fig: Changing_freqs_Energy}
    }
    \caption{The performance of the proposed \ac{THz}-\ac{NOMA} system while changing the \ac{THz} frequency windows for (a) the maximum offloading bits per user, and (b) the total energy consumption of all users.} 
\label{fig: Changing_freqs}
\end{figure}

In~\cite{singh2020analytical}, a sensitivity analysis for the \ac{THz} spectrum that ranges between $0.3-10.0$ \ac{THz} has been conducted. Based on this sensitivity analysis, the authors of~\cite{singh2020analytical} have identified some potential \ac{THz} channel windows that are $1.0$ \ac{THz} apart. In Table~\ref{Table: Potential THz channel windows}, a summary of these channel windows that includes its, (i) center frequencies, (ii) contiguous bandwidths, and (iii) absorption coefficients, is provided. Fig.~\ref{fig: Changing_freqs} provides the maximum offloading bits per user as well as the total energy consumption of all the users for the proposed \ac{THz}-\ac{NOMA} system for different \ac{THz} channel windows that were listed in Table~\ref{Table: Potential THz channel windows}. From Fig.~\ref{fig: Changing_freqs_offloading_bits}, one may observe that the maximum offloading bits per user is changing based on the contiguous bandwidth available in each \ac{THz} channel window. Fig.~\ref{fig: Changing_freqs_Energy} is also simulated while assuming that the required offloading bits for cell-edge users and cell-center users are equal (i.e., $L_{k,j}$=$L_{k,i}$= $1$ Gbits). In this figure, for the frequencies that are at the lower end of the \ac{THz} spectrum (i.e., f$1$-f$4$), we can see that as the center frequency decreases the total users' energy consumption decreases. On the other hand, for the upper end of the \ac{THz} spectrum (i.e., f$5$-f$9$), the effect of increasing the center frequency does not always decrease the total users' energy consumption in the system. For example, if we look at the curves of (f$5$-f$7$), despite that f$7$ has a larger center frequency than f$5$ and f$6$, its total users' energy consumption is lower than f$5$ and f$6$. This is because f$7$ has a smaller absorption loss, and consequently a smaller path loss, compared to f$5$ and f$6$ (See Table~\ref{Table: Potential THz channel windows}).

Fig.~\ref{fig: antennas} illustrates the total energy consumption of all the users while changing the number of antennas in the \ac{BS}. As apparent in the figure, the total energy consumption of all users decreases as the number of antennas in the \ac{BS} increases. This is since, based on~\eqref{eq: offloading data of user i} and~\eqref{eq: offloading data of user j}, as the number of antennas in the \ac{BS} increases the required transmission power for the cell-center users ($p_{k,i}, \ \forall k \in [1, ..., K]$) decrease while the values of $t_{k,j}, t_{k,i}, \textnormal{and} \ p_{k,j}, \ \forall k \in [1, ..., K]$ remain the same. 

\begin{figure}[!t]
\centering
\includegraphics[scale=0.4]{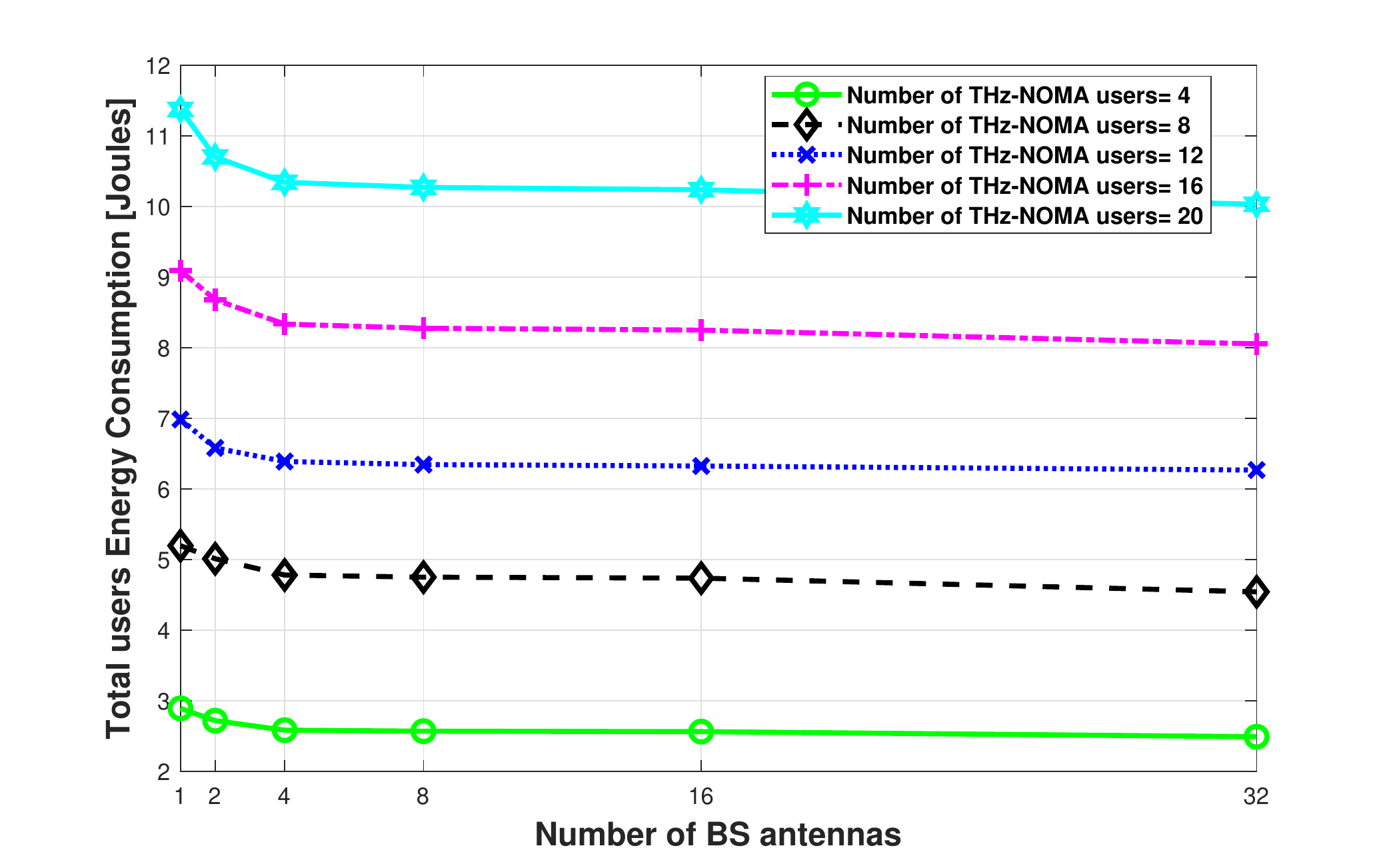}
\caption{The total energy consumption of all users for the proposed \ac{THz}-\ac{NOMA} system while changing the number of antennas at the \ac{BS}.}
\label{fig: antennas}
\end{figure}

\begin{figure}[!t]
    \centering
   \vspace*{-.02in}
    \subfloat[Full-offloading model.]{
        \includegraphics[scale=0.485]{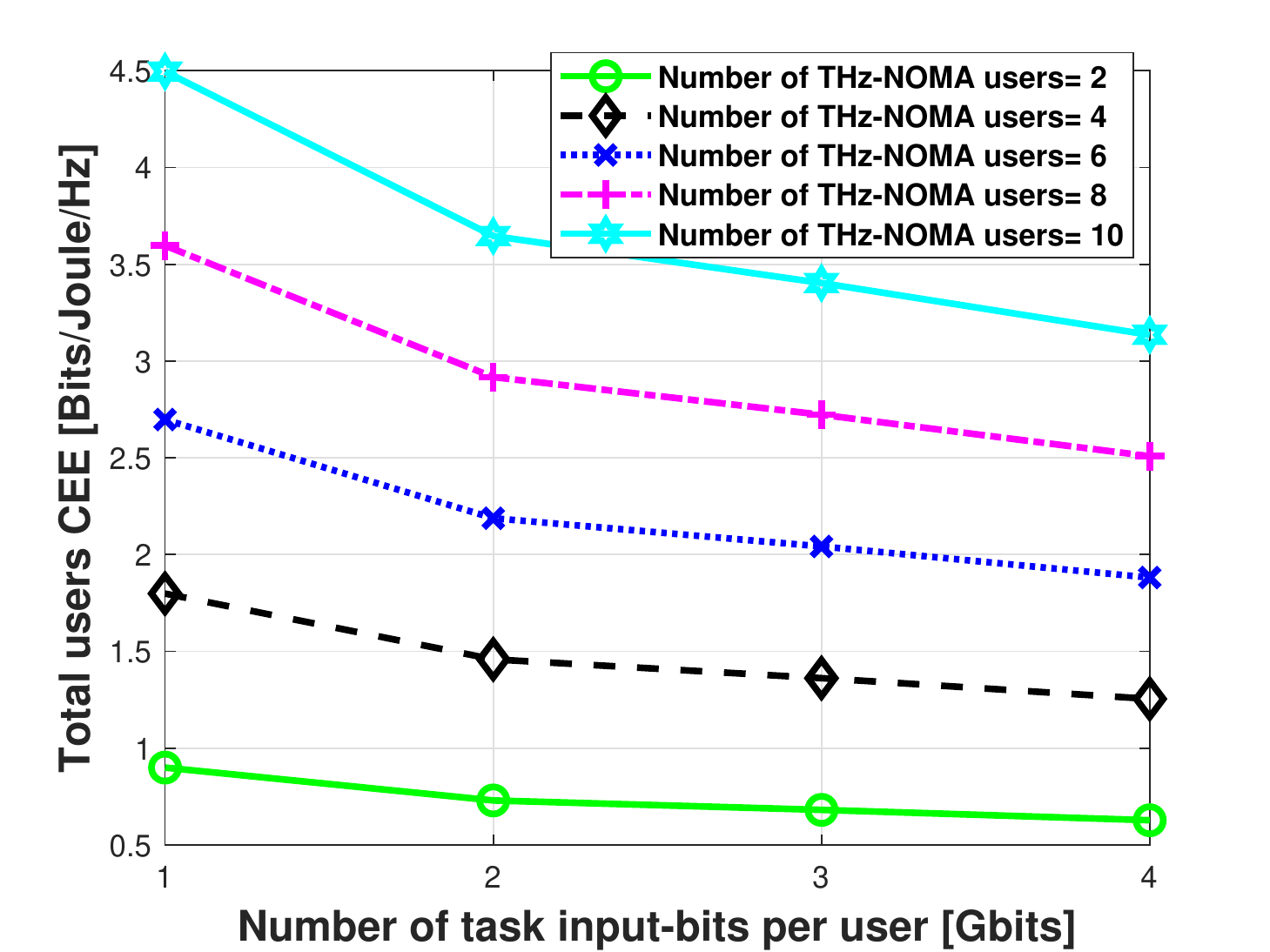}
        \label{fig: CEE Full offloading}
    }
    \hfill
    \subfloat[Partial-offloading model.]{
       \includegraphics[scale=0.485]{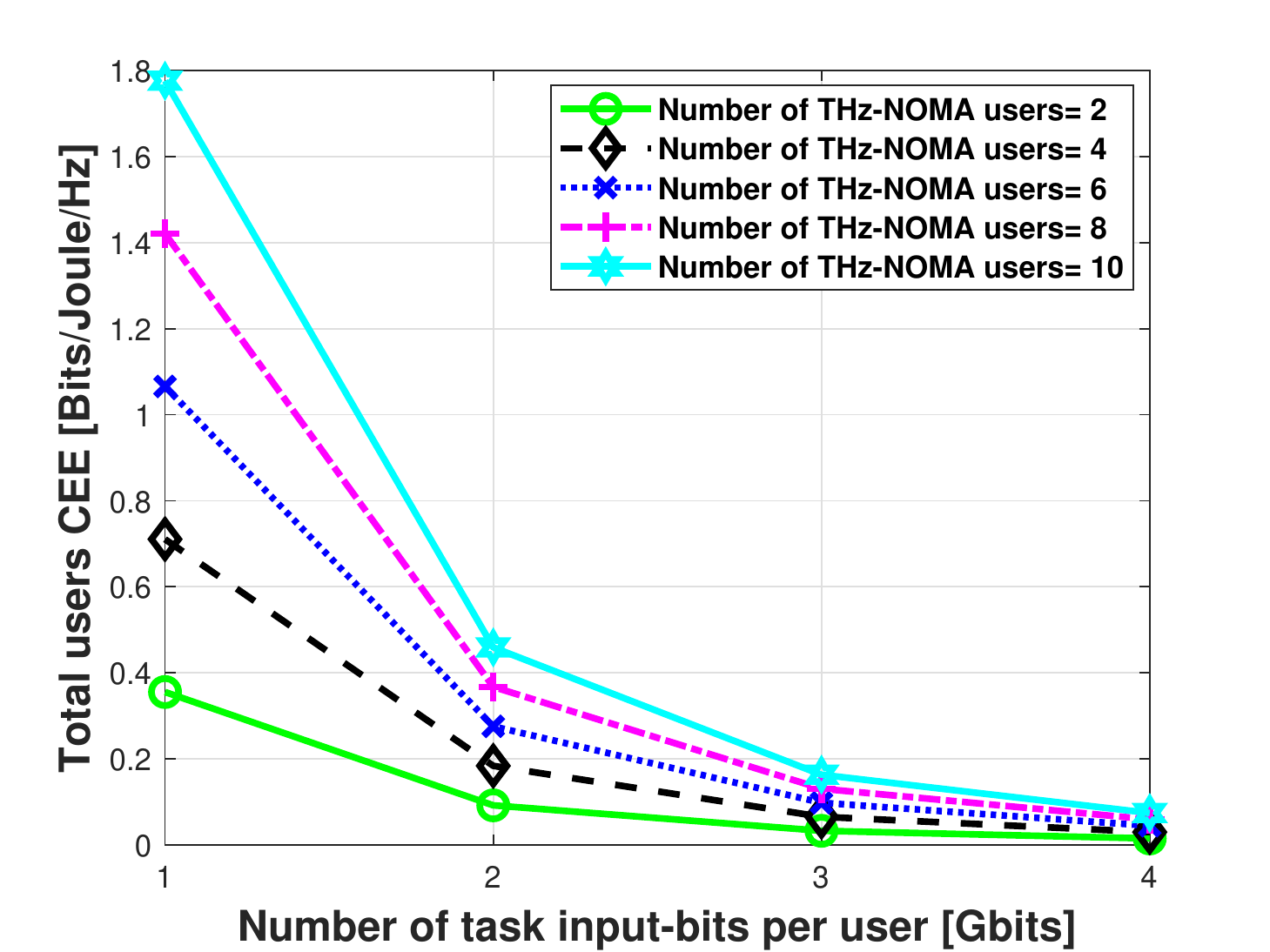}
        \label{fig: CEE Partial offloading}
    }
    \hfill
   \subfloat[Without-offloading.]{
        \includegraphics[scale=0.485]{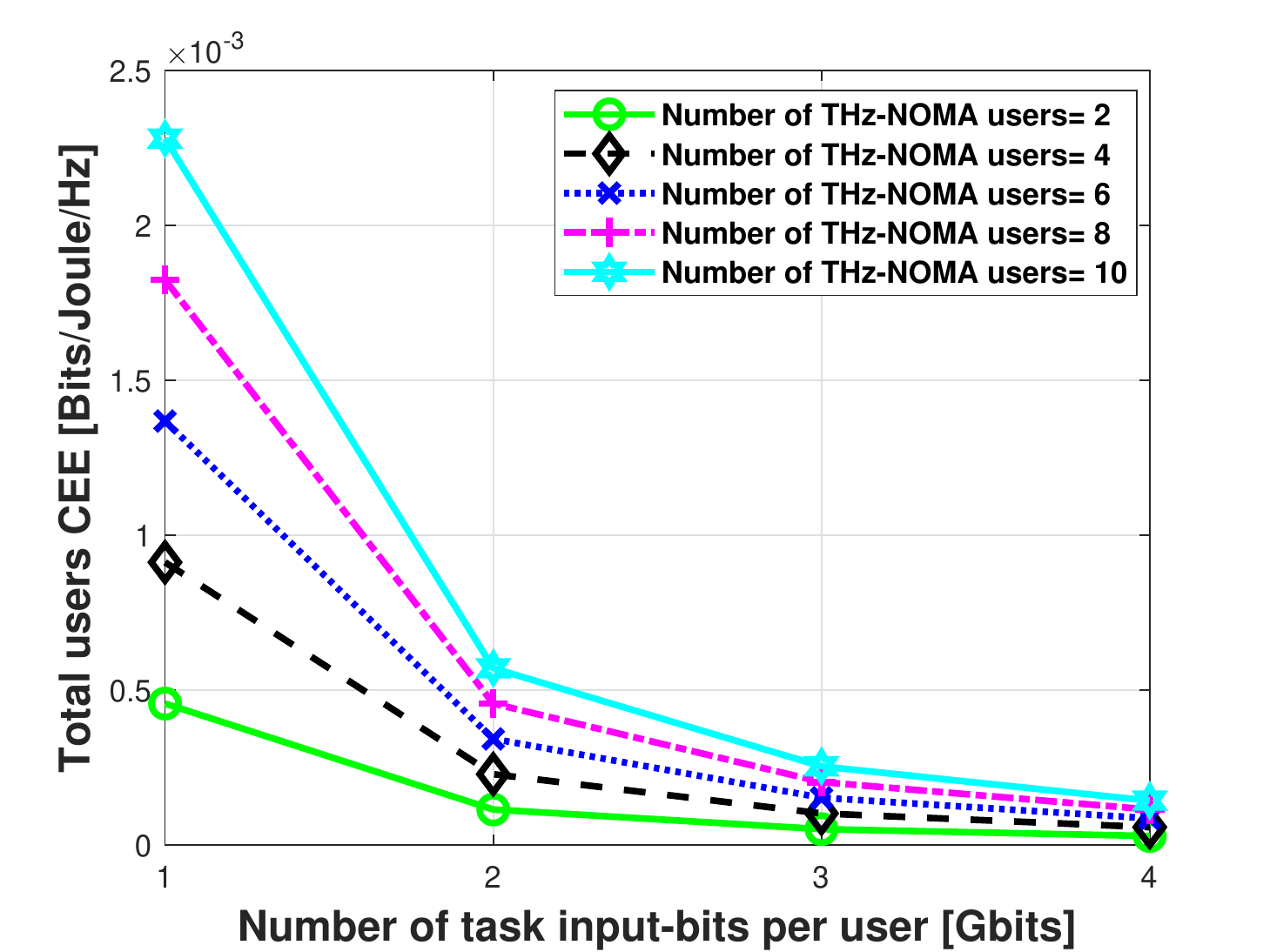}
       \label{fig: CEE Without offloading}
    }
    \caption{The total users' \ac{CEE} for the proposed \ac{THz}-\ac{NOMA} system for different offloading models.
    }
\label{fig: CEE benchmark with offloading models}
\end{figure}

Fig.~\ref{fig: CEE benchmark with offloading models} illustrates the total users' \ac{CEE}, in Bits/Joule/Hz, for the proposed \ac{THz}-\ac{NOMA} system with the full-offloading model compared to its counterpart \ac{THz}-\ac{NOMA} system with the partial-offloading model as well as the system without offloading. For the partial-offloading model, we also assume that $80$\% of the task input-bits for each user are offloaded and $20$\% of the task input-bits are locally computed. The first observation here for the three systems (i.e., Fig.~\ref{fig: CEE Full offloading} - Fig.~\ref{fig: CEE Without offloading}), is that as the number of task input-bits for each user increases, a somehow exponential decrease in the total users' \ac{CEE} in these systems is observed; this is referred to the reason that with increasing the number of task input-bits, each user needs more power to transmit (offload) its bits to the \ac{BS} for remote computation. Specifically, such an increase in the transmission power required for offloading users' task input-bits will increase the denominator of the \ac{CEE} formula (i.e.,~\eqref{eq: CEE formula}) while its numerator is fixed. The second observation here is that our proposed system with the full-offloading model provides better total users' \ac{CEE} compared to the systems with partial offloading and without offloading models. In consistency with the drawn conclusion of Fig.~\ref{fig: benchmark with offloading models}, a large amount of energy is required for the systems with partial offloading and without offloading models to locally compute the large number of task input-bits at each user.

\begin{figure}[!t]
    \centering
    \begin{minipage}{0.485\textwidth}
        \centering
        \includegraphics[scale=0.485]{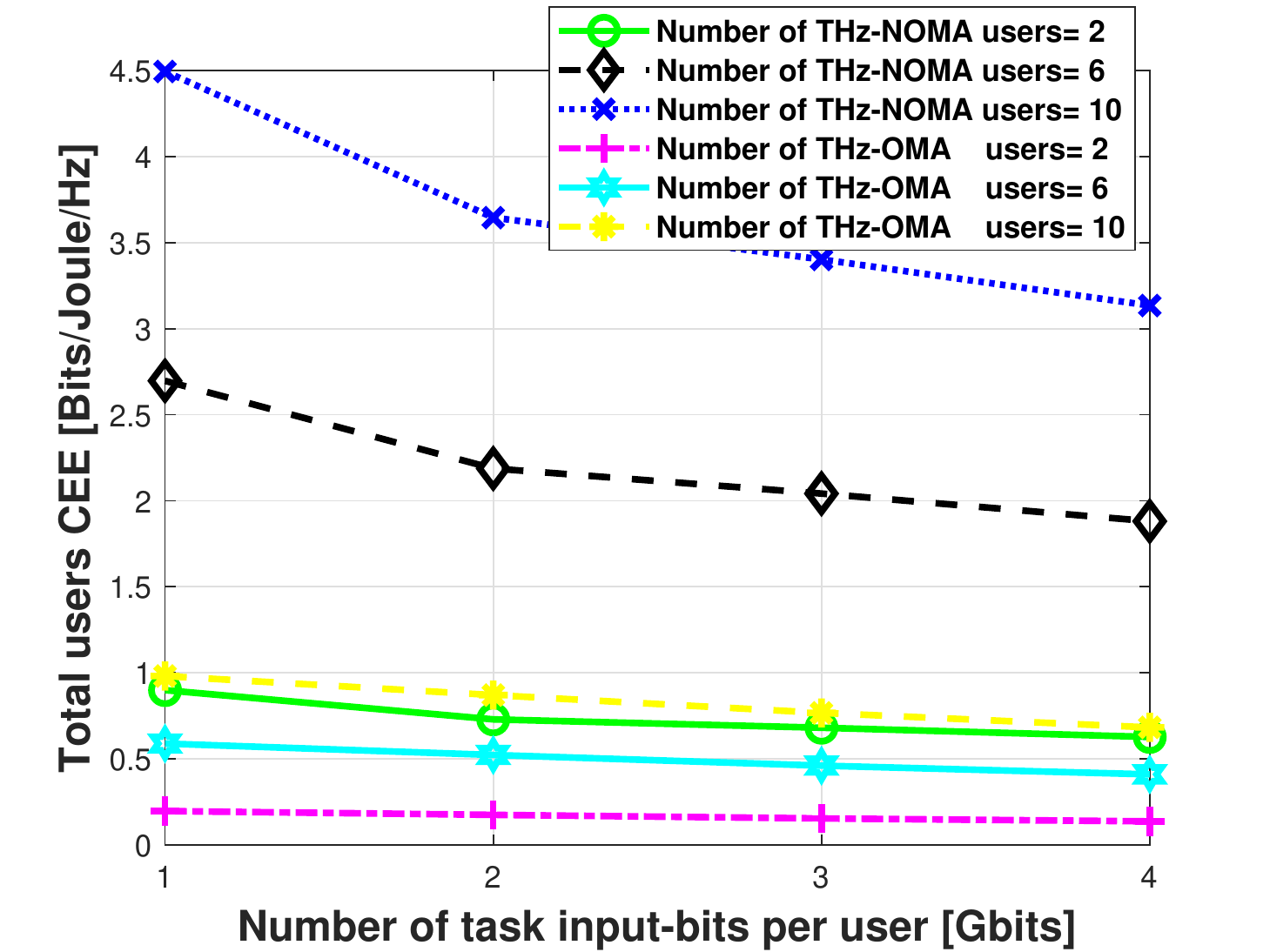} 
        \vspace{-0.5em}
        \caption{The total users' \ac{CEE} for the proposed \ac{THz}-\ac{NOMA} system compared to its \ac{THz}-\ac{OMA} counterpart.}
        \label{fig: CEE_compared_to_OMA}
    \end{minipage}\hfill
    \begin{minipage}{0.485\textwidth}
        \centering
        \includegraphics[scale=0.485]{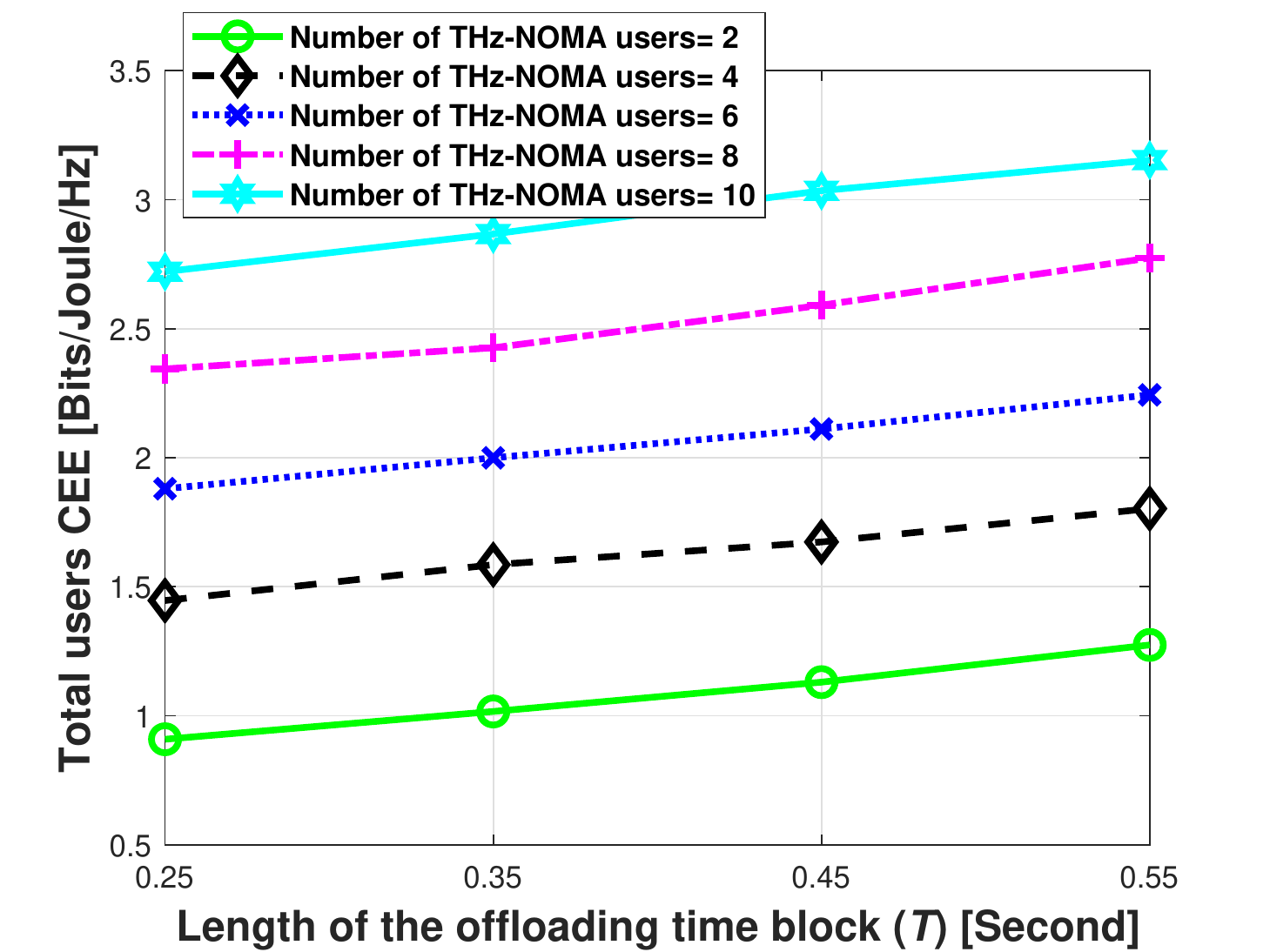} 
        \vspace{-0.5em}
        \caption{The total users' \ac{CEE} for the proposed \ac{THz}-\ac{NOMA} system while changing the length of the offloading time block.}
        \label{fig: CEE_T}
    \end{minipage}\hfill
    \begin{minipage}{0.485\textwidth}
        \centering
        \includegraphics[scale=0.485]{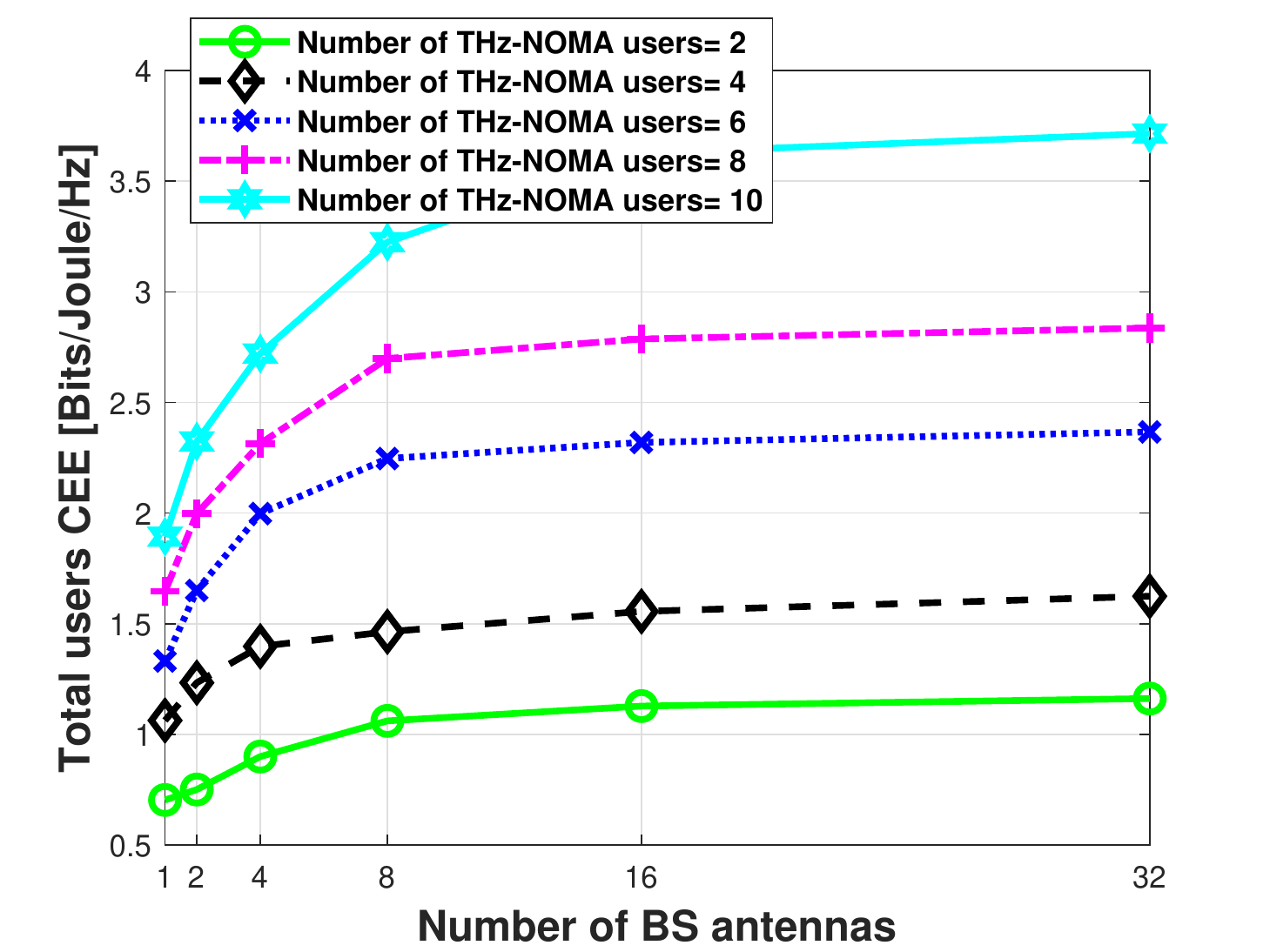} 
        \vspace{-0.5em}
        \caption{The total users' \ac{CEE} for the proposed \ac{THz}-\ac{NOMA} system while changing the number of antennas in the \ac{BS}.}
        \label{fig: CEE_antennas}
    \end{minipage}\hfill
\end{figure}

In Fig.~\ref{fig: CEE_compared_to_OMA}, the total users' \ac{CEE} is provided for the proposed \ac{THz}-\ac{NOMA} system compared to its \ac{THz}-\ac{OMA} counterpart. In this figure, one can see a significant increase in the total users' \ac{CEE} for the proposed \ac{THz}-\ac{NOMA} system compared to its \ac{THz}-\ac{OMA} counterpart. This increase is related to the reason that the proposed \ac{THz}-\ac{NOMA} system consumes less energy, as explained before in the discussions of Fig.~\ref{fig: NOMAvsOMA}, for the data offloading compared to its \ac{THz}-\ac{OMA} counterpart.

Fig.~\ref{fig: CEE_T} presents the total users' \ac{CEE} while changing the length of the offloading time block $T$. One can draw two insights from this figure, firstly, for a fixed offloading time block $T$ (e.g., $0.25$ second), as the number of users in the system increases the total users' \ac{CEE} increases. This is because the total users' \ac{CEE} is obtained from accumulating the \ac{CEE} of each \ac{NOMA} user-pair (e.g., with the number of \ac{THz}-\ac{NOMA} users equals $10$, the total number of the \ac{NOMA} user pairs is $5$). Secondly, the total users' \ac{CEE} increases monotonically as the value of the offloading time block $T$ increases. In~\eqref{eq: CEE formula}, despite the fact that the time block $T$ appears in both the numerator as well as the denominator (Recall that $t_{k,j}+t_{k,i}= T/K$) but while $T$ increases the required offloading power values, $p_{k,j}$ and $p_{k,i}$, decreases and hence the total users' \ac{CEE} increases. A similar trend can be found in Fig. 4 of~\cite{9345931}.

In Fig.~\ref{fig: CEE_antennas}, the total users' \ac{CEE} is provided while changing the number of antennas in the \ac{BS}. Clearly, increasing the number of receive antennas in the \ac{BS} increases the beam gain, $c_{k,i} = |\mathbf{h}_{k,i}^{H} \mathbf{w}_{k}|^2$,  in~\eqref{eq: CEE formula} and this improves the total users' \ac{CEE} performance.

\section{Conclusions}
\label{Section: Conclusions}
The growing interest in \ac{THz} communications as an enabler of high-data-rate low-latency applications in \ac{B5G} networks has spurred the research on the integration of various promising communication schemes. In this paper, a cooperative scheme for the \ac{NOMA}-assisted \ac{THz}-\ac{SIMO} \ac{MEC} system is proposed while reinforcing its performance by solving two energy-related optimization problems. This cooperative scheme allows the uplink transmission of offloaded data from the far cell-edge users to the \ac{BS} and comprises four stages: (i) a user pairing stage, (ii) a receive beamforming stage, (iii) a time allocation stage, and (iv) a \ac{NOMA} transmission power allocation stage. The proposed scheme is benchmarked with several baseline schemes and is shown to outperform them in terms of the successful handing of several Gbits/user offloaded data, with a reasonable total users' energy consumption within a predefined latency constraint. Then, the obtained results of the proposed scheme quantified the total users' energy consumption for various system parameters, such as (i) the number of served users, (ii) the number of task input-bits, (iii) the \ac{NOMA} power fractions allocated to transmit the users offloaded data bits, (iv) the variable input-bits for cell-edge and cell-center users, (v) the different potential \ac{THz} channel windows, and (vi) the number of \ac{BS} antennas. After that, additional results are presented to quantify the total users' \ac{CEE} for various system parameters as well. One of the possible interesting directions is to extend the proposed scheme to indoor multi-cell scenarios while utilizing some coordinated multi-point (CoMP) schemes to manage the inter-cell interference (ICI) at the cell-edge users.

\appendix[Proof of Proposition~\ref{proposition-1}]\label{appendix-1}

Here, the optimal solution of (P1) is provided through the Lagrange duality method. Firstly, (P1) needs to be reformulated (i) by replacing the optimization variables $p_{k,j}$ and $p_{k,i}$ with the variable vector $\mathbf{E}=\{E_{k,j}^{\textnormal{off}}=t_{k,j}p_{k,j},E_{k,i}^{\textnormal{off}}=t_{k,i}p_{k,i}\}$, and (ii) by replacing $t_{k,j}$ and $t_{k,i}$ with the variable vector $\mathbf{t}=\{t_{k,j},t_{k,i}\}$, and (iii) by substituting~\eqref{eq: offloading data of user i} and~\eqref{eq: offloading data of user j} into (P1). The resultant reformulated problem can be expressed as
\begin{alignat}{3}
\textnormal{(P2): }&\underset{\{\mathbf{t},\mathbf{E}\}}{\text{min}} & \ &E_{k,j}^{\textnormal{off}}+E_{k,i}^{\textnormal{off}}, \ k \in [1, ..., K] \label{eq:objective2}\\
& \ \text{s.t.} &  &  t_{k,j} R_{k,i}^j \geq L_{k,j},  \ k \in [1, ..., K] \label{eq:constraint-b2}\\
&  &  & t_{k,i} R_{k,\textnormal{BS}}^{j} \geq L_{k,j},  \ k \in [1, ..., K] \label{eq:constraint-c2} \\
&  &  & t_{k,i} R_{k,\textnormal{BS}}^{i} \geq L_{k,i},  \ k \in [1, ..., K] \label{eq:constraint-d2} \\
&  &  & t_{k,j}+t_{k,i} \leq T/K,  \ k \in [1, ..., K] \label{eq:constraint-e2}
\end{alignat}
\noindent where (P2) is jointly convex with respect to $\mathbf{E}$ and $\mathbf{t}$. The convexity of (P2) can be proved in a similar way to the proof in the Appendix of~\cite{9348649}. Secondly, as (P2) is convex, the Lagrangian dual function (P3) can be expressed, while assigning $\boldsymbol{\lambda}=\{\lambda_1,\lambda_2,\lambda_3,\lambda_4\}$ to be the dual factor variables, as 
\begin{alignat}{3} \label{eq: the Lagrangian dual function}
\begin{split}
\textnormal{(P3):}& \underset{\substack{\{t_{k,j},t_{k,i},E_{k,j}^{\textnormal{off}},E_{k,i}^{\textnormal{off}}\}}}{\text{min}} \ \mathbb{L}(t_{k,j},t_{k,i},E_{k,j}^{\textnormal{off}},E_{k,i}^{\textnormal{off}}), \\ 
&
 \textnormal{where} \  \mathbb{L}(t_{k,j},t_{k,i},E_{k,j}^{\textnormal{off}},E_{k,i}^{\textnormal{off}})  = E_{k,i}^{\textnormal{off}}+E_{k,j}^{\textnormal{off}} \\ 
& +\lambda_1(L_{k,j} - t_{k,j} R_{k,i}^j) \\ 
&+ \lambda_2 (L_{k,j} - t_{k,i} R_{k,\textnormal{BS}}^{j})\\ 
& + \lambda_3 (L_{k,i} - t_{k,i} R_{k,\textnormal{BS}}^{i}) \\ & + \lambda_4 (t_{k,j}+t_{k,i}-T/K).
\end{split}
\end{alignat}
\indent Thirdly, as (P2) is a convex problem and satisfies Slater's conditions, then (P2) and (P3) have strong duality. By applying the \ac{KKT} conditions and solving the equations~\eqref{eq:partial diff. for t_j} to~\eqref{eq:partial diff. for lambda4}, the optimal solution can be obtained~\cite{boyd2004convex}. 

\begin{strip}

\begin{bottomborder} \end{bottomborder}

\begin{flalign}
\frac{\partial\mathbb{L}}{\partial t_{k,j}}&=\lambda_4 -\lambda_1 W \log_2 (1 + \frac{E_{k,j} |h_{k,ji}|^2}{\sigma^2 t_{k,j}}) +\frac{\lambda_1 W E_{k,j} |h_{k,ji}|^2}{(\sigma^2 t_{k,j}+E_{k,j} |h_{k,ji}|^2)\textnormal{ln2}} =0, && \label{eq:partial diff. for t_j}
\end{flalign}
\begin{flalign}
\frac{\partial\mathbb{L}}{\partial t_{k,i}}&=\lambda_4 - \lambda_2 W \log_2 (1 + \frac{\beta_{k,j}E_{k,i} |\mathbf{h}_{k,i}^{H} \mathbf{w}_{k}|^2}{\sigma^2 t_{k,i}}) + \frac{\lambda_2 W \beta_{k,j} E_{k,i} |\mathbf{h}_{k,i}^{H} \mathbf{w}_{k}|^2}{(\sigma^2 t_{k,i} + \beta_{k,j} E_{k,i} |\mathbf{h}_{k,i}^{H} \mathbf{w}_{k}|^2 )\textnormal{ln2}} && \nonumber \\
& -\lambda_3 W \log_2 (1 + \frac{\beta_{k,i}E_{k,i} |\mathbf{h}_{k,i}^{H} \mathbf{w}_{k} |^2}{\beta_{k,j}E_{k,i} |\mathbf{h}_{k,i}^{H} \mathbf{w}_{k}|^2 + \sigma^2}) && \nonumber \\
& + \frac{\lambda_3 t_{k,i} W \beta_{k,i} E_{k,i} |\mathbf{h}_{k,i}^{H} \mathbf{w}_{k}|^2 \sigma^2}{(\beta_{k,j} E_{k,i} |\mathbf{h}_{k,i}^{H} \mathbf{w}_{k}|^2 + \sigma^2 t_{k,i} + \beta_{k,i} E_{k,i} |\mathbf{h}_{k,i}^{H} \mathbf{w}_{k}|^2)(\beta_{k,j} E_{k,i} |\mathbf{h}_{k,i}^{H} \mathbf{w}_{k}|^2 + \sigma^2 t_{k,i})\textnormal{ln2}} =0, &&\label{eq:partial diff. for t_i}
\end{flalign}
\begin{flalign}
\frac{\partial\mathbb{L}}{\partial E_{k,j}}&=1-\frac{\lambda_1 t_{k,j} W |h_{k,ji}|^2}{(E_{k,j} |h_{k,ji}|^2 + \sigma^2 t_{k,j}) \textnormal{ln2}}=0, &&\label{eq:partial diff. for E_j}
\end{flalign}
\begin{flalign}
\frac{\partial\mathbb{L}}{\partial E_{k,i}}&=1-\frac{\lambda_2 t_{k,i} W \beta_{k,j} |\mathbf{h}_{k,i}^{H} \mathbf{w}_{k}|^2}{(\sigma^2 t_{k,i} + \beta_{k,j} E_{k,i} |\mathbf{h}_{k,i}^{H} \mathbf{w}_{k}|^2)\textnormal{ln2}}&& \nonumber \\
&- \frac{\lambda_3 t_{k,i}^2 W \beta_{k,i} \sigma^2 |\mathbf{h}_{k,i}^{H} \mathbf{w}_{k}|^2 \sigma^2}{(\beta_{k,j} E_{k,i} |\mathbf{h}_{k,i}^{H} \mathbf{w}_{k}|^2 + \sigma^2 t_{k,i}+ \beta_{k,i} E_{k,i} |\mathbf{h}_{k,i}^{H} \mathbf{w}_{k}|^2)(\beta_{k,j} E_{k,i} |\mathbf{h}_{k,i}^{H} \mathbf{w}_{k}|^2 + \sigma^2 t_{k,i})\textnormal{ln2}} =0, &&\label{eq:partial diff. for E_i}
\end{flalign}
\begin{flalign}
\frac{\partial\mathbb{L}}{\partial \lambda_1}&=L_{k,j} - t_{k,j}W \log_2 (1 + \frac{E_{k,j}^{\textnormal{off}} |h_{k,ji}|^2}{\sigma^2 t_{k,j}})=0, &&\label{eq:partial diff. for lambda1}
\end{flalign}
\begin{flalign}
\frac{\partial\mathbb{L}}{\partial \lambda_2}&=L_{k,j} - t_{k,i}W \log_2 (1 + \frac{\beta_{k,j}E_{k,i}^{\textnormal{off}} |\mathbf{h}_{k,i}^{H} \mathbf{w}_{k}|^2}{\sigma^2 t_{k,i}})=0, &&\label{eq:partial diff. for lambda2}
\end{flalign}
\begin{flalign}
\frac{\partial\mathbb{L}}{\partial \lambda_3}&=L_{k,i} - t_{k,i} W \log_2 (1 + \frac{\frac{\beta_{k,i}E_{k,i}^{\textnormal{off}} |\mathbf{h}_{k,i}^{H} \mathbf{w}_{k} |^2}{t_{k,i}}}{\frac{\beta_{k,j}E_{k,i}^{\textnormal{off}} |\mathbf{h}_{k,i}^{H} \mathbf{w}_{k}|^2}{t_{k,i}}+ \sigma^2})=0, &&\label{eq:partial diff. for lambda3}
\end{flalign}
\begin{flalign}
 \frac{\partial\mathbb{L}}{\partial \lambda_4}&=t_{k,j}+t_{k,i}-T/K=0.&&\label{eq:partial diff. for lambda4}
\end{flalign}

\begin{bottomborder} \end{bottomborder}

\end{strip}

Specifically, the optimal solution of $t_{k,i}^*$ can be found by approximating~\eqref{eq:partial diff. for lambda3} in the high \ac{SNR} region, $\frac{\beta_{k,i}E_{k,i}^{\textnormal{off}} |\mathbf{h}_{k,i}^{H} \mathbf{w}_{k} |^2}{t_{k,i}} >> \sigma^2$, and solving the following approximated expression
\begin{equation}
\frac{\partial\mathbb{L}}{\partial \lambda_3} \approx L_{k,i} - t_{k,i} W \log_2 (1 + \frac{\beta_{k,i}}{\beta_{k,j}})=0, \label{eq: approx. partial diff. for lambda3}
\end{equation}
\noindent after getting $t_{k,i}^*$, the remaining optimal values for $t_{k,j}^*,E_{k,j}^*,E_{k,i}^*$ can be obtained by solving~\eqref{eq:partial diff. for lambda4}, \eqref{eq:partial diff. for lambda1}, and~\eqref{eq:partial diff. for lambda2}, respectively. Subsequently, one can get $p_{k,j}^*,p_{k,i}^*$ by substituting $p_{k,j}^*=E_{k,j}^*/t_{k,j}^*$ and $p_{k,i}^*=E_{k,i}^*/t_{k,i}^*$. With this, the proof is
completed. 

\bibliographystyle{IEEEtran}
\bibliography{main}


\begin{IEEEbiography}[{\includegraphics[width=1in,height=1.25in,clip,keepaspectratio]{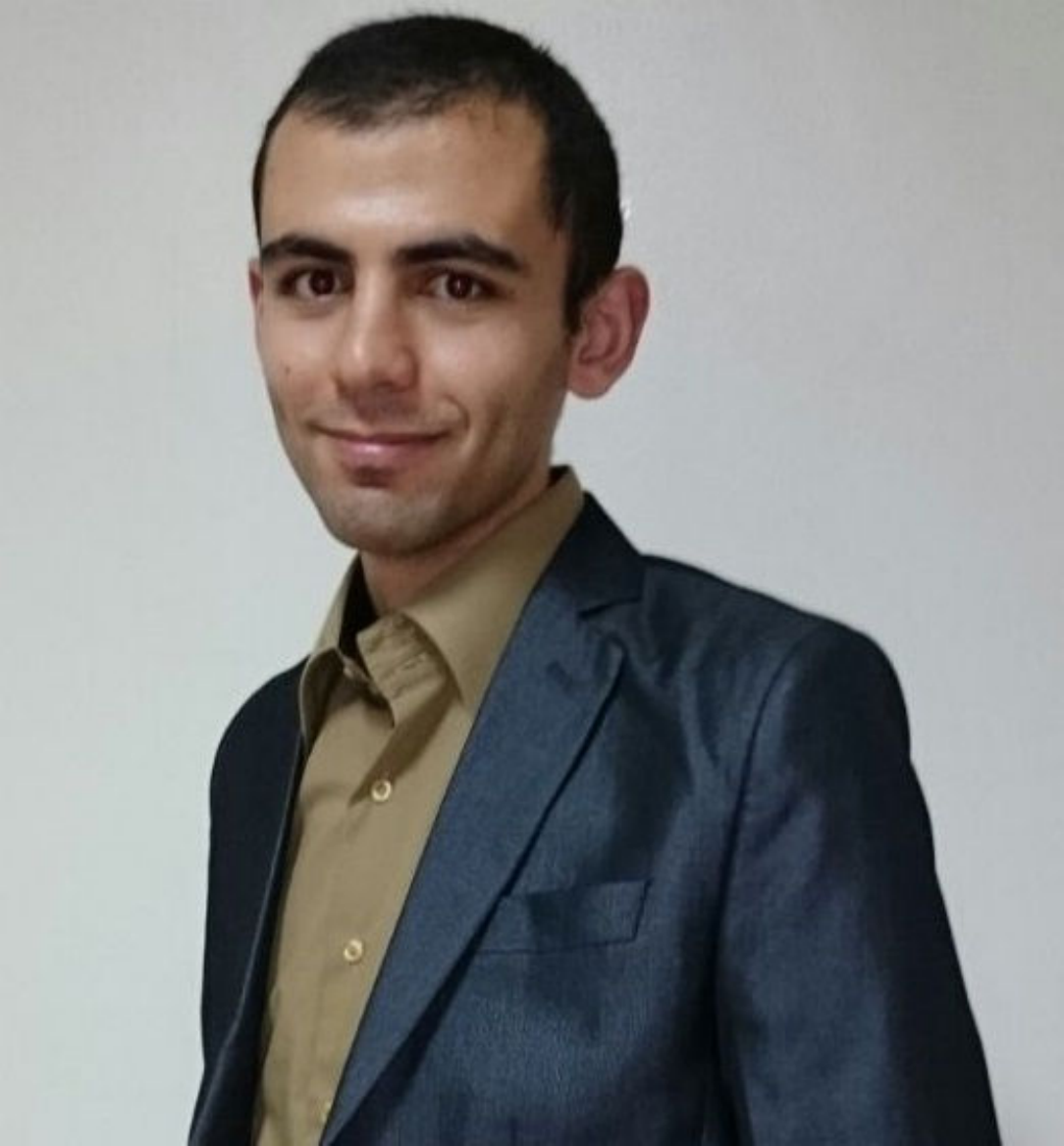}}]{\textbf{Omar Maraqa}}
has received his B.S. degree in Electrical Engineering from Palestine Polytechnic University, Palestine, in 2011, his M.S. degree in Computer Engineering from King Fahd University of Petroleum \& Minerals (KFUPM), Dhahran, Saudi Arabia, in 2016, and his Ph.D. degree in Electrical Engineering at KFUPM, Dhahran, Saudi Arabia, in 2022. He is currently a Postdoctoral Research Fellow with the Department of Electrical and Computer Engineering, at McMaster University, Canada. His research interests include performance analysis and optimization of wireless communications systems.
\end{IEEEbiography}


\begin{IEEEbiography}[{\includegraphics[width=1in,height=1.25in,clip,keepaspectratio]{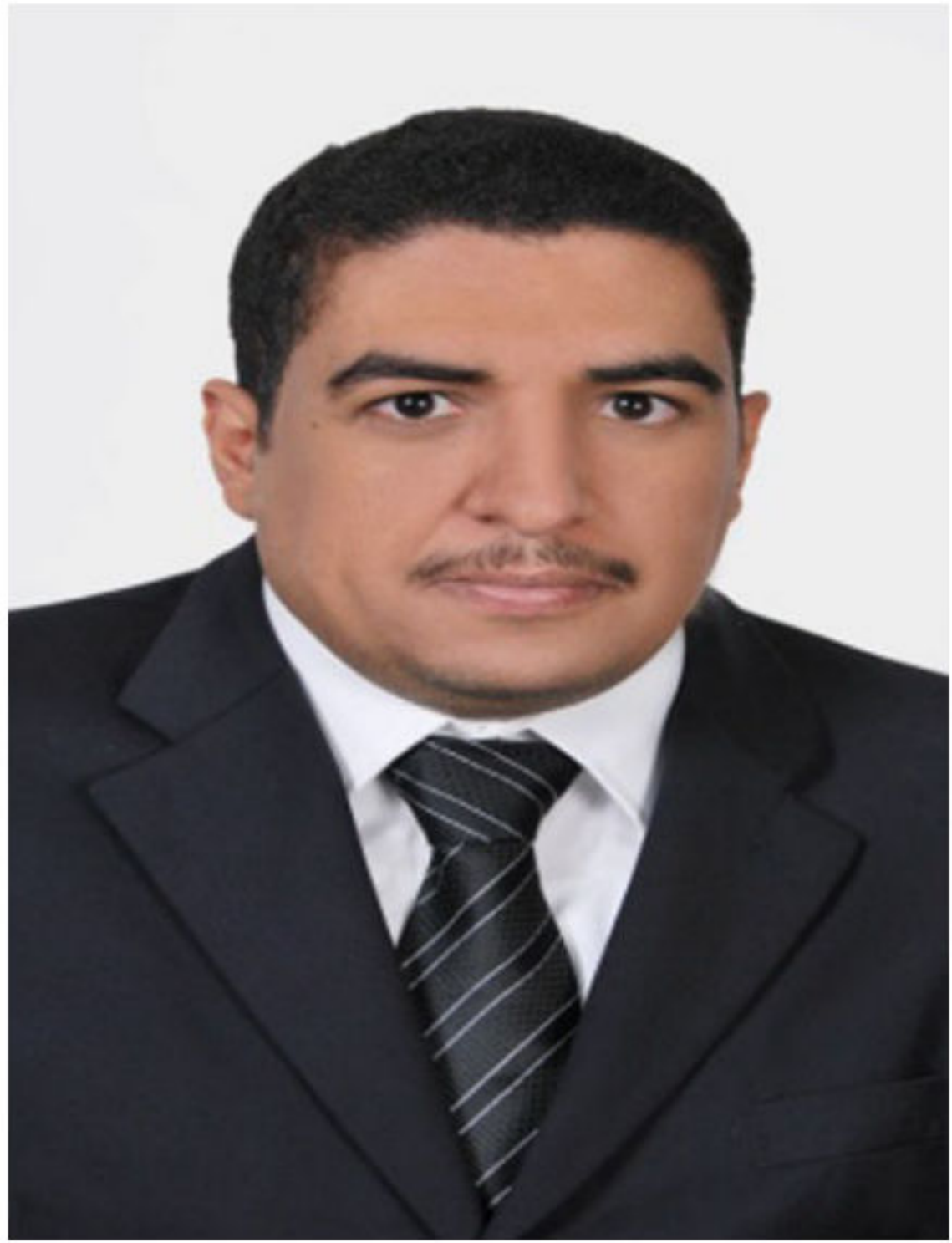}}]{\textbf{Saad Al-Ahmadi}}
has received his M.Sc. in Electrical Engineering from King Fahd University of Petroleum \& Minerals (KFUPM), Dhahran, Saudi Arabia, in 2002 and his Ph.D. in Electrical and Computer Engineering from Ottawa-Carleton Institute for ECE (OCIECE), Ottawa, Canada, in 2010. He is currently with the Department of Electrical Engineering at KFUPM as an Associate Professor. His current research interests include channel characterization, design, and performance analysis of wireless communications systems and networks.
\end{IEEEbiography}


\begin{IEEEbiography}[{\includegraphics[width=1in,height=1.25in,clip,keepaspectratio]{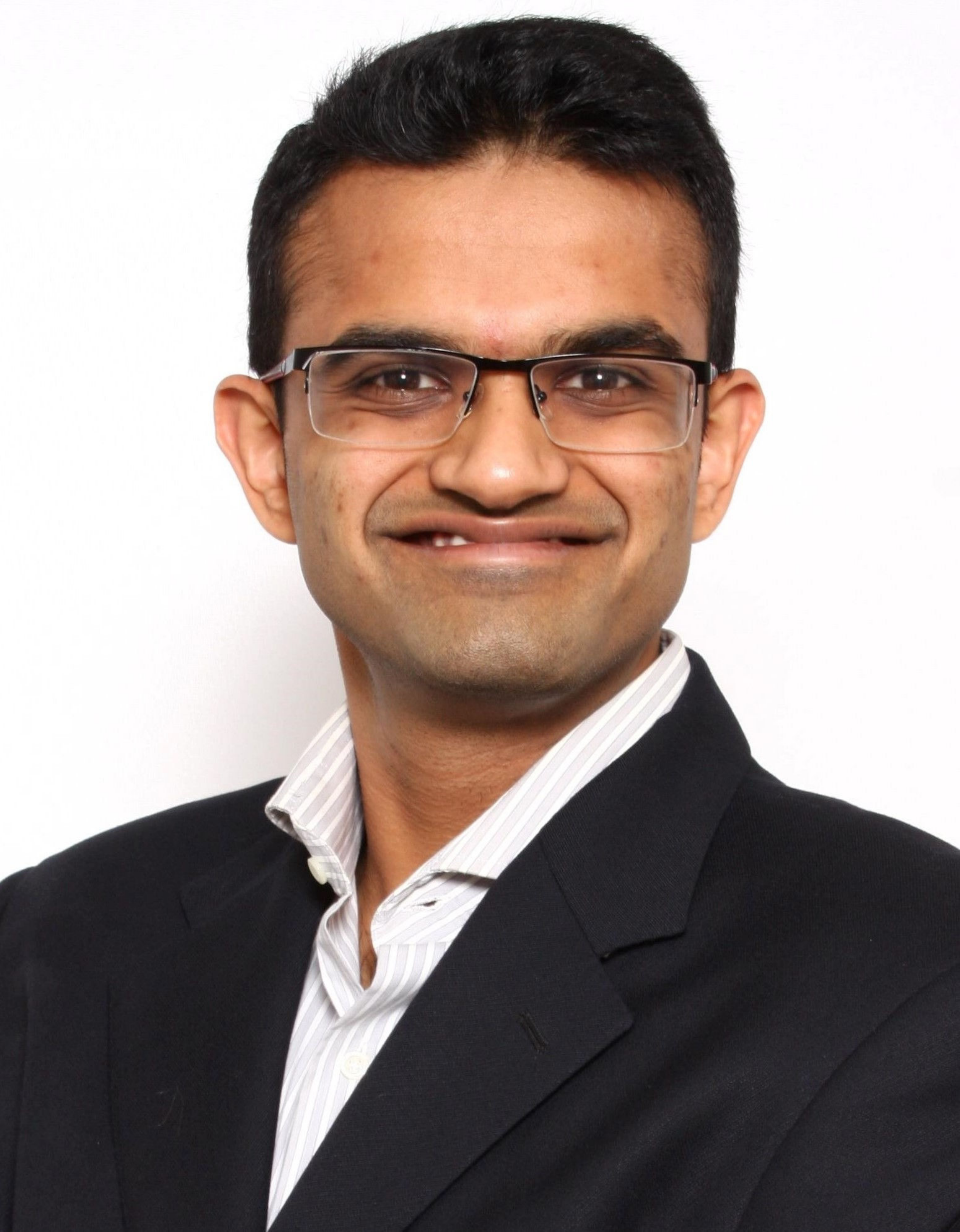}}]{Aditya S. Rajasekaran} (M'18) received the B.Eng (with High Distinction) and M.Eng degrees in Systems and Computer Engineering from Carleton University, Ottawa, ON, Canada, in 2014 and 2017, respectively. He is currently pursuing his Ph.D. degree, also in Systems and Computer Engineering at Carleton University. His research interests include wireless technology solutions for 5G and beyond cellular networks, including non-orthogonal multiple access solutions. He is also with Ericsson Canada, where he has been working as a software developer since 2014. He is currently involved in the physical layer development work for Ericsson's 5G New Radio (NR) solutions.
\end{IEEEbiography}


\begin{IEEEbiography}[{\includegraphics[width=1in,height=1.25in,clip,keepaspectratio]{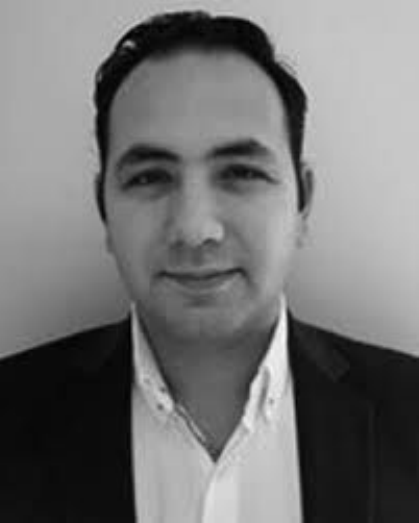}}]{\textbf{Hamza U. Sokun}}
received the B.Sc. degree in electronics engineering from Kadir Has University, Istanbul, Turkey, in 2010, the M.Sc. degree in electrical engineering from Ozyegin University, Istanbul, Turkey, in 2012, and the Ph.D. degree in electrical engineering from Carleton University, Ottawa, Canada, in 2017. He is currently working as a 5G system developer at Ericsson Canada, Ottawa, Canada. His research interests include signal processing and wireless communications.
\end{IEEEbiography}


\begin{IEEEbiography}[{\includegraphics[width=1in,height=1.25in,clip,keepaspectratio]{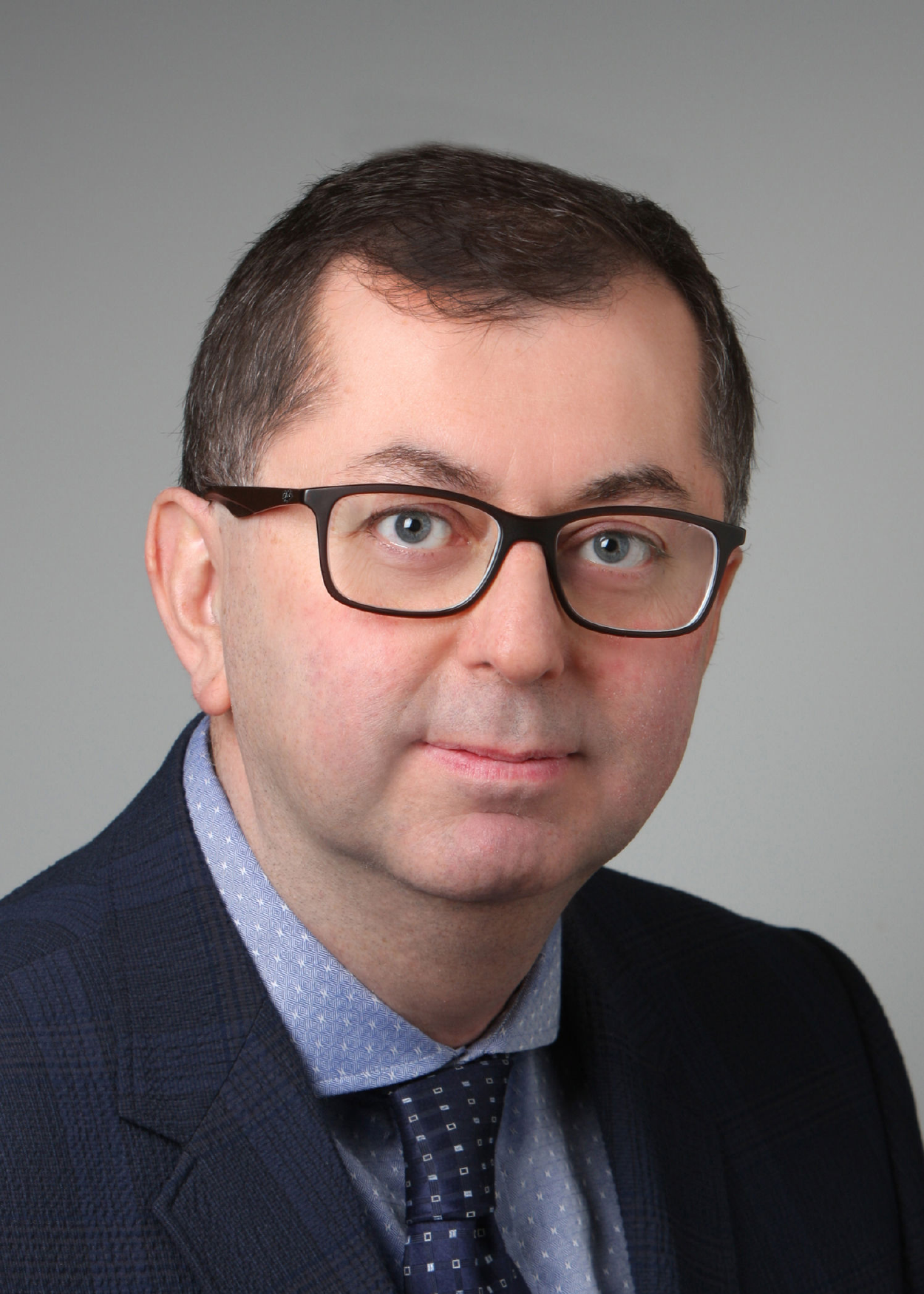}}]{Halim Yanikomeroglu} (F'17) is a full professor in the Department of Systems and Computer Engineering at Carleton University, Ottawa, Canada. His research covers many aspects of communications technologies with an emphasis on wireless networks. His collaborative research with the industry has resulted in 39 granted patents (plus more than a dozen applied). He is a Fellow of the Engineering Institute of Canada (EIC) and the Canadian Academy of Engineering, and he is a Distinguished Lecturer for IEEE Communications Society and IEEE Vehicular Technology Society. He has also supervised 26 Ph.D. students (all completed with theses).
\end{IEEEbiography}


\begin{IEEEbiography}[{\includegraphics[width=1in,height=1.25in,clip,keepaspectratio]{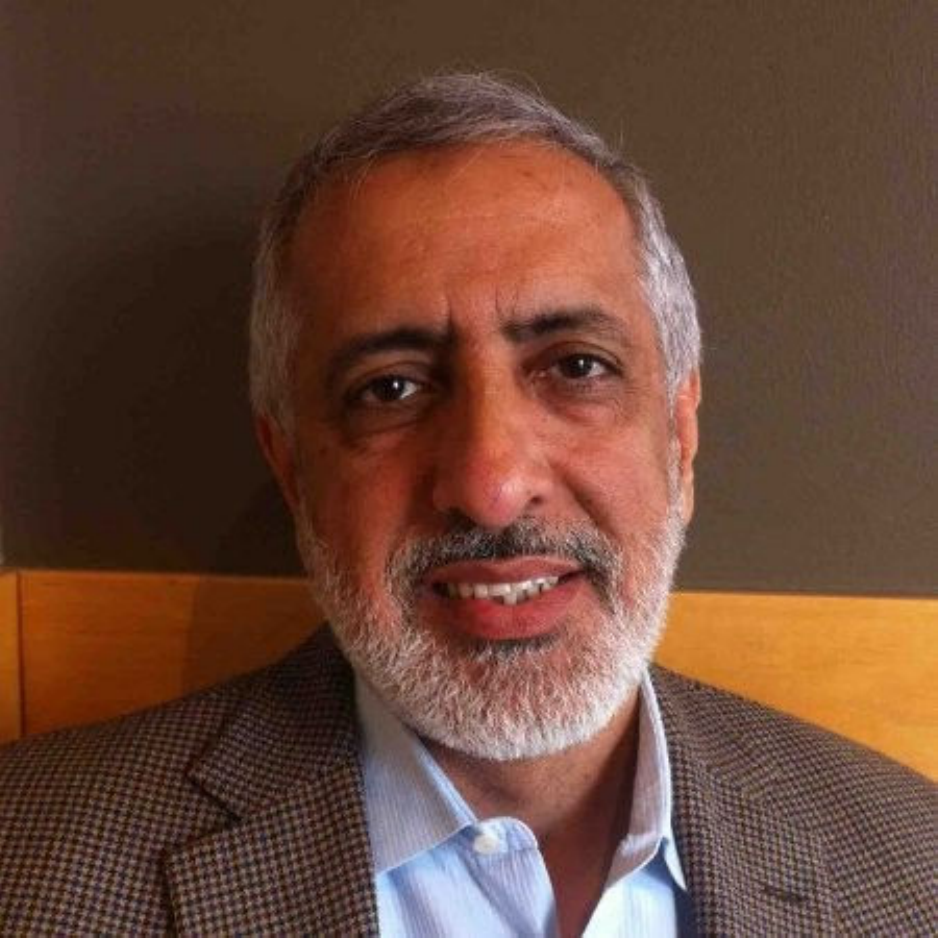}}]{\textbf{Sadiq M. Sait}}
(SM'02) received a bachelor's degree in electronics engineering from Bangalore University in 1981, and a master's and Ph.D. degrees in electrical engineering from the King Fahd University of Petroleum \& Minerals (KFUPM) in 1983 and 1987, respectively. He is currently a Professor of Computer Engineering and the Director of the Center for Communications and IT Research, Research Institute, KFUPM. He has authored over 200 research papers, contributed chapters to technical books, and lectured in over 25 countries. He is also the Principle Author of two books. He received the Best Electronic Engineer Award from the Indian Institute of Electrical Engineers, Bengaluru, in 1981.
\end{IEEEbiography}


\end{document}